\newtheorem{theorem}{Theorem}
\newtheorem{lemma}{Lemma}[section]
\newtheorem{corollary}[lemma]{Corollary}
\newcommand{\ang}[1]{\langle #1\rangle}
\newcommand{\inv}[1]{\frac{1}{#1}}
\renewcommand{\ang}[1]{\langle #1\rangle}
\newcommand{\inner}[2]{\ang{#1,#2}} % inner product
\newcommand{\RE}{\mathbb{R}}            % real space
\newcommand{\eps}{\varepsilon}          % my preferred epsilon
\newcommand{\ST}{\,:\,}                 % { x \ST y }
\newcommand{\CC}{\mathscr{C}}
\newcommand{\MM}{\mathscr{M}}
\newcommand{\bd}{\partial}
\newcommand{\etal}{\textit{et al.}}
\newcommand{\SP}{\kern+1pt}
\newcommand{\stdpolar}[1]{{#1}^{\circ}} % standard polar
\DeclareMathOperator{\diam}{diam}
\DeclareMathOperator{\vol}{vol}
\DeclareMathOperator{\area}{area}
\DeclareMathOperator{\width}{wid}
\DeclareMathOperator{\ray}{ray}
\DeclareMathOperator{\dist}{dist}
\DeclareMathOperator{\base}{base}
\DeclareMathOperator{\shadow}{shadow}
\DeclareMathOperator{\interior}{int}
\newcommand{\pcap}[2]{\textup{cap}_{#1}(#2)}
\begin{document}

%=======================================================================
% Title
%=======================================================================

\title{Optimal Volume-Sensitive Bounds for Polytope Approximation}

\author{
	Sunil Arya\thanks{Research supported by the Research Grants Council of Hong Kong, China under projects number 16213219 and 16214721.}\\
		Department of Computer Science and Engineering \\
		The Hong Kong University of Science and Technology, Hong Kong\\
		arya@cse.ust.hk \\
		\and
	David M. Mount\\
		Department of Computer Science and 
		Institute for Advanced Computer Studies \\
		University of Maryland, College Park, Maryland \\
		mount@umd.edu \\
}

\date{Revised version for submission to DCG}

\maketitle

%-----------------------------------------------------------------------
\begin{abstract}
Approximating convex bodies is a fundamental question in geometry, which has a wide variety of applications. Given a convex body $K$ in $\RE^d$ for fixed $d$, the objective is to minimize the number of facets of an approximating polytope for a given Hausdorff error $\eps$. It is known that $O((\diam(K)/\eps)^{(d-1)/2})$ facets suffice and are necessary for many instances, such as the Euclidean ball. However, this bound is far from optimal for ``skinny'' convex bodies.

A natural way to characterize the skinniness of a convex object is in terms of its relationship to the Euclidean ball. Given a convex body $K$, its \emph{volume diameter} $\Delta_d(K)$ is defined to be the diameter of a Euclidean ball of the same volume as $K$. The \emph{surface diameter} $\Delta_{d-1}(K)$ is defined analogously for surface area. It follows from generalizations of the isoperimetric inequality that $\diam(K) \geq \Delta_{d-1}(K) \geq \Delta_d(K)$. 

Arya, da Fonseca, and Mount proved that the diameter-based bound could be made sensitive to the surface diameter, improving the above bound to $O((\Delta_{d-1}(K)/\eps)^{(d-1)/2})$. In this paper, we strengthen this by proving the existence of an approximation with $O((\Delta_d(K)/\eps)^{(d-1)/2})$ facets. As a function of volume alone, this bound is tight up to constant factors. 

Our improvements arise from a combination of new ideas. We exploit known properties of the original body and its polar dual. In order to obtain a volume-sensitive bound, we explore the problem of computing a low-complexity polytope that is sandwiched between two given convex bodies. We show that this problem can be reduced to a covering problem involving a natural intermediate body based on the harmonic mean. Our proof relies on a geometric analysis of a relative notion of fatness involving these bodies.
\end{abstract}

\textbf{Keywords:} Convex approximation, Macbeath regions, polarity, Mahler volume.

%=======================================================================
\section{Introduction} \label{s:intro}
%=======================================================================

Approximating convex bodies by polytopes is a fundamental problem which has been extensively studied in the literature (see, e.g., Bronstein~\cite{Bro08}). Given a convex body $K$ in Euclidean $d$-dimensional space and a scalar $\eps > 0$, the problem is to construct a polytope $P$ of low combinatorial complexity that is $\eps$-close to $K$ according to some distance measure. A polytope $P$ is an \emph{$\eps$-approximation} to $K$ if the Hausdorff distance between $K$ and $P$ is at most $\eps$. (Definitions will be provided in Section~\ref{s:prelim}.) The approximation is \emph{outer} if $K \subseteq P$. Throughout, our measure of complexity will be the number of bounding halfspaces (or equivalently, facets) in the approximation, and we assume that the dimension $d$ is a constant. Our asymptotic forms conceal constant factors that depend on $d$.

The approximation bounds presented in the literature are of two common types. In both cases, it is shown that there exists $\eps_0 > 0$ such that the bounds hold for all $\eps \leq \eps_0$. Bounds are said to be \emph{nonuniform} if the value of $\eps_0$ depends on the properties of $K$. Nonuniform bounds, such as those appearing in the works of Gruber~\cite{Gru93a}, Clarkson~\cite{Cla06}, and others \cite{Bor00,Sch87,McV75,Tot48}, typically hold subject to smoothness conditions on $K$'s boundary. 

In contrast, if the value of $\eps_0$ is independent of $K$ (but may depend on $d$), the bound is said to be \emph{uniform}. Such bounds hold without any additional smoothness assumptions. As an example of such a bound, Dudley~\cite{Dud74} showed that any convex body $K$ can be $\eps$-approximated by a polytope $P$ with at most $c_d \cdot (\diam(K)/\eps)^{(d-1)/2}$ facets, where $c_d$ is a constant depending on the dimension, $\diam(K)$ denotes $K$'s diameter, and $0 < \eps \leq \diam(K)$. (A simple self-contained proof was given by Har-Peled and Jones~\cite{HaJ24}.) Bronshteyn and Ivanov~\cite{BrI76} showed that the same bound holds for the number of vertices. These results have numerous applications in computational geometry, for example, in the construction of coresets~\cite{AHV05, ArC14, AFM17b}. Our bounds are of the uniform type.

The approximation bounds of both Dudley and Bronshteyn-Ivanov are tight in the worst case up to constant factors, specifically when $K$ is a Euclidean ball (see, e.g., \cite{Bro08}). However, these bounds may be significantly suboptimal if $K$ is ``skinny''. A natural way to characterize the skinniness of a convex object is in terms of its relationship to the Euclidean ball. The \emph{volume diameter} of a convex body $K$ in $\RE^d$, denoted $\Delta_d(K)$, is defined to be the diameter of a Euclidean ball of the same volume as $K$, or equivalently,
\[
    \Delta_d(K)
        ~ = ~ 2 \left( \frac{\vol_d(K)}{\vol_d(B^d_2)} \right)^{\kern-2pt 1/d},
\]
where $B^d_2$ is the Euclidean unit ball in $\RE^d$, and $\vol_d(\cdot)$ denotes the $d$-dimensional Lebesgue measure.

The \emph{surface diameter}, denoted $\Delta_{d-1}(K)$, is defined analogously based on the surface areas of $K$ and a unit ball. These quantities are closely related to the classical concepts of \emph{quermassintegrals} and of \emph{intrinsic volumes} of the convex body \cite{McM75,McM91}. From generalizations of the isoperimetric inequality it follows that $\diam(K) \geq \Delta_{d-1}(K) \geq \Delta_d(K)$~\cite{McM91}. 

The question considered in this paper is whether it is possible to strengthen Dudley's bound by expressing the complexity of the approximation in terms of a body's volume diameter. Arya, da Fonseca, and Mount~\cite{AFM12b} proved that Dudley's bound could be made surface-area sensitive, improving the bound to $O\big( (\Delta_{d-1}(K)/\eps)^{(d-1)/2} \big)$. In this paper, we strengthen this further to produce a volume-sensitive bound. 

Before stating our result, we need to address an issue arising with extremely thin bodies. Suppose that we have a Euclidean ball in $\RE^{d-1}$. By the tightness of Dudley's bounds, any approximating polytope requires significant complexity. However, if we extrude this body infinitesimally into the next higher dimension (imagine a large circular disk cut from a thin sheet of paper in $\RE^3$), its volume diameter $\Delta_d(K)$ can be arbitrarily small. To deal with such degenerate cases, we require that the body have a minimum width of at least $\eps$ along any direction. Alternatively, we could fatten the body by taking the Minkowski sum with a Euclidean ball of radius $\eps$ before taking the approximation.

%-----------------------------------------------------------------------
\begin{theorem} \label{thm:main}
Consider any convex body $K$ in $\RE^d$ and any $\eps > 0$ such that the width of $K$ in any direction is at least $\eps$. There exists an outer $\eps$-approximating polytope $P$ for $K$ whose number of facets is at most
\[
	c_d \left(\frac{\Delta_d(K)}{\eps} \right)^{\kern-2pt\frac{d-1}{2}},
\]
where $c_d$ is a constant (depending on $d$).
\end{theorem}
%-----------------------------------------------------------------------

This matches Dudley's bound for fat objects, such as Euclidean balls. In contrast, the volume diameter of a skinny pencil-like object that has width $\eps$ along $d-1$ dimensions and width $1$ along one dimension has a volume diameter of only $\eps^{(d-1)/d}$, and Theorem~\ref{thm:main} yields a bound of only $O\big( 1/\eps^{(d-1)/2d} \big)$, which improves Dudley's bound by a factor of roughly $1/\eps^{(d-2)/2}$.

As a function of volume alone, the bound of Theorem~\ref{thm:main} is tight up to constant factors. To see why, observe that the bound can be stated in terms of $K$'s volume as $c_d \cdot \vol_d(K)^{(d-1)/2d} / \eps^{(d-1)/2}$. Clearly, $\vol_d(K) \leq \diam(K)^d$, and the tightness of Dudley's bound implies that, up to constant factors, the number of facets needed is at least $(\diam(K)/\eps)^{(d-1)/2} \geq \vol_d(K)^{(d-1)/2d} / \eps^{(d-1)/2}$.

This bound is the strongest to date as a function of intrinsic volumes. To contrast this with the area-sensitive bound of \cite{AFM12b}, consider a pancake-like object $K$ that has width $\eps$ along one dimension and unit width along all the others. This body has volume $\Theta(\eps)$ and surface area $\Theta(1)$, and therefore $\Delta_d(K) = \Theta(\eps^{1/d})$ and $\Delta_{d-1}(K) = \Theta(1)$. The area-sensitive bound matches Dudley's bound, while the volume-sensitive bound is better by a factor of $1/\eps^{(d-1)/2d}$. The problem of shape-sensitive approximations was also studied by Bonnet~\cite{Bon18}, but his results can at best be used to obtain area-sensitive bounds.

Our improvements arise from a combination of new ideas. As in earlier works, we employ the use of covers based on Macbeath regions together with known properties of the original body and its polar dual, in particular, the Mahler volume and the Blaschke–Santal\'{o} inequality. In order to obtain a volume-sensitive bound, we explore the problem of computing a low-complexity polytope that is sandwiched between two given convex bodies. We show that this problem can be reduced to a covering problem involving a natural intermediate body based on the harmonic mean. Our proof relies on a geometric analysis of a relative notion of fatness involving these bodies.

The remainder of the paper is organized as follows. In Section~\ref{s:techniques}, we give a high-level overview of our methods. In Section~\ref{s:prelim}, we present basic definitions and concepts that will be used throughout the paper, including a central concept, called relative fatness, and we introduce two convex bodies, the arithmetic-mean and harmonic-mean bodies. In Sections~\ref{s:mnet-sizes} and~\ref{s:hm-fat}, we explore the relevant properties of these bodies and analyze the sizes of the covers that form the basis of our approximation. In Section~\ref{s:hausdorff}, we combine these elements to derive the final approximation. Finally, in Section~\ref{s:nonuniform} we present an additional result, a relatively simple derivation of a volume-sensitive approximation bound in the nonuniform setting.

%=======================================================================
\section{Overview of Techniques} \label{s:techniques}
%=======================================================================

The problem of approximating a convex body by a polytope can be reduced to ``sandwiching'' a polytope between two nested convex bodies, denoted $K_0$ and $K_1$. For example, in the case of an $\eps$-approximation to $K$ in the Hausdorff distance, we could define $K_0 = K$ and $K_1$ as the Minkowski sum of $K$ with a Euclidean ball of radius $\eps$, that is, $K_1 = K \oplus \eps B^d_2$ (see Figure~\ref{f:prelim}(a)). Much of the previous work in this area has focused on the specific manner in which $K_1$ is defined relative to $K_0$.

Recent approaches to convex approximation have been based on covering the body to be approximated with convex objects that respect the local shape of the body being approximated (see, e.g.,~\cite{AAFM22,AFM24}). Macbeath regions have been a key tool in this regard. Consider a convex body $K$ and a point $x$ in the interior of $K$. Intuitively, the \emph{Macbeath region} at $x$, denoted $M_K(x)$, is the largest centrally symmetric body nested within $K$ and centered at $x$ (see Section~\ref{s:mac-prop} for definitions). A Macbeath region that has been shrunk by some constant factor $\lambda$ is denoted by $M_K^{\lambda}(x)$ (see Figure~\ref{f:prelim}(b)). Shrunken Macbeath regions have nice packing and covering properties, and they behave much like metric balls. 

%-----------------------------------------------------------------------
\begin{figure}[htbp]
    \centering\includegraphics[scale=0.8]{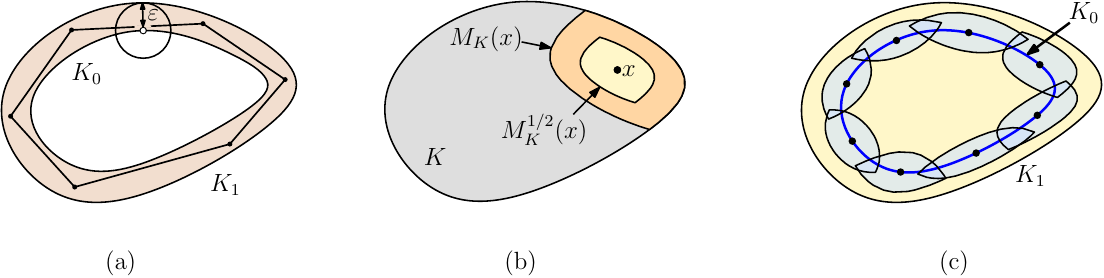}
    \caption{(a) Hausdorff approximation, (b) Macbeath regions and (c) covering the boundary of $K_0$ by Macbeath regions.} \label{f:prelim} 
\end{figure}
%-----------------------------------------------------------------------

A natural way to construct a sandwiching polytope between two nested bodies $K_0$ and $K_1$ is to form a collection of shrunken Macbeath regions with respect to $K_1$ that cover the boundary of $K_0$ (see Figure~\ref{f:prelim}(c)). If done properly, a sandwiching polytope can be constructed by sampling a constant number of points from each of these Macbeath regions and taking the convex hull of their union. Thus, up to constant factors, the number of Macbeath regions provides an upper bound on the number of vertices in the sandwiched polytope. The concept of an MNet (defined in Section~\ref{s:mac-prop}) will be useful to characterize a set of Macbeath regions that covers a portion of a convex body. 

The principal challenge is to determine the number of Macbeath regions needed to form such a cover. This is often done through a packing argument. The objective is to show that each Macbeath region covers a significant fraction of volume of the convex body being covered. To obtain the best bounds, Macbeath regions may be constructed in either the original body or in its polar body. This is because of a well-known concept from convexity called the Mahler volume (defined later in Section~\ref{s:centrality}), which states that the product of the volume of a convex body and its polar is bounded from below by a constant. When objects are fat, it can be shown that small Macbeath regions in the original body correspond to large Macbeath regions in the polar body. Hence, the packing argument can be pushed through from either the original side or the polar side.

However, in a general context this correspondence cannot be used as a basis for a packing argument. To see the issue, consider the two bodies $K_0$ and $K_1$ shown in Figure~\ref{f:rel-fat}(a), where $K_0$ is a diamond shape nested within the square $K_1$. Consider a $\frac{1}{2}$-scaled Macbeath region centered at a point $x$ that lies at the top vertex of $K_0$. Observe that almost all of its volume lies outside of $K_0$. This is problematic for a packing argument, since such a body covers very little of the volume of $K_0$. 

%-----------------------------------------------------------------------
\begin{figure}[htbp]
    \centering\includegraphics[scale=0.8,page=1]{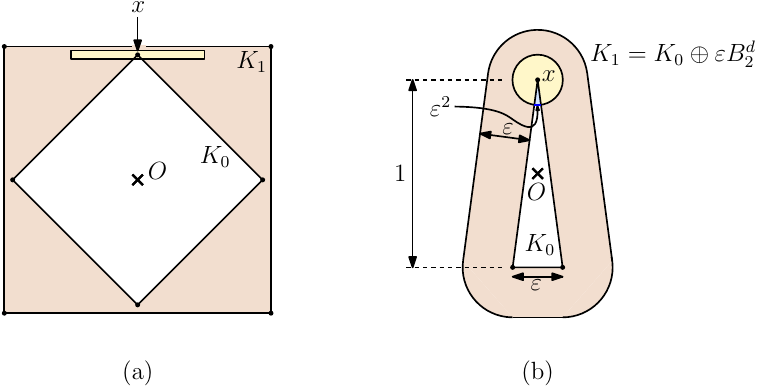}
    \caption{Relative fatness.} \label{f:rel-fat} 
\end{figure}
%-----------------------------------------------------------------------

Intuitively, while the body $K_0$ is ``fat'' in the standard sense%
\footnote{A convex body in $\RE^d$ is \emph{fat} if it can be sandwiched between two Euclidean balls whose radii differ by a factor depending only on $d$.},
it is not fat ``relative'' to the enclosing body $K_1$. Although this example is not typical of approximations (where $K_1$ is an $\eps$-expansion of $K_0$), it is not hard to create more typical examples where this same phenomenon arises (see Figure~\ref{f:rel-fat}(b)). These difficulties are further enhanced by the fact that relative fatness also needs to hold in the polar setting.

Our overall approach will be structurally similar to previous Macbeath-based constructions, but the analysis is complicated due to the additional considerations arising from relative fatness. We introduce an intermediate body that is sandwiched between $K_0$ and $K_1$, called the harmonic-mean body (Section~\ref{s:am-hm}). We next extend results from our earlier paper \cite{AFM24} to bound the sizes of MNets in our new setting (Section~\ref{s:mnet-sizes}), and we show that the inner body is relatively fat with respect to the harmonic-mean body (Section~\ref{s:hm-fat}). Finally, we show that the approximation can be performed using the harmonic-mean body, and we will combine these elements to obtain the final construction of the $\eps$-approximation (Section~\ref{s:hausdorff}).

%=======================================================================
\section{Preliminaries} \label{s:prelim}
%=======================================================================

In this section, we recall some standard notation and concepts. Throughout, $K$ denotes a convex body in $\RE^d$, that is, a compact convex subset with a nonempty interior, and $\eps$ denotes a fixed approximation parameter. Let $\bd K$ denote its boundary. Let $\vol(K) = \vol_d(K)$ denote its $d$-dimensional Lebesgue measure, and let $\area(K) = \vol_{d-1}(\bd K)$ denote its surface area. For $\alpha \geq 0$, $\alpha K$ denotes a uniform scaling of $K$ about the origin, and for $x \in \RE^{d}$, $K + x$ denotes the translation of $K$ by $x$. Given a convex body $L$, let $K \oplus L$ denote the Minkowski sum of $K$ and $L$, that is, $\{x + y \ST x \in K, y \in L\}$. Let $B^d_2$ denote the Euclidean ball of unit radius centered at the origin.

Throughout, we use $\inner{\cdot}{\cdot}$ to denote the standard inner (dot) product and use $\|\cdot\| = \sqrt{\inner{\cdot}{\cdot}}$ to denote the Euclidean norm. Given two convex bodies $K$ and $L$ in $\RE^d$, their \emph{Hausdorff distance} is defined to be
\[
    \min \left\{ r \geq 0 \ST K \subseteq L \oplus r B^d_2 ~\text{and}~ L \subseteq K \oplus r B^d_2) \right\}.
\]
Given a unit vector $u$, the \emph{width of $K$ in direction $u$} is the smallest distance between two hyperplanes, both orthogonal to $u$, that enclose $K$. This is equivalent to the absolute difference between the values of the support function of $K$ in $u$ and $-u$. The \emph{minimum width} of $K$ is the minimum over all directional widths.

To avoid specifying many real-valued constants that arise in our analysis, we will often hide them using asymptotic notation. For a positive real $x$, we use the notation $O(x)$ to denote a quantity whose value is at most $c \SP x$ for an appropriately chosen constant $c$. We use $\Omega(x)$ for a quantity that is at least $c \SP x$. We use $\Theta(x)$ to denote a quantity that lies within the interval $[c \SP x, c' x]$, for appropriate constants $c$ and $c'$. Hidden constants may depend on the dimension $d$, but they do not depend on $K$ or $\eps$. 

%-=-=-=-=-=-=-=-=-=-=-=-=-=-=-=-=-=-=-=-=-=-=-=-=-=-=-=-=-=-=-=-=-=-=-=-
\subsection{Polarity and Centrality Properties} \label{s:centrality}
%-=-=-=-=-=-=-=-=-=-=-=-=-=-=-=-=-=-=-=-=-=-=-=-=-=-=-=-=-=-=-=-=-=-=-=-

Given a convex body $K \subseteq \RE^d$ that contains the origin $O$ in its interior, define its \emph{polar}, denoted $\stdpolar{K}$, to be the convex set
\[
	\stdpolar{K}
		~ = ~ \{ u \ST \inner{u}{v} \leq 1, \hbox{~for all $v \in K$} \}.
\]
The polar has many useful properties (see, e.g., Eggleston~\cite{Egg58}). For example, it is well known that $\stdpolar{K}$ is bounded and $\stdpolar{(\stdpolar{K})} = K$. Furthermore, if $K_1$ and $K_2$ are two convex bodies that contain the origin such that $K_1 \subseteq K_2$, then $\stdpolar{K}_2 \subseteq \stdpolar{K}_1$. 

Given a nonzero vector $v \in \RE^d$, we define its ``polar'' $\stdpolar{v}$ to be the hyperplane that is orthogonal to $v$ and at distance $1/\|v\|$ from the origin, on the same side of the origin as $v$. The polar of a hyperplane is defined as the inverse of this mapping. We may equivalently define $\stdpolar{K}$ as the intersection of the closed halfspaces that contain the origin, bounded by the hyperplanes $\stdpolar{v}$, for all $v \in K$. 

Given a convex body $K$ that contains the origin in its interior, define its \emph{Mahler volume} to be $\vol(K) \cdot \vol(\stdpolar{K})$. The Mahler volume has been well studied (see, e.g.~\cite{San49,MeP90,Sch93}). It is invariant under linear transformations but depends on the location of the origin. The famous Blaschke–Santal\'{o} inequality states that if the centroid of $K$ coincides with the origin, then its Mahler volume is bounded above by some constant depending only on the dimension (see, e.g., \cite{BoM87,Kup08,Naz12}). Throughout the paper, we fix a suitably large constant $c_0$ (depending on the dimension), and we say that $K$ is \emph{well-centered} if its Mahler volume is at most $c_0$. By known results, we have the following.

%-----------------------------------------------------------------------
\begin{lemma} \label{lem:mahler-bounds}
Given a convex body $K \subseteq \RE^d$ whose interior contains the origin,
\begin{enumerate}\setlength{\itemsep}{-0.5ex}\setlength{\parsep}{0pt}
\item[$(i)$] $\vol(K) \cdot \vol(\stdpolar{K}) = \Omega(1)$,
\item[$(ii)$] if $K$'s centroid coincides with the origin, then $K$ is well-centered, that is, $\vol(K) \cdot \vol(\stdpolar{K}) = O(1)$.
\end{enumerate}
\end{lemma}
%-----------------------------------------------------------------------

%-=-=-=-=-=-=-=-=-=-=-=-=-=-=-=-=-=-=-=-=-=-=-=-=-=-=-=-=-=-=-=-=-=-=-=-
\subsection{Caps, Rays, and Relative Measures} \label{s:cap-prop}
%-=-=-=-=-=-=-=-=-=-=-=-=-=-=-=-=-=-=-=-=-=-=-=-=-=-=-=-=-=-=-=-=-=-=-=-

Consider a compact convex body $K$ in $d$-dimensional space $\RE^d$ with the origin $O$ in its interior. A \emph{cap} $C$ of $K$ is defined to be the nonempty intersection of $K$ with a halfspace. Letting $h_1$ denote a hyperplane that does not pass through the origin, let $\pcap{K}{h_1}$ denote the cap resulting by intersecting $K$ with the halfspace bounded by $h_1$ that does not contain the origin (see Figure~\ref{f:widray}(a)). Define the \emph{base} of $C$, denoted $\base(C)$, to be $h_1 \cap K$. Letting $h_0$ denote a supporting hyperplane for $K$ and $C$ parallel to $h_1$, define an \emph{apex} of $C$ to be any point of $h_0 \cap K$.

%-----------------------------------------------------------------------
\begin{figure}[htbp]
    \centering\includegraphics[scale=0.8]{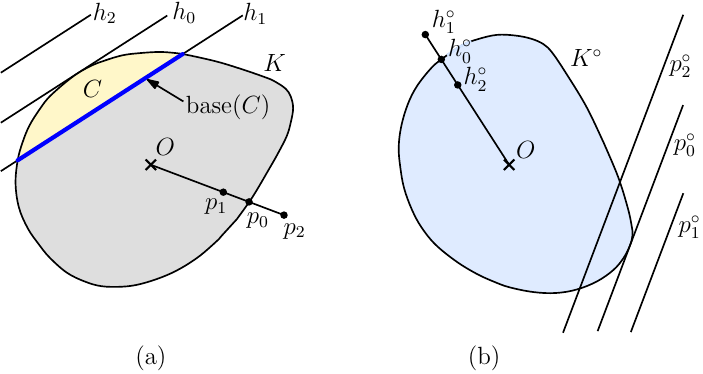}
    \caption{Convex body $K$ and polar $\stdpolar{K}$ with definitions used for width and ray.} \label{f:widray} 
\end{figure}
%-----------------------------------------------------------------------

We define the \emph{absolute width} of cap $C$ to be $\dist(h_1,h_0)$. When a cap does not contain the origin, it will be convenient to define distances in relative terms. We define the \emph{relative width} of such a cap $C$, denoted $\width_K(C)$, to be the ratio $\dist(h_1,h_0) / \dist(O,h_0)$ and, to simplify notation, define $\width_K(h_1) = \width_K(\pcap{K}{h_1})$. Observe that as a hyperplane is translated from a supporting hyperplane to the origin, the relative width of its cap ranges from $0$ to a limiting value of $1$.

We also characterize the proximity of a point to the boundary in both absolute and relative terms. Given a point $p_1 \in K$, let $p_0$ denote the point of intersection of the ray $O p_1$ with the boundary of $K$. Define the \emph{absolute ray distance} of $p_1$ to be $\|p_1 p_0\|$, and define the \emph{relative ray distance} of $p_1$, denoted $\ray_K(p_1)$, to be the ratio $\|p_1 p_0\| / \|O p_0\|$. Relative widths and relative ray distances are both affine invariants. Throughout the paper, unless otherwise specified, \emph{widths and ray distances are understood to be relative}.

We can also define volumes in an affine invariant manner. Recall that $\vol(\cdot)$ denotes the standard Lebesgue volume measure. For any region $\Lambda \subseteq K$, define the \emph{relative volume} of $\Lambda$ with respect to $K$, denoted $\vol_K(\Lambda)$, to be $\vol(\Lambda)/\vol(K)$.

With the aid of the polar transformation, we can extend the concepts of width and ray distance to objects lying outside of $K$. Consider a hyperplane $h_2$ parallel to $h_1$ that lies beyond the supporting hyperplane $h_0$ (see Figure~\ref{f:widray}(a)). It follows that $\stdpolar{h}_2 \in \stdpolar{K}$, and we define $\width_K(h_2) = \ray_{\stdpolar{K}}(\stdpolar{h}_2)$ (see Figure~\ref{f:widray}(b)). Similarly, for a point $p_2 \notin K$ that lies along the ray $O p_1$, it follows that the hyperplane $\stdpolar{p}_2$ intersects $\stdpolar{K}$, and we define $\ray_K(p_2) = \width_{\stdpolar{K}}(\stdpolar{p}_2)$. By properties of the polar transformation, it is easy to see that $\width_K(h_2) = \dist(h_0,h_2) / \dist(O,h_2)$. Similarly, $\ray_K(p_2) = \|p_0 p_2\| / \|O p_2\|$. Henceforth, we will omit references to $K$ when it is clear from context.

Some of our results apply only when we are sufficiently close to the boundary of $K$. Given $\alpha \leq \frac{1}{2}$, we say that a cap $C$ is \emph{$\alpha$-shallow} if $\width(C) \leq \alpha$, and we say that a point $p$ is \emph{$\alpha$-shallow} if $\ray(p) \leq \alpha$. We will simply say \emph{shallow} to mean $\alpha$-shallow, where $\alpha \leq \frac{1}{2}$ is a sufficiently small constant.

We state some useful technical results regarding ray distances and cap widths. The missing proofs can be found in~\cite[Section~2.3]{AFM24}.

%-----------------------------------------------------------------------
\begin{lemma} \label{lem:raydist-width}
Let $C$ be a cap of $K$ that does not contain the origin, and let $p$ be a point in $C$. Then $\ray(p) \leq \width(C)$.
\end{lemma}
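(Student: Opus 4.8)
The plan is to reduce the inequality to a one-dimensional computation along the normal direction of the cap. Write $C = \pcap{K}{h_1}$, let $u$ be the \emph{unit} vector normal to $h_1$ pointing away from the origin (i.e., toward $C$), and let $h_0$ be the supporting hyperplane of $K$ parallel to $h_1$ on the far side, with outer normal $u$. Then there are scalars $0 < a_1 \le a_0$ with $h_1 = \{x \ST \ang{x,u} = a_1\}$ and $h_0 = \{x \ST \ang{x,u} = a_0\}$; here $a_1 > 0$ precisely because $C$ does not contain the origin (so $O$ lies in the open halfspace $\ang{x,u} < a_1$), and $a_0 \ge a_1$ since $h_0$ lies beyond $h_1$. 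In this notation $\dist(O,h_0) = a_0$ and $\dist(h_1,h_0) = a_0 - a_1$, so by definition $\width(C) = (a_0 - a_1)/a_0 = 1 - a_1/a_0$.

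Next I would locate the point $p_0$ appearing in the definition of $\ray(p)$. Since $O \in \interior(K)$ and $p \in C \subseteq K$, the ray $Op$ meets $\bd K$ at a unique point $p_0$, and $p$ lies on the segment $O p_0$; hence $p_0 = t\,p$ for some real $t \ge 1$. Because $p_0 \in K$ and $h_0$ supports $K$, we have $\ang{p_0,u} \le a_0$, i.e., $t\,\ang{p,u} \le a_0$. On the other hand $p \in C$ gives $\ang{p,u} \ge a_1 > 0$, and therefore $t \le a_0 / \ang{p,u} \le a_0/a_1$.

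Finally I would compute the relative ray distance directly: since $p_0 = t p$ with $t \ge 1$, we get $\|p p_0\| = (t-1)\|p\|$ and $\|O p_0\| = t\|p\|$, whence $\ray(p) = \|p p_0\|/\|O p_0\| = (t-1)/t = 1 - 1/t$. Combining with $t \le a_0/a_1$ (so $1/t \ge a_1/a_0$) yields $\ray(p) \le 1 - a_1/a_0 = \width(C)$, as claimed.

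There is essentially no deep obstacle here; the only points requiring care are the orientation conventions — choosing $u$ to point toward the cap so that the relevant inner products are positive and the supporting-hyperplane inequality points the right way — and the observation that $p$ genuinely lies between $O$ and $p_0$ on the ray, which is where convexity and $O \in \interior(K)$ enter. Everything else is the elementary identity $\ray(p) = 1 - 1/t$ paired with the bound $t \le a_0/a_1$ coming from the apex hyperplane $h_0$.
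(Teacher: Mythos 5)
Your proof is correct: the reduction to the normal direction, the supporting-hyperplane bound $t\,\ang{p,u}\le a_0$ with $\ang{p,u}\ge a_1>0$, and the identity $\ray(p)=1-1/t$ together give exactly $\ray(p)\le 1-a_1/a_0=\width(C)$, and the orientation and positivity issues you flag are handled properly. The paper itself omits the proof of this lemma (deferring it to \cite[Section~2.3]{AFM23arxiv}), and your elementary one-dimensional computation is the standard argument one would expect there, so there is nothing substantively different to compare.
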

%-----------------------------------------------------------------------

There are two natural ways to associate a cap with any point $p \in K$. The first is the \emph{minimum volume cap}, which is any cap whose base passes through $p$ and that has minimum volume among all such caps. For the second, assume that $p \neq O$, and let $p_0$ denote the point of intersection of the ray $O p$ with the boundary of $K$. Let $h_0$ be any supporting hyperplane of $K$ at $p_0$. Take the cap $C$ induced by a hyperplane parallel to $h_0$ passing through $p$. As stated in the following lemma, this is the cap of minimum width containing $p$.

%-----------------------------------------------------------------------
\begin{lemma}
\label{lem:min-width-cap}
For any $p \in K \setminus \{O\}$, consider the cap $C$ defined above. Then $\width(C) = \ray(p)$ and further, $C$ has the minimum width over all caps that contain $p$.
 \end{lemma}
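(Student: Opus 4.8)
The plan is to reduce both assertions to one‑dimensional comparisons along the ray $O p$, exploiting the fact that relative width and relative ray distance are each a ratio of distances measured along a transversal to a pair of parallel hyperplanes. Throughout, recall that $p_0 = \bd K \cap O p$, that $h_0$ is a supporting hyperplane of $K$ at $p_0$, that $h_1$ is the parallel hyperplane through $p$, and that $C$ is the cap of $K$ cut off by $h_1$ on the side away from $O$.

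First I would establish the identity $\width(C) = \ray(p)$. Orient the common unit normal $n$ of $h_0$ and $h_1$ to point from $O$ toward $p_0$, and set $f(x) = \ang{x,n}$. Then $f(O) = 0$ while $f(p)$ and $f(p_0)$ are the levels of $h_1$ and $h_0$, and since $O$ lies strictly inside $K$ we have $0 < f(p) < f(p_0)$; moreover $O,p,p_0$ are collinear. Because $f$ is affine and nonconstant on this line, $p$ splits the segment $O p_0$ in the ratio $f(p) : (f(p_0)-f(p))$, so $\ray(p) = \|p p_0\|/\|O p_0\| = (f(p_0)-f(p))/f(p_0)$. On the other hand $\|n\|=1$ gives $\dist(h_1,h_0) = f(p_0)-f(p)$ and $\dist(O,h_0) = f(p_0)$, whence $\width(C) = \dist(h_1,h_0)/\dist(O,h_0)$ equals the same ratio. (A byproduct is that $C$ does not contain $O$ when $p \neq O$, so its relative width is indeed defined.)

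For minimality, let $C'$ be an arbitrary cap containing $p$ and not containing $O$, let $h_1'$ be the hyperplane whose far halfspace cuts off $C'$, and let $h_0'$ be the parallel supporting hyperplane of $K$. Let $n'$ be the unit normal of $h_1'$ pointing away from $O$ and put $f'(x) = \ang{x,n'}$. Since $O \notin C'$, $f'(O) = 0$ and the level of $h_1'$ is positive; since $p \in C'$, its level is at least that of $h_1'$, hence positive. Thus $f'$ is strictly increasing along the ray $O p$, so this ray meets $h_1'$ at a point $q_1'$ and $h_0'$ at a point $q_0'$. Running the intercept‑theorem computation of the first part along the transversal $O p$ against the parallel pair $(h_1',h_0')$ yields
\[
	\width(C')
		~ = ~ \frac{\dist(h_1',h_0')}{\dist(O,h_0')}
		~ = ~ \frac{\|q_1' q_0'\|}{\|O q_0'\|}
		~ = ~ 1 - \frac{\|O q_1'\|}{\|O q_0'\|}.
\]
Now $q_1'$ is reached no later than $p$ along the ray, since $p$ lies in the far halfspace of $h_1'$; and $q_0'$ is reached no earlier than $p_0$, since $h_0'$ supports $K$ and $p_0 \in K$. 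Hence $\|O q_1'\| \le \|O p\|$ while $\|O q_0'\| \ge \|O p_0\| > 0$, so $\|O q_1'\|/\|O q_0'\| \le \|O p\|/\|O p_0\|$, and substituting gives $\width(C') \ge 1 - \|O p\|/\|O p_0\| = \|p p_0\|/\|O p_0\| = \ray(p) = \width(C)$.

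The argument is essentially bookkeeping once this viewpoint is fixed; the main obstacle is verifying the two comparisons $\|O q_1'\| \le \|O p\|$ and $\|O q_0'\| \ge \|O p_0\|$ for a \emph{general} cap. The first uses only membership of $p$ in $C'$, but the second genuinely needs the defining property that $h_0'$ is a supporting hyperplane of $K$ — a merely parallel hyperplane could be crossed before $p_0$ and the inequality would fail — and one must also confirm that the ray $O p$ actually meets both hyperplanes, which is exactly why the sign conditions $f'(O) = 0 < \operatorname{level}(h_1') \le \operatorname{level}(h_0')$ are recorded first.
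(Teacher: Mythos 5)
Your proof is correct. Note that the paper itself does not include a proof of this lemma---it defers it to the cited reference \cite[Section~2.3]{AFM23arxiv}---so there is no in-paper argument to compare against; your proposal is the natural self-contained one. Both halves check out: the identity $\width(C) = \ray(p)$ follows exactly as you say by evaluating the linear functional $\ang{\cdot, n}$ at $O$, $p$, $p_0$, and the minimality argument is sound, with the two key comparisons ($\|O q_1'\| \le \|O p\|$ from $p \in C'$, and $\|O q_0'\| \ge \|O p_0\|$ from $h_0'$ supporting $K$ with $p_0 \in K$) correctly identified as the only nontrivial steps; the reduction of orthogonal distances between parallel hyperplanes to ratios along the transversal $O p$ is legitimate since both scale by the same factor.
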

%-----------------------------------------------------------------------

The next lemma shows that cap widths behave nicely under containment.

%-----------------------------------------------------------------------
\begin{lemma} \label{lem:cap-containment-width}
Let $C_1$ and $C_2$ be two caps that do not contain the origin such that $C_1 \subseteq C_2$. Then $\width(C_1) \leq \width(C_2)$.
\end{lemma}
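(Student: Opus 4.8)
The plan is to reduce the statement about cap widths to the already-available inequality relating ray distance and cap width (Lemma~\ref{lem:raydist-width}), which states that every point of a cap has ray distance at most the width of the cap. That inequality alone is one-directional, so the crux is to produce a single \emph{witness point} $p \in C_1$ whose ray distance is \emph{exactly} $\width(C_1)$; feeding such a point into Lemma~\ref{lem:raydist-width} applied to the larger cap $C_2$ will then yield $\width(C_1) = \ray(p) \le \width(C_2)$. Note that a generic point of $\base(C_1)$ will not do, since for such a point Lemma~\ref{lem:raydist-width} only gives $\ray(p) \le \width(C_1)$, the wrong direction; the witness must be chosen along the ``apex direction.''

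Concretely, I would write $C_1 = \pcap{K}{h_1}$, let $h_0$ be the supporting hyperplane of $K$ parallel to $h_1$ bounding $C_1$ on the far side, and let $a \in h_0 \cap K$ be an apex of $C_1$. Since $O \in \interior(K)$ and $a \in \bd K$, the ray $O a$ meets $\bd K$ precisely at $a$. Assuming $\width(C_1) > 0$ (the case $\width(C_1) = 0$ being trivial because relative widths are nonnegative), the hyperplane $h_1$ lies strictly between $O$ and $h_0$, so the segment $O a$ crosses $h_1$ at a unique point $p$; this $p$ lies on a chord of $K$ and on $h_1$, hence $p \in \base(C_1) \subseteq C_1$, and $p \neq O$ because $O \notin C_1$.

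The key step is the identity $\ray(p) = \width(C_1)$. I would verify it by a similar-triangles computation along the ray $O a$: writing $u$ for the outward unit normal of $h_0$, $H = \dist(O,h_0) = \ang{a,u}$, and $w = \dist(h_1,h_0)$, the point $p$ is the scalar multiple $\lambda a$ with $\lambda = 1 - w/H$, so $\ray(p) = \|p a\|/\|O a\| = 1 - \lambda = w/H = \width(C_1)$. I regard this identity (together with the bookkeeping that $p$ genuinely lands in $\base(C_1)$ and that the apex direction is the correct choice) as the main point of the argument; everything else is routine.

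Finally, from $C_1 \subseteq C_2$ we get $p \in C_2$, and since $C_2$ does not contain the origin, Lemma~\ref{lem:raydist-width} applied to $C_2$ gives $\ray(p) \le \width(C_2)$. Chaining this with $\width(C_1) = \ray(p)$ completes the proof. (One should also observe, for the degenerate situations, that $h_1$ must actually cut $K$ since $C_1$ is a nonempty cap omitting the interior point $O$, which guarantees $0 < w < H$ and hence $\lambda \in (0,1)$.)
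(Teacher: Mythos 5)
Your proof is correct and follows essentially the same route as the paper's: both pick the witness point $p$ where $\base(C_1)$ meets the ray from $O$ to $C_1$'s apex, establish $\ray(p) = \width(C_1)$, and then apply Lemma~\ref{lem:raydist-width} to $C_2$ using $p \in C_1 \subseteq C_2$. The only difference is that the paper obtains the identity $\ray(p) = \width(C_1)$ by citing Lemma~\ref{lem:min-width-cap}, whereas you re-derive it inline with a direct similar-triangles computation, which is fine.
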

%-----------------------------------------------------------------------

%-----------------------------------------------------------------------
\begin{proof}
Consider the intersection point $p$ of the base of $C_1$ with the ray joining $O$ to the apex of $C_1$. By Lemma~\ref{lem:min-width-cap} and the remarks preceding it, $\ray(p) = \width(C_1)$. Since $C_1 \subseteq C_2$, it follows that $p \in C_2$, and so by Lemma~\ref{lem:raydist-width}, $\ray(p) \leq \width(C_2)$. The lemma follows.
\end{proof}
%-----------------------------------------------------------------------

Given any cap $C$ and a real $\lambda > 0$, we define its $\lambda$-expansion, denoted $C^{\lambda}$, to be the cap of $K$ cut by a hyperplane parallel to the base of $C$ such that the absolute width of $C^{\lambda}$ is $\lambda$ times the absolute width of $C$. (Notice that the expansion of a cap may contain the origin, and, indeed, if the expansion is large enough, it may be the same as $K$.) An easy consequence of convexity is that, for $\lambda \geq 1$, $C^{\lambda}$ is a subset of the region obtained by scaling $C$ by a factor of $\lambda$ about its apex. This implies the following lemma.

%-----------------------------------------------------------------------
\begin{lemma} \label{lem:cap-exp}
Given any cap $C$ and a real $\lambda \geq 1$, $\vol(C^{\lambda}) \leq \lambda^d \cdot \vol(C)$.
\end{lemma}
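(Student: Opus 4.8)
The plan is to make precise the remark that precedes the lemma: for $\lambda \ge 1$, the expanded cap $C^{\lambda}$ is contained in the dilate of $C$ by factor $\lambda$ about an apex of $C$, after which the volume bound is immediate from the behavior of Lebesgue measure under scaling.

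First I would fix the notation. Let $h_0$ be the supporting hyperplane of $K$ parallel to $\base(C)$ used in the definition of $C$, let $a$ be an apex of $C$ (so $a \in h_0 \cap K$), and let $w$ denote the absolute width of $C$, i.e.\ $w = \dist(h_0, \base(C))$. For a point $x \in K$, let $\delta(x) \ge 0$ be its distance from $h_0$, measured in the direction pointing from $h_0$ into $K$; this is nonnegative for all $x \in K$ since $K$ lies on one side of $h_0$, and $\delta(a) = 0$. With this notation, $C = \{x \in K : \delta(x) \le w\}$ and $C^{\lambda} = \{x \in K : \delta(x) \le \lambda w\}$.

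Next I would show $C^{\lambda} \subseteq a + \lambda\,(C - a)$. Take any $p \in C^{\lambda}$ and set $q = a + \tfrac{1}{\lambda}(p - a) = \bigl(1 - \tfrac{1}{\lambda}\bigr) a + \tfrac{1}{\lambda}\, p$. Since $\lambda \ge 1$, this is a convex combination of $a$ and $p$, both of which lie in $K$, so $q \in K$ by convexity. Moreover $\delta$ is affine, so $\delta(q) = \bigl(1 - \tfrac{1}{\lambda}\bigr)\delta(a) + \tfrac{1}{\lambda}\delta(p) = \tfrac{1}{\lambda}\delta(p) \le \tfrac{1}{\lambda}(\lambda w) = w$, where the last inequality uses $p \in C^{\lambda}$. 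Hence $q \in C$, and $p = a + \lambda\,(q - a) \in a + \lambda\,(C - a)$, proving the claimed containment.

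Finally, taking $d$-dimensional Lebesgue measure and using that translation is measure-preserving and a dilation by $\lambda$ scales volume by $\lambda^d$, we get $\vol(C^{\lambda}) \le \vol\bigl(a + \lambda\,(C - a)\bigr) = \lambda^d \vol(C)$, as desired. I do not anticipate a real obstacle here; the only point requiring minor care is bookkeeping the sign/direction of the distance function $\delta$ so that the halfspace defining the cap is correctly oriented, and checking that $1/\lambda \in [0,1]$ so the convexity argument applies — both are routine.
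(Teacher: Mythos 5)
Your proof is correct and takes essentially the same route as the paper, which states without detail that for $\lambda \ge 1$ the expanded cap $C^{\lambda}$ is contained in the $\lambda$-dilate of $C$ about its apex and deduces the volume bound from scaling. You have simply filled in the routine verification of that containment (convex combination with the apex plus affinity of the distance-to-$h_0$ function), which matches the paper's intended argument.
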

%-----------------------------------------------------------------------

%-=-=-=-=-=-=-=-=-=-=-=-=-=-=-=-=-=-=-=-=-=-=-=-=-=-=-=-=-=-=-=-=-=-=-=-
\subsection{Macbeath Regions and MNets} \label{s:mac-prop}
%-=-=-=-=-=-=-=-=-=-=-=-=-=-=-=-=-=-=-=-=-=-=-=-=-=-=-=-=-=-=-=-=-=-=-=-

Given a convex body $K$, a point $x \in K$, and a scaling factor $\lambda > 0$, the \emph{Macbeath region} $M_K^\lambda(x)$ is defined as
\[
    M_K^\lambda(x) 
        ~ = ~ x + \lambda ((K - x) \cap (x - K)).
\]
It is easy to see that $M_K^1(x)$ is the intersection of $K$ with the reflection of $K$ around $x$, and so $M_K^1(x)$ is centrally symmetric about $x$. In fact, it is the largest centrally symmetric body centered at $x$ and contained in $K$. Furthermore, $M_K^\lambda(x)$ is a copy of $M_K^1(x)$ scaled by the factor $\lambda$ about the center $x$ (see Figure~\ref{f:prelim}(a)). We will omit the subscript $K$ when the convex body is clear from the context. For convenience, we define $M(x) = M^1(x)$. 

In the following, we summarize a number of important properties of Macbeath regions and MNets (defined below). The missing proofs can be found in~\cite[Section~2.5]{AFM24} unless otherwise indicated. A good source of general information about Macbeath regions can be found in the survey by B{\'a}r{\'a}ny~\cite{Bar00}.

The first lemma shows that if two shrunken Macbeath regions overlap, then a constant factor expansion of one encloses the other. In this sense, the Macbeath regions that overlap can serve as proxies for each other~\cite{BCP93,ELR70,AFM17a}. The second lemma relates a Macbeath region and any cap containing its center.

%-----------------------------------------------------------------------
\begin{lemma} \label{lem:mac-mac}
Let $K$ be a convex body and let $\lambda \leq \frac{1}{5}$ be any real. If $x, y \in K$ such that $M^{\lambda}(x) \cap M^{\lambda}(y) \neq \emptyset$, then $M^{\lambda}(y) \subseteq M^{4\lambda}(x)$.
\end{lemma}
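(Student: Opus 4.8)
The plan is to pass from the Macbeath regions to the underlying symmetric sets and argue entirely by convex combinations. For $x\in K$ write $B_x = (K-x)\cap(x-K)$; this is a convex body symmetric about the origin with $M^{\lambda}(x) = x+\lambda B_x$, and its Minkowski gauge $\|\cdot\|_x$ satisfies the triangle inequality and has $v\in B_x \iff \|v\|_x\le 1$. Thus a point lies in $M^{\lambda}(x)$ precisely when it has the form $x+\lambda v$ with $v\in B_x$, and the goal becomes: every $w$ of the form $y+\lambda u$ with $u\in B_y$ can be written as $x+4\lambda v'$ with $v'\in B_x$.

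The heart of the argument is a monotonicity estimate: \emph{if $z = x+\lambda v$ with $v\in B_x$ and $0\le\lambda\le 1$, then $(1-\lambda)B_x \subseteq B_z \subseteq (1+\lambda)B_x$.} Both inclusions are verified by exhibiting convex combinations of points already known to lie in $K$. For the left inclusion, given $u\in B_x$ one has $z\pm(1-\lambda)u = \lambda(x+v)+(1-\lambda)(x\pm u) \in K$ (using $x+v,\,x\pm u\in K$), so $(1-\lambda)u\in B_z$. For the right inclusion, given $u\in B_z$ one has $x\pm\tfrac{u}{1+\lambda} = \tfrac{1}{1+\lambda}(z\pm u)+\tfrac{\lambda}{1+\lambda}(x-v)\in K$ (using $z\pm u\in K$ and $x-v\in K$), so $\tfrac{u}{1+\lambda}\in B_x$.

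Now pick $z\in M^{\lambda}(x)\cap M^{\lambda}(y)$ and write $z = x+\lambda v_x = y+\lambda v_y$ with $v_x\in B_x$, $v_y\in B_y$; subtracting the two expressions gives $y-x = \lambda(v_x-v_y)$. Applying the monotonicity estimate at $z$ once from the side of $y$ (left inclusion) and once from the side of $x$ (right inclusion) yields $(1-\lambda)B_y \subseteq B_z \subseteq (1+\lambda)B_x$, hence $B_y \subseteq c\,B_x$ where $c = \tfrac{1+\lambda}{1-\lambda}$. Finally take $w\in M^{\lambda}(y)$, say $w = y+\lambda u$ with $u\in B_y$; then
\[
  \frac{w-x}{4\lambda} \;=\; \frac{(w-y)+(y-x)}{4\lambda} \;=\; \frac{u + v_x - v_y}{4},
\]
and since $u,v_y\in B_y\subseteq c\,B_x$ and $v_x\in B_x$, the triangle inequality for $\|\cdot\|_x$ gives $\bigl\|\tfrac{w-x}{4\lambda}\bigr\|_x \le \tfrac{c+1+c}{4} = \tfrac{2c+1}{4}$. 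For $\lambda\le\tfrac15$ we have $c\le\tfrac32$, so this is at most $1$; that is, $w-x\in 4\lambda B_x$, i.e.\ $w\in M^{4\lambda}(x)$.

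I expect the monotonicity estimate to be the only step requiring real care: one must guess the correct weights in the two convex combinations, and it is exactly the resulting inequality $c=\tfrac{1+\lambda}{1-\lambda}\le\tfrac32$ — equivalently $\lambda\le\tfrac15$ — that ties the hypothesis on $\lambda$ to the dilation factor $4$ in the conclusion. A minor technical point is that $B_x$ may be lower-dimensional when $x\in\bd K$, in which case $\|\cdot\|_x$ is $+\infty$ off a subspace; the displayed inclusions and the final estimate still hold verbatim, since every vector that appears lies in a bounded multiple of $B_x$, so the argument is unaffected.
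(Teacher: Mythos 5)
Your proof is correct: both inclusions in the monotonicity estimate $(1-\lambda)B_x \subseteq B_z \subseteq (1+\lambda)B_x$ are verified by legitimate convex combinations of points of $K$, and the final bound $\frac{2c+1}{4}\le 1$ with $c=\frac{1+\lambda}{1-\lambda}$ for $\lambda\le\frac15$ correctly places $w$ in $M^{4\lambda}(x)$ (and your remark about degenerate $B_x$ for boundary points is handled, since the set inclusions alone suffice). The paper defers this proof to the cited reference, where the standard argument (going back to Ewald--Larman--Rogers and Br{\"o}nnimann--Chazelle--Pach) is the same convex-combination bookkeeping; your write-up follows essentially that route, just with the comparability of $B_x$, $B_y$, $B_z$ cleanly isolated as a separate step.
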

%-----------------------------------------------------------------------

%-----------------------------------------------------------------------
\begin{lemma} \label{lem:mac-cap-var}
Let $K$ be a convex body and $\lambda > 0$. If $x$ is a point in a cap $C$ of $K$, then $M^\lambda(x) \cap K \subseteq C^{1+\lambda}$.
\end{lemma}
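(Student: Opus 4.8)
The plan is to unwind the definitions and reduce everything to a one-dimensional inequality along the direction normal to the base of $C$. Write $C = \pcap{K}{h_1}$, let $h_0$ be a supporting hyperplane of $K$ parallel to $h_1$ on the apex side, and let $w = \dist(h_0, h_1)$ be the absolute width of $C$. I would choose a unit vector $e$ normal to $h_1$ pointing from $h_1$ toward $h_0$, and set $a = \max_{z \in K} \ang{z, e}$, so that $h_0 = \{z : \ang{z,e} = a\}$ and $h_1 = \{z : \ang{z,e} = a - w\}$. By the definition of cap expansion, $C^{1+\lambda} = \{z \in K : \ang{z, e} \ge a - (1+\lambda) w\}$, with the understanding that this set equals $K$ when $a - (1+\lambda) w$ falls below $\min_{z \in K} \ang{z,e}$.

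Next I would take an arbitrary point $y \in M^\lambda(x) \cap K$ and translate the Macbeath membership into an affine relation. Since $M^\lambda(x) = x + \lambda\big((K - x) \cap (x - K)\big)$, in particular $y - x \in \lambda(x - K)$, so $y = (1+\lambda) x - \lambda k$ for some $k \in K$. Then I would simply compute the height of $y$ in direction $e$: using $x \in C$, which gives $\ang{x, e} \ge a - w$, and $k \in K$, which gives $\ang{k, e} \le a$, we obtain
\[
	\ang{y, e} \;=\; (1+\lambda)\ang{x, e} - \lambda\ang{k, e} \;\ge\; (1+\lambda)(a - w) - \lambda a \;=\; a - (1+\lambda) w.
\]
Combined with $y \in K$, this is precisely the statement that $y \in C^{1+\lambda}$, which establishes the lemma.

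There is essentially no real obstacle here; the two points needing a moment's care are (i) correctly extracting the affine relation $y = (1+\lambda) x - \lambda k$ from $y \in M^\lambda(x)$ — it is the $(x - K)$ factor in the definition of the Macbeath region that is used, not the $(K - x)$ factor — and (ii) observing that the degenerate case in which the expanded hyperplane no longer cuts $K$ is harmless, since then $C^{1+\lambda} = K \supseteq M^\lambda(x) \cap K$ trivially. No properties of caps or Macbeath regions beyond their definitions are needed, consistent with the remark that the lemma is an immediate consequence of the definition.
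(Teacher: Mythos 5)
Your proof is correct and matches the paper's treatment: the paper simply remarks that this lemma is an immediate consequence of the definition of Macbeath regions, and your computation — writing $y = (1+\lambda)x - \lambda k$ from the $(x-K)$ factor and bounding $\ang{y,e}$ against the base and supporting hyperplanes of $C$ — is exactly the natural unwinding of that remark, including the harmless degenerate case where $C^{1+\lambda} = K$.
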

%-----------------------------------------------------------------------

The next three lemmas relate the volume of caps and associated Macbeath regions. 

%-----------------------------------------------------------------------
\begin{lemma}[B{\'a}r{\'a}ny~\cite{Bar07}] \label{lem:min-vol-cap1}
Given a convex body $K \subseteq \RE^d$, let $C$ be a $\frac{1}{3}$-shallow cap of $K$, and let $p$ be the centroid of $\base(C)$. Then $C \subseteq M^{2d}(p)$.
\end{lemma}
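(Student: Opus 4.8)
The plan is to fix an arbitrary point $x \in C$ and verify the two containments that make up $M^{2d}(p)$ by hand, with essentially all the work in one of them. First I would normalize coordinates so that the supporting hyperplane $h_0$ at the apex is $\{z_d = 0\}$ with $K \subseteq \{z_d \ge 0\}$, and $h_1 = \{z_d = w\}$, where $w = \dist(h_0,h_1)$ is the absolute width of $C$; writing $h = \dist(O,h_0)$ for the $z_d$-coordinate of the origin, the $\frac{1}{3}$-shallowness hypothesis says exactly $w/h = \width(C) \le \frac{1}{3}$, i.e.\ $h \ge 3w$. Since $x \in C$ we have $z_d(x) = t$ with $0 \le t \le w$. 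Now $x \in M^{2d}(p) = p + 2d\big((K-p)\cap(p-K)\big)$ is equivalent to $\frac{1}{2d}(x-p) \in (K-p)\cap(p-K)$, and the containment in $K-p$ is free: $p + \frac{1}{2d}(x-p) = (1-\frac{1}{2d})p + \frac{1}{2d}x \in K$ by convexity, as $p,x \in K$ and $\frac{1}{2d} \le 1$. So the entire task reduces to showing that $q := p - \frac{1}{2d}(x-p) \in K$.

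For that I would route through the base hyperplane. Let $b$ be the point where the segment $\overline{xO}$ meets $h_1$; since $t \le w \le h$ this is an honest convex combination, $b = (1-\nu)x + \nu O$ with $\nu = \frac{w-t}{h-t}$, and a one-line check gives $0 \le \nu \le \frac{w}{h} \le \frac{1}{3}$. Being on a segment between points of $K$ and on $h_1$, we get $b \in K \cap h_1 = \base(C)$. Next I would invoke the classical ``deep centroid'' fact: the centroid $c$ of an $n$-dimensional convex body $B$ satisfies $B \supseteq c - \frac{1}{n}(B-c)$ (tight for a simplex). Applied to the $(d-1)$-dimensional body $\base(C)$ with centroid $p$, this yields $b^- := p - \frac{1}{d-1}(b-p) \in \base(C) \subseteq K$, i.e.\ $b = d\,p - (d-1)\,b^-$. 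Solving $b = (1-\nu)x + \nu O$ for $x$, substituting into $q = p - \frac{1}{2d}(x-p)$, and then eliminating $b$ with the last identity, a short computation gives
\[
	q ~=~ \frac{1}{\rho}\Big(\big(d+1-(2d+1)\nu\big)\,p ~+~ (d-1)\,b^- ~+~ \nu\,O\Big),
	\qquad \rho = 2d(1-\nu) > 0,
\]
where $\rho$ equals the sum of the three bracketed coefficients, so $q$ is an affine combination of $p, b^-, O \in K$.

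It then remains only to check that the coefficients are nonnegative, and this is exactly where the shallowness is spent: $d-1 \ge 0$ and $\nu \ge 0$ are obvious, while $d+1-(2d+1)\nu \ge 0$ holds because $\nu \le \frac{1}{3} \le \frac{d+1}{2d+1}$ for every $d \ge 1$. Hence $q$ is a convex combination of points of $K$, so $q \in K$; together with the trivial containment above this gives $x \in M^{2d}(p)$, and since $x \in C$ was arbitrary, $C \subseteq M^{2d}(p)$. The one genuinely nonobvious step — the part I expect to be the real obstacle to discovering the argument — is the choice of the auxiliary point $b$: one reaches the ``antipode'' $q$ of $x$ about $p$ by walking the origin-to-$x$ sightline out to the base hyperplane and then reflecting that base point through the centroid $p$ (scaled by $\frac{1}{d-1}$), and it is precisely the $h \ge 3w$ slack that keeps the $p$-coefficient nonnegative after this substitution. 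Everything else — the coordinate normalization, the bound $\nu \le \frac{1}{3}$, and the coefficient bookkeeping — is routine, the only imported ingredient being the deep-centroid lemma; the degenerate case $d=1$ (where $\base(C)$ is a point and the $b^-$ term simply drops out) is handled by the same computation.
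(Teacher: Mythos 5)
Your argument is correct. Note that the paper itself gives no proof of this lemma: it is stated as a citation to B\'ar\'any, with the omitted proofs deferred to the companion reference, so there is no in-paper argument to compare against line by line. Your derivation is a valid, self-contained one: the containment $p+\frac{1}{2d}(x-p)\in K$ is indeed trivial by convexity, and the substance is the verification that $q=p-\frac{1}{2d}(x-p)\in K$, which you obtain by intersecting the segment $xO$ with the base hyperplane, reflecting that base point through the centroid via the Minkowski--Radon bound ($B\supseteq c-\frac{1}{n}(B-c)$ for an $n$-dimensional convex body), and checking that $q$ is then a convex combination of $p$, $b^-$, and $O$; the coefficient bookkeeping ($\rho=2d(1-\nu)$, coefficients $d+1-(2d+1)\nu$, $d-1$, $\nu$) is correct, and the shallowness enters exactly as you say, through $\nu\le\width(C)\le\frac{1}{3}\le\frac{d+1}{2d+1}$. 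Two minor points worth flagging, neither a gap: you implicitly use that the origin lies in $\interior(K)$, which is the paper's standing assumption (and is needed for shallowness to be defined at all); and in degenerate cases $\base(C)$ may have dimension less than $d-1$, but the centroid bound only improves then, so the same computation goes through. Incidentally, since your constraint is $\nu\le\frac{d+1}{2d+1}$, the identical argument proves the containment for $\frac{1}{2}$-shallow caps, slightly stronger than the stated lemma.
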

%-----------------------------------------------------------------------

%-----------------------------------------------------------------------
\begin{lemma}
\label{lem:wide-cap}
Let $0 < \beta < 1$ be any constant. Let $K \subseteq \RE^d$ be a well-centered convex body, $p \in K$, and $C$ be the minimum volume cap associated with $p$. If $C$ contains the origin or $\width(C) \geq \beta$, then $\vol_K(M(p)) = \Omega(1)$.
\end{lemma}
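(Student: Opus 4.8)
The plan is to argue that a well-centered body cannot have too small a Macbeath region at a point $p$ whose minimum volume cap $C$ is not shallow. The key intuition is that the minimum volume cap $C$ associated with $p$ captures a constant fraction of the combinatorial ``scale'' at $p$, and when $C$ is not shallow (either it contains the origin, or $\width(C) \ge \beta$), that scale is comparable to the whole body. Since $K$ is well-centered, its Mahler volume $\mu(K) = \vol(K)\cdot\vol(K^*)$ is $O(1)$, and by Lemma~\ref{lem:mahler-bounds} it is also $\Omega(1)$; this two-sided control is what prevents $K$ from being degenerate in a way that would shrink $M(p)$.

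First I would recall the standard fact (which is essentially how minimum volume caps are used in the Macbeath literature, cf.\ Lemma~\ref{lem:min-vol-cap1}) that if $C$ is the minimum volume cap through $p$ and $q$ is the centroid of $\base(C)$, then $C$ and $M(p)$ have comparable volumes up to a factor depending only on $d$; more precisely $\vol(C) = \Theta(\vol(M(p)))$, with the lower direction $\vol(M(p)) = \Omega(\vol(C))$ being the one we need. So it suffices to show $\vol_K(C) = \Omega(1)$. Next, I would split into the two hypothesized cases. If $\width(C) \ge \beta$, then $C$ is a cap whose bounding hyperplane is at relative distance at least $\beta$ from the supporting hyperplane; using the relationship $\width_K(C) = \dist(h_1,h_0)/\dist(O,h_0)$ and convexity (scaling the apex cone), one sees that $C$ contains a scaled copy of a cone with apex at the apex of $C$ and base a fixed fraction of $\base(C)$, so $\vol(C)$ is at least a constant fraction of the volume of the cone spanned by the apex and the central cross-section of $K$; that cone in turn has volume $\Omega(\vol(K))$ because $K$ is well-centered (a well-centered body is ``fat'' in the affine sense — its John ellipsoid, suitably interpreted, is balanced about the origin, so no halfspace through a near-central hyperplane can cut off a subconstant fraction). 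The case where $C$ contains the origin is even easier: then $C$ is a cap with relative width $\ge \frac12$ (a supporting-to-origin hyperplane sweep), and the same cone argument applies a fortiori.

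The step I expect to be the main obstacle is making rigorous the claim ``well-centered $\Rightarrow$ a cap of relative width $\ge \beta$ has relative volume $\Omega_\beta(1)$.'' The clean way to do this is via polarity: a cap $C = \pcap{K}{h_1}$ of relative width $w$ corresponds under the polar map to a point or region near $\bd K^*$ at relative ray distance controlled by $w$, and bounding $\vol_K(C)$ from below translates into bounding how much of $K^*$ lies ``deep inside,'' which is exactly controlled by the upper bound $\vol(K^*) = O(1/\vol(K))$ coming from $\mu(K) = O(1)$ together with the lower bound $\mu(K) = \Omega(1)$. So I would set up the computation on the polar side: express $\vol(C)$ (equivalently $1 - \vol_K(K \setminus C)$) in terms of an integral over directions of the radial function of $K$, observe that $\width(C) \ge \beta$ forces the radial function of $K$ restricted to the base hyperplane's normal cone to be bounded below by a constant times $\dist(O,h_0)$, and then use well-centeredness to conclude $\dist(O, h_0)$ is within a constant factor of the ``typical'' radius of $K$. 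Finally, I would assemble: $\vol_K(M(p)) = \Omega(\vol_K(C)) = \Omega_\beta(1)$, where all hidden constants depend only on $d$ and $\beta$, completing the proof.
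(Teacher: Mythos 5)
Note first that the paper itself does not prove Lemma~\ref{lem:wide-cap}; it is imported from \cite[Section~2.5]{AFM23arxiv}, so your argument has to stand on its own, and as written it has a genuine gap at its first step. You reduce everything to showing $\vol_K(C)=\Omega(1)$ by asserting $\vol(M(p))=\Omega(\vol(C))$ for the minimum-volume cap, citing the way Lemma~\ref{lem:min-vol-cap1} is normally used. But B{\'a}r{\'a}ny's containment $C\subseteq M^{2d}(p)$, and hence Lemma~\ref{lem:min-vol-cap2}, is valid only for $\frac13$-shallow caps, and the hypotheses of Lemma~\ref{lem:wide-cap} place you exactly in the opposite regime ($C$ contains the origin, or $\width(C)\ge\beta$ with $\beta$ possibly close to $1$). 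For wide caps the comparison genuinely fails without minimality: the centroid of the base of a cap comprising almost all of $K$ can lie arbitrarily close to $\bd K$, where $M(p)$ is tiny. What rescues the statement is precisely the minimality of $C$, i.e.\ that \emph{every} hyperplane through $p$ leaves $\Omega(1)\cdot\vol(K)$ on each side---but that is essentially the content of the lemma, so asserting $\vol(C)=\Theta(\vol(M(p)))$ here is close to assuming the conclusion. A correct replacement for this step: work in John position (all quantities involved are affine invariant); if the minimum-volume cap through $p$ has volume at least $t\,\vol(K)$, then the cap cut by the hyperplane through $p$ parallel to the supporting hyperplane at the nearest boundary point lies in a slab of width $\dist(p,\bd K)$, forcing $\dist(p,\bd K)=\Omega_d(t)$; hence $M(p)$ contains a Euclidean ball of volume $\Omega_d(t^d)\,\vol(K)$, which suffices once $t=\Omega(1)$.

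The second half, which you yourself flag as the main obstacle---that for a well-centered body a cap of relative width at least $\beta$, or one containing the origin, has relative volume $\Omega_\beta(1)$---is indeed where well-centeredness must enter (for $K=[-1,N]\times[-1,1]^{d-1}$ with the origin at $0$, constant-width caps have relative volume $O(1/N)$), but your supporting heuristics are not correct as stated: well-centeredness does not imply the John ellipsoid is ``balanced about the origin,'' and the cone spanned by the apex and a central cross-section need not capture $\Omega(\vol(K))$ without some such balancing fact. Rather than setting up a radial-function integral from scratch, you can close this with machinery already in the paper: a cap of width at least $\beta$ contains a cap of width $\min(\beta,\frac{1}{16})$, and Lemma~\ref{lem:mahler-mac}(ii) combined with the upper bound of Lemma~\ref{lem:vol-mac-bounds}(ii) yields relative volume $\Omega(\beta^d)$ for such caps (this is exactly how the lower bound in Lemma~\ref{lem:vol-mac-bounds}(i) is derived there); a cap containing the origin contains a constant-width cap, so the same bound applies. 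With these two repairs your overall skeleton---lower-bound the minimum cap volume, then transfer to the Macbeath region---does go through.
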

%-----------------------------------------------------------------------

%-----------------------------------------------------------------------
\begin{lemma} \label{lem:min-vol-cap2}
Given a convex body $K \subseteq \RE^d$, let $C$ be a $\frac{1}{3}$-shallow cap of $K$, and let $p$ be the centroid of $\base(C)$. Then $\vol(M(p)) = \Theta(\vol(C))$.
\end{lemma}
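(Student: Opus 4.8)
The plan is to establish the two directions $\vol(M(p)) = O(\vol(C))$ and $\vol(M(p)) = \Omega(\vol(C))$ separately, each as a short deduction from results already available, so that the asymptotic equality follows by combining them. The guiding idea is that $C$ and $M(p)$ each nest inside a constant-factor relative of the other, and that the relevant volume inflations cost only factors depending on $d$.

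For the lower bound I would apply Lemma~\ref{lem:min-vol-cap1} (B{\'a}r{\'a}ny): since $C$ is $\frac{1}{3}$-shallow and $p$ is the centroid of $\base(C)$, that lemma yields $C \subseteq M^{2d}(p)$. As $M^{2d}(p)$ is merely the copy of $M(p) = M^1(p)$ scaled about its center $p$ by the factor $2d$, we get $\vol(M^{2d}(p)) = (2d)^d \vol(M(p))$, hence $\vol(C) \le (2d)^d \vol(M(p))$; that is, $\vol(M(p)) = \Omega(\vol(C))$ with the hidden constant depending only on $d$. This is the only place the $\frac{1}{3}$-shallowness hypothesis is used.

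For the upper bound I would first note that $p$, being the centroid of the convex set $\base(C)$, lies in $\base(C) \subseteq C$. Applying Lemma~\ref{lem:mac-cap-var} with $x = p$ and $\lambda = 1$ gives $M(p) \cap K \subseteq C^{2}$, and since $M(p) = M^1(p) = K \cap (2p - K) \subseteq K$ by the very definition of the Macbeath region, intersecting with $K$ is vacuous, so $M(p) \subseteq C^{2}$. Then Lemma~\ref{lem:cap-exp} with $\lambda = 2$ gives $\vol(M(p)) \le \vol(C^{2}) \le 2^d \vol(C)$, i.e. $\vol(M(p)) = O(\vol(C))$. Combining the two bounds yields $\vol(M(p)) = \Theta(\vol(C))$.

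I do not anticipate any genuine obstacle: the statement is essentially a repackaging of B{\'a}r{\'a}ny's containment lemma together with the elementary cap-expansion volume estimate. The only points that need a line of justification are that $p \in C$ (so that Lemma~\ref{lem:mac-cap-var} applies) and that $M(p) \subseteq K$ (so that the intersection with $K$ in that lemma drops out); both are immediate.
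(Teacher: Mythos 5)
Your proof is correct. The paper itself does not reprint a proof of this lemma (it defers to the cited reference \cite{AFM23arxiv}), but the standard argument is essentially yours: the lower bound is exactly your application of B{\'a}r{\'a}ny's Lemma~\ref{lem:min-vol-cap1}, giving $C \subseteq M^{2d}(p)$ and hence $\vol(C) \le (2d)^d \vol(M(p))$. The only divergence is in the upper bound: the usual route --- which this paper invokes elsewhere, e.g.\ in the proof of Lemma~\ref{lem:bipartite} (``half of the Macbeath region $M_x$ lies inside $C_x$'') --- uses the central symmetry of $M(p)$ about $p$: since $p$ lies on the base hyperplane and $M(p) \subseteq K$, the half of $M(p)$ on the far side of the base is contained in $C$, so $\vol(M(p)) \le 2\,\vol(C)$ with a dimension-free constant. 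Your detour through Lemma~\ref{lem:mac-cap-var} with $\lambda = 1$ followed by Lemma~\ref{lem:cap-exp} is equally valid and yields $\vol(M(p)) \le 2^d \vol(C)$; since all hidden constants here may depend on $d$, the two are interchangeable, the symmetry argument being marginally sharper and needing less machinery. The side conditions you flag ($p \in \base(C) \subseteq C$, and $M(p) \subseteq K$ so the intersection with $K$ in Lemma~\ref{lem:mac-cap-var} is vacuous) are indeed the only points to check, and your justifications of them are correct.
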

%-----------------------------------------------------------------------

In the next lemma, we show that the width of the minimum volume cap for $p$ is within a constant factor of the ray distance of $p$.

%-----------------------------------------------------------------------
\begin{lemma} \label{lem:min-vol-cap3} 
Let $K$ be a convex body, $p \in K$, and $C$ be the minimum volume cap associated with $p$. If $C$ is $\frac{1}{3}$-shallow, then $\width(C) \leq (2d+1) \cdot \ray(p)$.
\end{lemma}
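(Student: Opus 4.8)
The plan is to relate the minimum volume cap $C$ of $p$ to the minimum width cap $C'$ of $p$, whose width equals $\ray(p)$ by Lemma~\ref{lem:min-width-cap}. Since $C$ minimizes volume, we have $\vol(C) \le \vol(C')$; and since $C'$ has small width (at least as small as $C$, hence $C'$ is also shallow, being contained in a cap of width $\ray(p) \le \width(C) \le \frac13$), I can control $\vol(C')$ via the Macbeath-region machinery. The key is to produce a point whose Macbeath region is comparable to $C$ on one side and forces $\width(C)$ to be small on the other.

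First I would let $q$ be the centroid of $\base(C)$. By Lemma~\ref{lem:min-vol-cap2}, $\vol(M(q)) = \Theta(\vol(C))$, and by Lemma~\ref{lem:min-vol-cap1}, $C \subseteq M^{2d}(q)$. Next, since $p$ lies on $\base(C)$ (the base of the minimum volume cap passes through $p$), and $q$ is the centroid of that same base, I want to compare $M(p)$ with $M(q)$. The point $p$ and the centroid $q$ both lie in $\base(C) \subseteq C$, and both Macbeath regions $M^{1/5}(p)$, $M^{1/5}(q)$ are contained in $C^2$ by Lemma~\ref{lem:mac-cap} (choosing $C$ itself as the overlapping cap, since $p, q \in C$). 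The idea is then to argue that $M^{1/5}(p)$ and $M^{1/5}(q)$ overlap — or more carefully, to bound the ``width'' direction: the cap $C$ has width $\width(C)$, its $2d$-expansion $C^{2d}$ has absolute width $2d$ times larger, and $M(q) \supseteq$ a region comparable to $C$, so a constant-factor-expanded Macbeath region around $q$ still lives in a cap whose width is $O(d \cdot \width(C))$. Reversing the roles: using Lemma~\ref{lem:min-width-cap}, $\ray(p) = \width(C')$ where $C'$ is the min-width cap through $p$; I then want $C \subseteq (C')^{O(d)}$ or a volume comparison showing $\width(C) = O(d) \cdot \width(C')$.

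More concretely, here is the cleanest route I would pursue. Let $p_0$ be the intersection of ray $Op$ with $\bd K$, let $h_0$ be a supporting hyperplane at $p_0$, and let $C'$ be the cap through $p$ parallel to $h_0$, so $\width(C') = \ray(p)$ by Lemma~\ref{lem:min-width-cap}. Since $p \in \base(C) \cap \base(C')$ and $C$ is the minimum volume cap, $\vol(C) \le \vol(C')$. Now $C'$ is $\frac13$-shallow (its width is $\ray(p) \le \width(C) \le \frac13$ by Lemma~\ref{lem:raydist-width}), so letting $q'$ be the centroid of $\base(C')$, Lemma~\ref{lem:min-vol-cap2} gives $\vol(M(q')) = \Theta(\vol(C')) = \Theta(\vol(C))$ up to the inequality, while $C' \subseteq M^{2d}(q')$ by Lemma~\ref{lem:min-vol-cap1}. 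Meanwhile $p \in C'$ and $p \in M^{2d}(q')$? Not immediately — instead I compare $C$ directly: $C$ is a cap through $p$ with $p \in \base(C)$; apply Lemma~\ref{lem:min-vol-cap1} to $C$ with its base-centroid $q$ to get $C \subseteq M^{2d}(q)$, and by Lemma~\ref{lem:mac-cap-var} (with the cap being $C'^{1+2d}$, once I check $q \in C'^{1+2d}$ using $q \in C \subseteq$ a bounded expansion of $C'$), conclude $M^{2d}(q) \subseteq C'^{O(d)}$, hence $C \subseteq C'^{O(d)}$. A cap containment $C \subseteq C'^{\lambda}$ with $\lambda = O(d)$ together with $\width(C') = \ray(p)$ yields $\width(C) \le \lambda \cdot \width(C') = O(d)\cdot\ray(p)$ via Lemma~\ref{lem:cap-containment-width} and the fact that expansion scales absolute width linearly while $\dist(O,h_0)$ changes by a bounded factor.

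The main obstacle I anticipate is the step establishing $q \in C'^{O(d)}$ — i.e., controlling where the base-centroid $q$ of the minimum volume cap sits relative to the minimum width cap $C'$. This requires knowing that $C$ does not ``stick out'' far past $C'$ in the radial direction, which should follow because $\vol(C) \le \vol(C')$ forces $C$ to be ``flat'' — a very wide-but-thin cap has large volume — so $C$'s apex cannot be much deeper than $C'$'s. Quantifying this flatness, i.e.\ turning the volume inequality into a width/depth bound with only an $O(d)$ loss, is the technical heart of the argument; once $C \subseteq C'^{O(d)}$ is in hand, Lemma~\ref{lem:cap-containment-width} closes the proof routinely.
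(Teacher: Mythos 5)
There is a genuine gap, and you in fact point to it yourself: the step ``$q \in C'^{O(d)}$'' (locating the base-centroid $q$ of the minimum volume cap relative to the minimum width cap) is never established, and your proposed fix --- extracting it from $\vol(C) \le \vol(C')$ --- is circular as you set it up (you need $C$ inside a bounded expansion of $C'$ to place $q$, which is what you are trying to prove) and is left entirely unquantified. The missing idea is a classical fact about minimum volume caps, which the paper invokes with a citation: if $C$ has minimum volume among all caps whose base passes through $p$, then $p$ \emph{is} the centroid of $\base(C)$. So your $q$ and $p$ coincide, and the whole difficulty you identify as the ``technical heart'' evaporates.

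With that fact, the paper's argument is exactly the route you sketched, made non-circular: assume WLOG $\ray(p) \le 1/(3(2d+1))$ (otherwise the bound is trivial); by Lemma~\ref{lem:min-vol-cap1} applied at $p$ (the base centroid), $C \subseteq M^{2d}(p)$, hence $C \subseteq M^{2d}(p) \cap K$; since $p$ lies in the minimum width cap $W$ of $p$, Lemma~\ref{lem:mac-cap-var} gives $M^{2d}(p) \cap K \subseteq W^{2d+1}$, so $C \subseteq W^{2d+1}$; by Lemma~\ref{lem:min-width-cap}, $\width(W) = \ray(p)$, and expansion scales relative width exactly linearly (the supporting hyperplane is unchanged), so $\width(W^{2d+1}) = (2d+1)\ray(p) \le 1/3$; finally Lemma~\ref{lem:cap-containment-width} (both caps being shallow, hence origin-free) yields $\width(C) \le (2d+1)\ray(p)$. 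Note also that your fallback strategy would at best give an unspecified $O(d)$ factor, not the stated constant $2d+1$, and it is doubtful it can be carried out without the centroid property; so as written the proposal does not constitute a proof.
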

%-----------------------------------------------------------------------

%-----------------------------------------------------------------------
\begin{proof}
We may assume that $\ray(p) \leq 1/(3(2d+1))$, since otherwise the lemma holds trivially. By a well-known property of minimum volume caps, $p$ is the centroid of the base of $C$~\cite{ELR70}. By Lemma~\ref{lem:min-vol-cap1}, we have $C \subseteq M^{2d}(p)$. By definition, $C \subseteq K$, and so $C \subseteq M^{2d}(p) \cap K$. Applying Lemma~\ref{lem:mac-cap-var} to point $p$ and the minimum width cap $W$ for $p$, we have $M^{2d}(p) \cap K \subseteq W^{2d+1}$. Thus, $C \subseteq W^{2d+1}$. By Lemma~\ref{lem:min-width-cap}, $\width(W) = \ray(p)$, and so $\width(W^{2d+1}) = (2d+1) \ray(p)\leq 1/3$. Since $C$ and $W^{2d+1}$ are both $(1/3)$-shallow, and $C \subseteq W^{2d+1}$, it follows from Lemma~\ref{lem:cap-containment-width} that $\width(C) \leq \width(W^{2d+1})$. Thus $\width(C) \leq (2d+1) \ray(p)$, as desired.
\end{proof}
%-----------------------------------------------------------------------

The next lemma states lower and upper bounds on the relative volume of a Macbeath region based on the width of the associated cap or the ray distance of its center.

%-----------------------------------------------------------------------
\begin{lemma} \label{lem:vol-mac-bounds}
Let $\eps > 0$ be sufficiently small and let $K \subseteq \RE^d$ be a well-centered convex body. Then:
\begin{enumerate}
\item[$(i)$] Let $M$ be a Macbeath region centered at the centroid of the base of a cap $C \subseteq K$ of width $\eps$. Then $\vol_K(M) = O(\eps)$ and $\vol_K(M) = \Omega(\eps^d)$.
\item[$(ii)$] Let $M$ be a Macbeath region centered at a point $x \in K$ whose ray distance is $\eps$. Then $\vol_K(M) = O(\eps)$ and $\vol_K(M) = \Omega(\eps^d)$.
\end{enumerate}
\end{lemma}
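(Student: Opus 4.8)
The plan is to prove part $(i)$ and then reduce part $(ii)$ to it. For part $(i)$, since $\eps$ is sufficiently small the cap $C$ is $\frac13$-shallow, so Lemma~\ref{lem:min-vol-cap2} gives $\vol(M) = \Theta(\vol(C))$ (here $M = M(p)$ with $p$ the centroid of $\base(C)$), and it suffices to prove $\vol_K(C) = O(\eps)$ and $\vol_K(C) = \Omega(\eps^d)$. Fix the supporting hyperplane $h_0$ parallel to $\base(C)$, write $\ell = \dist(O,h_0)$, so $C$ has absolute width $w = \eps\ell$; parametrize $K$ by ``depth'' (distance from $h_0$), and let $f(t)$ be the $(d-1)$-dimensional volume of the slice of $K$ at depth $t$, so that $\vol(C) = \int_0^w f$ while $O$ sits at depth $\ell$.

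For the upper bound, choose $t_1 \in [0,w]$ maximizing $f$ on $[0,w]$ and let $S$ be the slice at depth $t_1$, so $\vol(C) \le w\,f(t_1)$; since $O \in K$ and $S \subseteq K$, the cone $\conv(\{O\}\cup S)\subseteq K$ has base $S$ and height $\ell - t_1 \ge (1-\eps)\ell \ge \ell/2$, whence $\vol(K) \ge \frac{1}{2d}f(t_1)\,\ell$ and $\vol_K(C) \le 2dw/\ell = 2d\eps$ (this direction uses only $O\in\interior(K)$). For the lower bound, consider the expansion $C^{1/(2\eps)}$: its absolute width is $\frac{1}{2\eps}w = \ell/2$, so its relative width is $\frac12$. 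By Lemma~\ref{lem:cap-exp}, $\vol(C^{1/(2\eps)}) \le (2\eps)^{-d}\vol(C)$, i.e.\ $\vol_K(C) \ge (2\eps)^{d}\,\vol_K(C^{1/(2\eps)})$, so the lower bound $\Omega(\eps^d)$ follows once we know that \emph{a cap of a well-centered body of relative width at least a constant $\beta$ has relative volume $\Omega(1)$} (with the constant depending on $\beta$); I take this as a standard property of well-centered bodies.

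For part $(ii)$, let $C_x$ be the minimum-volume cap associated with $x$; it is classical (and used already in the proof of Lemma~\ref{lem:min-vol-cap3}) that $x$ is the centroid of $\base(C_x)$. By Lemma~\ref{lem:min-width-cap} the minimum-width cap through $x$ has width $\ray(x) = \eps$ and is width-minimal among all caps containing $x$, so $\width(C_x) \ge \eps$. I claim $C_x$ is $\frac13$-shallow: otherwise it contains the width-$\frac13$ subcap with the same apex, which by the property cited above has relative volume $\Omega(1)$, so $\vol_K(C_x) = \Omega(1)$; but $C_x$ has volume no larger than the (width-$\eps$) minimum-width cap through $x$, to which the part-$(i)$ upper bound applies, giving $\vol_K(C_x) = O(\eps)$ --- a contradiction for $\eps$ small. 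Hence $C_x$ is $\frac13$-shallow, so Lemma~\ref{lem:min-vol-cap3} gives $\width(C_x) \le (2d+1)\eps$, i.e.\ $\width(C_x) = \Theta(\eps)$, and applying part $(i)$ to the cap $C_x$ (whose base-centroid is $x$) yields $\vol_K(M) = O(\eps)$ and $\vol_K(M) = \Omega(\eps^d)$.

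The crux --- and essentially the only place where well-centeredness is genuinely used --- is the claim that a cap of relative width $\Theta(1)$ of a well-centered body has relative volume $\Omega(1)$; without it the lower bounds are false (e.g.\ a simplex with $O$ near a vertex). The natural route is through the bounded ``width ratio'' $h_K(n) \le c_d\,h_K(-n)$ for every unit $n$, which should follow from the Mahler-volume bounds of Lemmas~\ref{lem:centroid} and~\ref{lem:mahler-bounds} (for a cone one computes that the Mahler volume is proportional to the width ratio, so a large ratio is incompatible with well-centeredness), combined with concavity of $f^{1/(d-1)}$: concavity pins $\int_{\ell/2}^{\ell} f$ between constant multiples of $f(0)\,\ell$ and bounds $\vol(K) = \int_{-h_K(-n)}^{\ell} f$ by $O(f(0)(\ell + h_K(-n))) = O(f(0)\,\ell)$. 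Making this width-ratio estimate fully rigorous is the step I expect to require the most care; alternatively one may simply invoke it as a known property of well-centered bodies from prior work~\cite{AFM23}.
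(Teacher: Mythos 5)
Your upper bounds are fine (the cone/slice argument for part $(i)$ and the reduction of $(ii)$ to $(i)$ via the minimum-volume cap and Lemma~\ref{lem:min-vol-cap3} are both correct), but the proof has a genuine gap at exactly the point you flag: the claim that \emph{every} cap of a well-centered body of relative width at least a constant $\beta$ has relative volume $\Omega(1)$. This is the entire content of the lower bound (without it, as you note, the $\Omega(\eps^d)$ bound is false), and you neither prove it nor can you simply ``take it as standard'': it is not among the stated lemmas, and the closest statement, Lemma~\ref{lem:wide-cap}, applies only to the \emph{minimum-volume cap of a point} and concludes about $M(p)$, not about an arbitrary wide cap. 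Your proposed route via a bounded width ratio $h_K(n) \le c_d\,h_K(-n)$ derived from the Mahler bounds is only a heuristic (the cone computation does not by itself give the implication for general bodies), and your part-$(ii)$ argument reuses the same unproven claim (for the width-$\frac13$ subcap). The gap is fillable with the paper's own tools, but one has to choose the witness point correctly: given a cap $C'$ of width $\frac12$ with apex $a$, take $p$ to be the midpoint of the segment $Oa$; then $p \in \base(C')$ and $\ray(p) = \frac12$ exactly, so by Lemma~\ref{lem:raydist-width} the minimum-volume cap of $p$ either contains the origin or has width at least $\frac12$, Lemma~\ref{lem:wide-cap} gives $\vol_K(M(p)) = \Omega(1)$, and Lemmas~\ref{lem:mac-cap-var} and~\ref{lem:cap-exp} transfer this to $\vol_K(C') = \Omega(1)$. (Note that your instinct to route this through the base centroid would not work so cleanly, since the ray distance of the base centroid of a wide cap is not obviously bounded below.)

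It is also worth noting that your route, once repaired, is genuinely different from the paper's: the paper proves the $\Omega(\eps^d)$ bound by passing to the polar body, invoking the Mahler-type reciprocity $\vol_K(C)\cdot\vol_{K^*}(M') = \Omega(\eps^{d+1})$ of Lemma~\ref{lem:mahler-mac}$(ii)$ and then applying the $O(\eps)$ upper bound of part $(ii)$ inside $K^*$, whereas you stay entirely in the primal and use only cap expansion (Lemma~\ref{lem:cap-exp}) together with the wide-cap property of well-centered bodies. The primal argument is more elementary and self-contained once the missing step above is supplied; the paper's polar argument has the advantage of reusing machinery (Lemma~\ref{lem:mahler-mac}) that is needed elsewhere anyway and of exhibiting the cap/Macbeath duality that drives the rest of the paper. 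As submitted, however, the crux of the lower bound is asserted rather than proven, so the proposal is incomplete.
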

%-----------------------------------------------------------------------

%-----------------------------------------------------------------------
\begin{proof}
By Lemma~\ref{lem:mac-cap-var}, $M \subseteq C^2$. Also, $C^2 \subseteq S_K$, where $S_K = K \setminus (1-2\eps)K$. Thus
\[
    \vol_K(M) 
        ~ \leq ~ \vol_K(C^2) 
        ~ \leq ~ \vol_K(S_K) 
        ~ =    ~ 1 - (1-2\eps)^d 
        ~ =    ~ O(\eps).
\]
Similarly, in part (ii), by considering the cap defined by the supporting hyperplane of $K(1-\eps)$ at $x$, we can show that $\vol_K(M) = O(\eps)$.

Next, we show the lower bound on $\vol_K(M)$ in part (i). Let $y$ denote the point $\psi(C) \in \stdpolar{K}$ and let $M'$ denote the Macbeath region $M^{1/5}_{\stdpolar{K}}(y)$. Note that $\ray_{\stdpolar{K}}(y) = \eps$. As the cap $C$ and Macbeath region $M'$ satisfy the conditions of Lemma~\ref{lem:mahler-mac}, we have $\vol_K(C) \cdot \vol_{\stdpolar{K}}(M') = \Omega(\eps^{d+1})$. By the upper bound in (ii), we have $\vol_{\stdpolar{K}}(M') = O(\eps)$ and thus $\vol_K(C) = \Omega(\eps^d)$. Also, by Lemma~\ref{lem:min-vol-cap2}, we have $\vol(M) = \Omega(\vol(C)$. Thus $\vol_K(M) = \Omega(\eps^d)$, as desired. Similarly, we can establish the lower bound in part (ii).
\end{proof}
%-----------------------------------------------------------------------

The following lemma states that points in a shrunken Macbeath region all have similar ray distances. 

%-----------------------------------------------------------------------
\begin{lemma} \label{lem:core-ray}
Let $K$ be a convex body. If $x$ is a $\frac{1}{2}$-shallow point in $K$ and $y \in M^{1/5}(x)$, then $\ray(x)/2 \leq \ray(y) \leq 2 \cdot \ray(x)$.
\end{lemma}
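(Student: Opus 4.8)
The plan is to relate ray distance to cap width and then exploit the fact that a constant-factor expansion of a narrow cap still has comparable width. First I would set $r = \ray(x)$ and note the hypothesis $r \le \frac{1}{2}$. Let $p_0$ be the point where the ray $O x$ meets $\bd K$, and let $h_0$ be a supporting hyperplane of $K$ at $p_0$. By Lemma~\ref{lem:min-width-cap}, the cap $W = W(x)$ cut by the hyperplane through $x$ parallel to $h_0$ has $\width(W) = r$ and is the minimum-width cap containing $x$. The key containment is Lemma~\ref{lem:mac-cap-var}: since $x \in W$ and $M^{1/5}(x)$ is a Macbeath region with scaling factor $1/5$, we get $M^{1/5}(x) \cap K \subseteq W^{1 + 1/5} = W^{6/5}$; but $M^{1/5}(x) \subseteq K$, so in fact $M^{1/5}(x) \subseteq W^{6/5}$.

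Now for the \emph{upper} bound: take any $y \in M^{1/5}(x)$. Then $y \in W^{6/5}$, and since $r \le \frac12$ we have $\width(W^{6/5}) = \frac{6}{5} r \le \frac{3}{5} < \frac12$, so $W^{6/5}$ is a cap not containing the origin. By Lemma~\ref{lem:raydist-width}, $\ray(y) \le \width(W^{6/5}) = \frac{6}{5} r \le 2r$, which is the desired upper bound. For the \emph{lower} bound, I would apply the same argument with the roles of $x$ and $y$ swapped. This requires first knowing that $y$ is itself $\frac12$-shallow, which follows from the upper bound just proved ($\ray(y) \le \frac65 \cdot \frac12 < \frac12$); and it requires knowing $x \in M^{1/5}(y)$, which follows from the symmetry of shrunken Macbeath regions under the relation ``$M^{1/5}(x) \cap M^{1/5}(y) \neq \emptyset$'' — indeed $x \in M^{1/5}(x) \cap M^{1/5}(y)$, so by Lemma~\ref{lem:mac-mac} (with $\lambda = 1/5 \le 1/5$) we get $M^{1/5}(x) \subseteq M^{4/5}(y)$, hence in particular $x \in M^{4/5}(y) \subseteq K$; more directly, one can observe $x \in M^{\lambda}(y)$ for a suitable constant $\lambda$ and rerun the cap argument, or simply use that $M^{1/5}(y)$ overlaps $M^{1/5}(x)$ and repeat the expansion bound to get $\ray(x) \le c\, \ray(y)$ for an absolute constant $c$.

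The one point needing care is matching the constants to get exactly the factor $2$ rather than some larger constant: the naive swap gives $\ray(x) \le \frac65 \ray(y)$ only if I can put $x$ in a width-$\ray(y)$ cap, but $x$ lies in $M^{1/5}(y)$ which sits in $W(y)^{6/5}$, giving $\ray(x) \le \frac65 \ray(y) \le 2\ray(y)$, i.e. $\ray(y) \ge \frac12 \ray(x)$ — so in fact the clean symmetric argument does yield the stated factor $2$ on both sides. The main (minor) obstacle is therefore just bookkeeping: verifying that all caps encountered remain shallow enough that Lemma~\ref{lem:raydist-width} and Lemma~\ref{lem:min-width-cap} apply, and that the scaling factor $1/5$ together with the expansion to $6/5$ keeps every quantity below the $\frac12$ threshold. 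No deep idea is required beyond the interplay of Lemmas~\ref{lem:raydist-width}, \ref{lem:min-width-cap}, and \ref{lem:mac-cap-var}.
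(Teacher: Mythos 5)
Your upper-bound argument is correct and uses exactly the expected toolkit (the paper itself defers this proof to~\cite[Section~2.5]{AFM23arxiv}, so there is no in-paper proof to compare against): take the minimum-width cap $W$ through $x$ with $\width(W)=\ray(x)$ (Lemma~\ref{lem:min-width-cap}), use Lemma~\ref{lem:mac-cap-var} to get $M^{1/5}(x)\subseteq W^{6/5}$, and apply Lemma~\ref{lem:raydist-width} to conclude $\ray(y)\le\frac{6}{5}\ray(x)\le 2\ray(x)$. Your intermediate assertion ``$\frac{6}{5}r\le\frac{3}{5}<\frac{1}{2}$'' is false ($\frac{3}{5}>\frac{1}{2}$), but harmless: all that Lemma~\ref{lem:raydist-width} needs is that $W^{6/5}$ not contain the origin, i.e.\ relative width strictly less than $1$, which $\frac{6}{5}r\le\frac{3}{5}$ gives.

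The genuine gap is in the lower bound. Macbeath membership is not symmetric: $y\in M^{1/5}(x)$ does \emph{not} imply $x\in M^{1/5}(y)$, yet your concluding ``clean symmetric argument'' asserts exactly this in order to land the factor $\frac{6}{5}$ and hence the constant $2$. (Your own earlier step only yields $x\in M^{4/5}(y)$, via Lemma~\ref{lem:mac-mac} applied because both shrunken regions contain $y$ --- note the overlap witness is $y$, not $x$ as you wrote.) The argument is repairable with the containment you actually have: $x\in M^{4/5}(y)$ and Lemma~\ref{lem:mac-cap-var} give $x\in W(y)^{9/5}$, where $W(y)$ is the minimum-width cap of $y$, and when this cap avoids the origin, Lemma~\ref{lem:raydist-width} yields $\ray(x)\le\frac{9}{5}\ray(y)$, i.e.\ $\ray(y)\ge\frac{5}{9}\ray(x)\ge\frac{1}{2}\ray(x)$ (a direct Minkowski computation even shows $x\in M^{1/4}(y)$, improving this to $\frac{4}{5}$). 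But the shallowness check you dismiss as bookkeeping is not automatic here: you only know $\ray(y)\le\frac{3}{5}$, so $\frac{9}{5}\ray(y)$ may exceed $1$ and $W(y)^{9/5}$ may contain the origin. You need the extra (easy) case split: if $\frac{9}{5}\ray(y)\ge 1$ then $\ray(y)\ge\frac{5}{9}>\frac{1}{4}\ge\ray(x)/2$ trivially, since $\ray(x)\le\frac{1}{2}$. As written, the lower bound rests on the unjustified containment $x\in M^{1/5}(y)$ and omits this case, so it does not stand without these repairs.
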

%-----------------------------------------------------------------------

The next lemma shows that translated copies of a Macbeath region act as proxies for Macbeath regions in the vicinity.

%-----------------------------------------------------------------------
\begin{lemma} \label{lem:mac-trans}
Given $0 \leq \lambda \leq \frac{1}{2}$, $\gamma \geq 0$, and a convex body $K$, let $x \in K$ and let $R = M(x)-x$. Let $y$ be a point in $x + \lambda R$. Then $y + \gamma R \subseteq M^{2\gamma}(y)$.
\end{lemma}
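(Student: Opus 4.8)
The plan is to unwind both Macbeath regions through their definitions and reduce the claimed containment to a single transparent inclusion about the centrally symmetric set $R$. Set $R_y = (K-y)\cap(y-K)$, so that by definition $M(x) = x + R$ with $R = (K-x)\cap(x-K)$, and $M^{2\gamma}(y) = y + 2\gamma R_y$. Note that $R$ is centrally symmetric about the origin and, since $x \in K$, contains the origin; moreover $R \subseteq K - x$, i.e.\ $x + R \subseteq K$. The desired inclusion $y + \gamma R \subseteq y + 2\gamma R_y$ is, after translating by $-y$ and (when $\gamma > 0$) dividing by $\gamma$, equivalent to $\tfrac12 R \subseteq R_y = (K-y)\cap(y-K)$; the case $\gamma = 0$ is trivial. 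Because $R = -R$, the two conditions $\tfrac12 R \subseteq K - y$ and $\tfrac12 R \subseteq y - K$ are each equivalent to the single statement $y + \tfrac12 R \subseteq K$, so it suffices to establish that.

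To prove $y + \tfrac12 R \subseteq K$, I would use $x + R \subseteq K$ together with the hypothesis $y \in x + \lambda R$. Writing $y = x + \lambda r_0$ for some $r_0 \in R$ (and assuming, harmlessly, $\lambda \ge 0$, since $\lambda R = |\lambda| R$ by central symmetry), a generic point of $y + \tfrac12 R$ has the form $x + \bigl(\lambda r_0 + \tfrac12 r\bigr)$ with $r \in R$. Since $\lambda + \tfrac12 \le 1$, the vector $\lambda r_0 + \tfrac12 r$ is a convex combination of $r_0$, $r$, and $0$, all of which lie in $R$; as $R$ is convex, it lies in $R$. Hence $x + \bigl(\lambda r_0 + \tfrac12 r\bigr) \in x + R \subseteq K$, which gives $y + \tfrac12 R \subseteq K$ and completes the argument.

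I do not anticipate a genuine obstacle; the only care needed is in the scaling bookkeeping — checking that the factor $\gamma$ cancels, that central symmetry of $R$ collapses the two halfspace conditions into the single condition $y + \tfrac12 R \subseteq K$, and that $\lambda \le \tfrac12$ is exactly the slack that lets a point of $x + \lambda R$ be displaced by a vector of $\tfrac12 R$ without leaving $x + R$. Note that the hypothesis $\gamma \le 1/10$ plays no role in the inclusion itself; it is presumably imposed only so that $M^{2\gamma}(y)$ is a genuinely shrunken Macbeath region to which the earlier lemmas apply. One should also dispose of the degenerate case $x \in \bd K$, where $R = \{0\}$ and the claim holds trivially.
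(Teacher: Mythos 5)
Your proof is correct and follows essentially the same route as the paper's: both hinge on writing $y = x + \lambda r_0$ and using convexity of $R$ (with $\lambda \le 1/2$) to show a half-scaled copy of $R$ around $y$ stays inside $K$, then invoking the central symmetry/maximality characterization of $M(y)$ to absorb it into $M^{2\gamma}(y)$. The only cosmetic difference is that the paper keeps the sharper factor $1-\lambda$ and relaxes $M^{\gamma/(1-\lambda)}(y) \subseteq M^{2\gamma}(y)$ at the end, while you fix the factor $\tfrac12$ from the start.
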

%-----------------------------------------------------------------------

%-----------------------------------------------------------------------
\begin{proof}
Treating $x$ as the origin, we have $R \subseteq K$ and $y \in \lambda R$. It follows that $y + (1-\lambda)R \subseteq K$. Recall that the Macbeath region $M(y)$ is the maximal centrally symmetric convex body centered at $y$ and contained within $K$. Thus, $y + (1-\lambda) R \subseteq M(y)$. This implies that $y + \gamma R \subseteq M^{\frac{\gamma}{1-\lambda}}(y) \subseteq M^{2\gamma}(y)$.
\end{proof}
%-----------------------------------------------------------------------

We employ Macbeath region-based coverings in our polytope approximation scheme. In particular, we employ the concept of MNets, as defined in~\cite{AFM24}. Let $K \subseteq \RE^d$ be a convex body, let $\Lambda$ be an arbitrary subset of $\interior(K)$, and let $c \geq 5$ be any constant. Given $X \subseteq K$, define $\MM_K^{\lambda}(X) = \{ M_K^{\lambda}(x) : x \in X\}$. Define a \emph{$(K, \Lambda,c)$-MNet} to be any maximal set of points $X \subseteq \Lambda$ such that the shrunken Macbeath regions $\MM_K^{1/4c}(X)$ are pairwise disjoint (see Figure~\ref{fig:mnet}(a) and~(b)). We refer to $c$ as the expansion factor of the MNet. The following lemma, proved in~\cite[Section~4]{AFM24}, summarizes the key properties of MNets.

%-----------------------------------------------------------------------
\begin{figure}[htbp]
    \centering\includegraphics[scale=0.8]{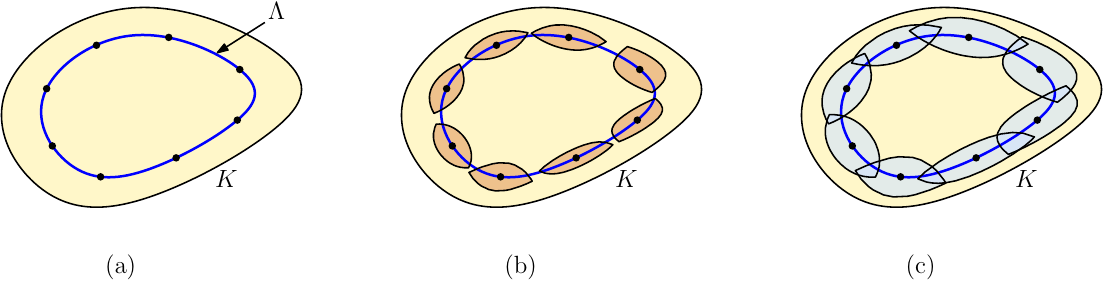}
    \caption{An MNet for a set $\Lambda$ within a convex body $K$. (Not drawn to scale.)} \label{fig:mnet}
    \end{figure}
%-----------------------------------------------------------------------

%-----------------------------------------------------------------------
\begin{lemma}\label{lem:delone}
Given a convex body $K \subseteq \RE^d$, $\Lambda \subset \interior(K)$, and $c \geq 5$, a $(K,\Lambda,c)$-MNet $X$ satisfies the following properties:
\begin{itemize}
    \item (Packing) The elements of $\MM_K^{1/4c}(X)$ are pairwise disjoint (see Figure~\ref{fig:mnet}(b)).
    \item (Covering) The union of $\MM_K^{1/c}(X)$ covers $\Lambda$ (see Figure~\ref{fig:mnet}(c)).
    \item (Buffering) The union of $\MM_K(X)$ is contained within $K$.
\end{itemize}
\end{lemma}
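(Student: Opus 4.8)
The plan is to verify the three properties in turn, observing that \emph{packing} and \emph{buffering} are essentially immediate from the definitions, so that the only real content lies in \emph{covering}.

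For \emph{packing}, there is nothing to do beyond recalling the defining property of a $(K,\Lambda,c)$-MNet: $X$ was chosen precisely so that the regions in $\MM_K^{1/4c}(X)$ are pairwise disjoint. For \emph{buffering}, recall that $M_K(x) = M_K^{1}(x)$ is by construction the largest centrally symmetric convex body centered at $x$ and contained in $K$; in particular $M_K(x) \subseteq K$ for every $x \in X$, so the union $\bigcup_{x \in X} M_K(x)$ lies in $K$.

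The substantive step is \emph{covering}, and the plan is a standard maximality argument. Suppose for contradiction that some point $y \in \Lambda$ lies in no region of $\MM_K^{1/c}(X)$. I claim that $M_K^{1/4c}(y)$ is then disjoint from every $M_K^{1/4c}(x)$ with $x \in X$, which would contradict the maximality of $X$, since $X \cup \{y\} \subseteq \Lambda$ would still have pairwise disjoint $1/4c$-shrunken Macbeath regions. To prove the claim, suppose instead that $M_K^{1/4c}(y) \cap M_K^{1/4c}(x) \neq \emptyset$ for some $x \in X$. Since $c \ge 5$ we have $1/4c \le 1/20 \le 1/5$, so Lemma~\ref{lem:mac-mac} applies with $\lambda = 1/4c$ (note $x, y \in \interior(K) \subseteq K$) and yields $M_K^{1/4c}(y) \subseteq M_K^{4/(4c)}(x) = M_K^{1/c}(x)$. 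In particular $y \in M_K^{1/c}(x)$, contradicting the choice of $y$. This establishes the claim and hence the covering property.

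The only thing to watch is the bookkeeping of scaling factors: that $1/4c \le 1/5$ so that Lemma~\ref{lem:mac-mac} is applicable, and that the factor-$4$ blow-up in that lemma carries the packing radius $1/4c$ exactly to the covering radius $1/c$. I expect this arithmetic, rather than any geometric difficulty, to be the ``obstacle'' — and it is precisely the reason the MNet is defined with the constant $4c$ appearing in the packing condition.
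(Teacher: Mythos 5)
Your proof is correct, and it is the standard argument: packing and buffering are immediate from the definition of an MNet and of $M_K(x)$, and covering follows from maximality via Lemma~\ref{lem:mac-mac} with $\lambda = 1/4c \le 1/5$, which turns an overlap of $1/4c$-shrunken regions into containment in a $1/c$-shrunken one. The paper itself gives no proof here, deferring to~\cite{AFM23}, and your argument is exactly the expected one; the only (harmless) unstated detail is that an uncovered $y \in \Lambda$ cannot already belong to $X$, since $y \in M_K^{1/c}(y)$.
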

%-----------------------------------------------------------------------

For the purposes of this paper, $c$ will be any sufficiently large constant, specifically $c \geq 5$. To simplify notation, we use $(K,\Lambda)$-MNet to refer to such an MNet.

As mentioned before, we reduce our polytope approximation problem to that of finding a polytope that is sandwiched between two convex bodies. In turn, we tackle this problem using MNets as indicated in the next lemma. A similar result was proved in~\cite{AFM24} under the more restrictive assumption that $K_1$ is a scaled copy of $K_0$.

%-----------------------------------------------------------------------
\begin{lemma} \label{lem:MNet-approx}
Let $K_0 \subset K_1$ be two convex bodies. Let $X$ be a $(K_1,\partial K_0)$-MNet. Then there exists a polytope $P$ with $O(|X|)$ vertices such that $K_0 \subseteq P \subseteq K_1$.
\end{lemma}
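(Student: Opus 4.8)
The plan is to build $P$ directly from the MNet $X$ by sampling a bounded number of points from each associated Macbeath region and taking their convex hull; the two inclusions $K_0 \subseteq P$ and $P \subseteq K_1$ are then proved separately. For the easy inclusion $P \subseteq K_1$: since $X$ is a $(K_1,\partial K_0)$-MNet, the buffering property of Lemma~\ref{lem:delone} tells us that $\bigcup_{x \in X} M_{K_1}(x) \subseteq K_1$. If every sampled point is drawn from $M_{K_1}(x)$ for some $x \in X$, then all sampled points lie in $K_1$, and since $K_1$ is convex, their convex hull $P$ lies in $K_1$ as well. So the entire burden is on the first inclusion, and the number of sample points per Macbeath region must be a constant depending only on $d$, so that $|V(P)| = O(|X|)$.

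For $K_0 \subseteq P$, I would argue that $\partial K_0 \subseteq P$, which suffices by convexity (taking $O$, say the centroid of $K_0$, in the interior; any point of $K_0$ lies on a segment from $O$ to $\partial K_0$, both endpoints in $P$). Fix a boundary point $z \in \partial K_0$. By the covering property of Lemma~\ref{lem:delone}, $z$ lies in $M_{K_1}^{1/c}(x)$ for some $x \in X$. The idea is that if the sample from $M_{K_1}(x)$ is chosen to be a sufficiently fine net — concretely, the vertices of a simplex, or more safely an $O(1)$-size point set whose convex hull contains the shrunken region $M_{K_1}^{1/c}(x)$ — then $z$ is captured. The cleanest way: since $M_{K_1}^{1/c}(x)$ is an affine image of a ball (it is a scaled copy of the centrally symmetric body $M_{K_1}(x)$), there is a constant $N_d$ and points $p_1,\dots,p_{N_d} \in M_{K_1}(x)$ such that $M_{K_1}^{1/c}(x) \subseteq \conv\{p_1,\dots,p_{N_d}\}$; indeed one can take a fixed combinatorial simplex inscribed appropriately, or use that any convex body in $\RE^d$ contains a simplex of volume a constant fraction of it, scaled so the homothet covers the $(1/c)$-shrunk copy. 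Including these $N_d$ points in the vertex set of $P$ for every $x \in X$ then forces $z \in \conv\{p_1,\dots,p_{N_d}\} \subseteq P$. Since $N_d$ depends only on $d$, we get $|V(P)| \le N_d |X| = O(|X|)$.

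The main obstacle, and the step I would think about most carefully, is the bridge between "$M_{K_1}^{1/c}(x)$ is a shrunken copy of a centrally symmetric body" and "a constant number of points of $M_{K_1}(x)$ has convex hull containing it." The subtlety is that we must sample the $p_i$ from the \emph{full} region $M_{K_1}(x)$ (to stay inside $K_1$ via buffering) while their hull must cover the \emph{$(1/c)$-shrunk} region (to cover $\partial K_0$ via covering), and the shrink factor $1/c < 1$ is exactly what gives the slack needed — a homothet of a simplex inscribed in $M_{K_1}(x)$ by a factor $> 1/c$ about the center $x$ still lies in $M_{K_1}(x)$ provided the inscribed simplex is fat enough relative to $M_{K_1}(x)$, which holds with a $d$-dependent constant. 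One should also double-check the degenerate issue that $M_{K_1}(x)$ for $x$ near $\partial K_0$ (but interior to $K_1$) is a genuine $d$-dimensional body, which holds since $x \in \interior(K_1)$; and that the centroid-based choice of $O$ for $K_0$ is legitimate. These are all routine once the homothety slack is set up correctly, so I expect the write-up to be short.
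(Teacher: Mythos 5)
Your proposal is correct, but it proves the containment $K_0 \subseteq P$ by a genuinely different route than the paper. You use the covering property of Lemma~\ref{lem:delone} directly: for each $x \in X$ you place a constant-size sample $S_x \subseteq M_{K_1}(x)$ whose convex hull contains the shrunken region $M_{K_1}^{1/c}(x)$, so that $\bd K_0 \subseteq \bigcup_x M_{K_1}^{1/c}(x) \subseteq P$ and hence $K_0 = \conv(\bd K_0) \subseteq P$, while buffering gives $P \subseteq K_1$ exactly as you say. The paper instead equips each $M^{1/c}(x)$ with a constant-size net that stabs every half-ellipsoid of at least a constant fraction of its volume (constant size via bounded VC dimension), and proves $K_0 \subseteq P$ by a separation argument: for every supporting hyperplane of $K_0$ at a point $p \in \bd K_0$, the packing property, Lemma~\ref{lem:mac-mac}, and John's theorem produce a large half-ellipsoid inside some $M^{1/c}(x)$ lying beyond that hyperplane, so a net point lands in the corresponding cap of $K_1$. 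Your version is more elementary and self-contained for this lemma (it never needs John's theorem, VC dimension, or the cap argument); the paper's version is the one whose machinery recurs in their related constructions, and it certifies $K_0 \subseteq P$ by witnesses beyond every supporting hyperplane rather than by locally swallowing boundary points. Both give $|V(P)| = O(|X|)$ and the sandwich $K_0 \subseteq P \subseteq K_1$.

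Two small repairs to your write-up. First, for the key claim, drop the inscribed-simplex justification: a simplex on $d+1$ vertices inscribed in a centrally symmetric body $B$ centered at $x$ contains the homothet $x + \lambda(B-x)$ only for $\lambda = O(1/d)$, so it does not cover $M^{1/c}(x)$ when $c$ is a fixed constant (say $5$) and $d$ is large. The correct instantiation of your ``$O(1)$-size point set'' is, e.g., a maximal subset of $\bd M_{K_1}(x)$ whose points are pairwise at distance at least $1/2$ in the norm whose unit ball is $M_{K_1}(x) - x$: a volume packing argument bounds its size by a constant depending only on $d$, and a support-function argument shows its convex hull contains $M_{K_1}^{1/2}(x) \supseteq M_{K_1}^{1/c}(x)$ since $c \geq 5$. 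Second, in the reduction from $\bd K_0 \subseteq P$ to $K_0 \subseteq P$, argue via an arbitrary chord through the given point (both endpoints lie in $\bd K_0 \subseteq P$) rather than a segment from the centroid $O$, since $O \in P$ is not yet known at that stage; this is the same convexity argument, just stated without the unneeded anchor point.
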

%-----------------------------------------------------------------------

%-----------------------------------------------------------------------
\begin{proof}
Define a \emph{half-ellipsoid} to be the intersection of an ellipsoid with a halfspace whose bounding hyperplane passes through its center. Let $c$ be the expansion factor of the MNet $X$. For each Macbeath region $M \in \MM^{1/c}(X)$, choose a net~\cite{Mus22} so that, for a suitable constant $c'$, any half-ellipsoid contained within $M$ of volume at least $c' \vol(M)$ contains at least one point of the net. It follows from standard results~\cite{BEHM89,Har11a} that half-ellipsoids have constant VC-dimension, and so the size of the resulting net is $O(1)$. Polytope $P$ is defined to be the convex hull of the points of the nets associated with all the Macbeath regions of $\MM^{1/c}(X)$. 

%-----------------------------------------------------------------------
\begin{figure}[htbp]
    \centering\includegraphics[scale=0.8]{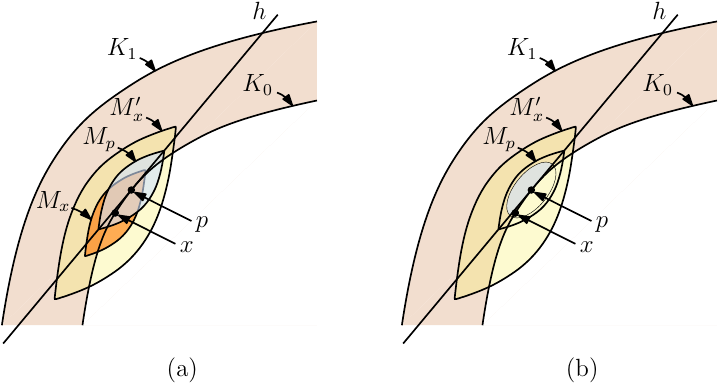}
    \caption{Proof of Lemma~\ref{lem:MNet-approx}.} \label{f:mnet-approx}
\end{figure}
%-----------------------------------------------------------------------

We claim that $K_0 \subset P \subset K_1$. The second containment follows from the fact that the Macbeath regions of $\MM^{1/c}(X)$ are contained within $K_1$. To prove that $K_0 \subset P$, we will show that our construction chooses a point of every cap of $K_1$ induced by a supporting hyperplane of $K_0$. To this end, let $h$ be a supporting hyperplane at some point $p \in \bd K_0$ (see Figure~\ref{f:mnet-approx}(a)), let $H$ be the halfspace bounded by $h$ and not containing $K_0$, and let $C$ be the cap $K_1 \cap H$. Let $M_p = M^{1/4c}(p)$. By the maximality of MNets, there is a point $x \in X$ such that $M_p \cap M_x \neq \emptyset$, where $M_x = M^{1/4c}(x)$. Letting $M'_x = M^{1/c}(x)$ and $M'_p = M^{1/c}(p)$ and applying Lemma~\ref{lem:mac-mac}, we have $M_p \subseteq M'_x$ and $M_x \subseteq M'_p$. Thus, $\vol(M_p) = \Omega(\vol(M'_p)) = \Omega(\vol(M_x))$. By John's Theorem~\cite{Joh48}, $M_p$ contains an ellipsoid $E$ centered at $p$, such that $\vol(E) = \Omega(\vol(M_p))$ (see Figure~\ref{f:mnet-approx}(b)). Combining these observations, it follows that the half-ellipsoid $E' = E \cap H$ has volume $\Omega(\vol(M_x))$. Since $\vol(M'_x) = O(\vol(M_x))$, it follows that a point of the net constructed for $M'_x$ is contained in $E'$ (for a sufficiently small constant $c'$). Noting that $E' \subseteq C$ completes the proof.
\end{proof}
%-----------------------------------------------------------------------

%-----------------------------------------------------------------------
\subsection{Concepts from Projective Geometry} \label{s:projective}
%-----------------------------------------------------------------------

In this section, we present some relevant standard concepts from projective geometry. For further details, see any standard reference (e.g., \cite{Ric11}). Given reals $a,b,c,d \in \RE$, their \emph{cross ratio} $(a,b;c,d)$ is defined to be $(a-c)(b-d) / (a-d)(b-c)$. Given four points on any line $\ell$ in $\RE^d$, we define their cross ratio by identifying $\ell$ with the real line. We follow the convention of using symbols $a,b,c,d,\ldots$ for points, and the distinction from other uses (such as $d$ for the dimension) should be clear from the context.

It is well known that cross ratios are preserved under projective transformations. If the cross ratio $(a,b;c,d)$ is $-1$, we say that this quadruple of points forms a \emph{harmonic bundle} (see Figure~\ref{f:basics}(a)). This is an important special case which occurs frequently in constructions. In this case, the points appear on the line in the order $\ang{a, d, b, c}$ and the ratio in which $a$ divides $c$ and $d$ externally (that is, $(a-c) / (a-d)$) is the same as the ratio in which $b$ divides $c$ and $d$ internally (that is, $(c-b) / (b-d)$). If $a$ is at infinity, it follows easily that $b$ is midway between $c$ and $d$.

%-----------------------------------------------------------------------
\begin{figure}[htbp]
    \centering\includegraphics[scale=0.8]{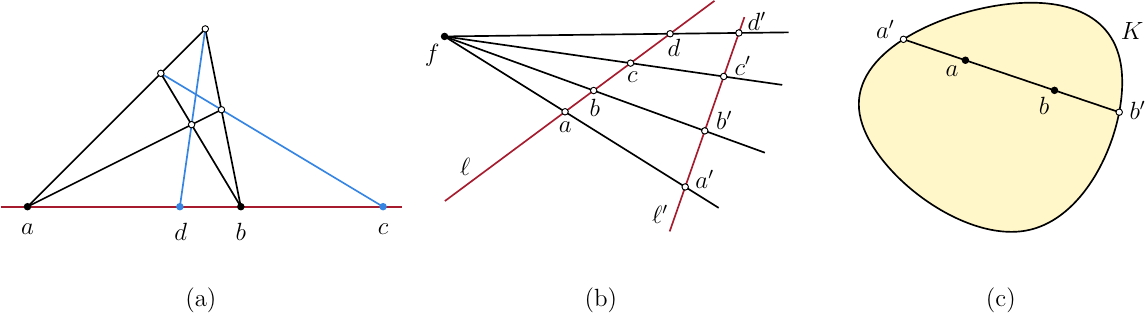}
    \caption{(a) Harmonic bundle, (b) the perspectivity through $f$, and (c) the Hilbert distance.} \label{f:basics}
\end{figure}
%-----------------------------------------------------------------------

Given two lines $\ell$ and $\ell'$ and a point $f$ that lies on neither line, the set of lines through $f$ defines a natural bijection between $\ell$ and $\ell'$, called the \emph{perspectivity through $f$}. Given four points $\{a, b, c, d\}$ on $\ell$ and their respective images $\{a', b', c', d'\}$ on $\ell'$ under this perspectivity, it is well known that the cross ratios $(a, b; c, d)$ and $(a',b';c',d')$ are equal (see Figure~\ref{f:basics}(b)). We express this notationally as
\[
    (a, b; c, d) ~ =_{[f]} ~ (a', b'; c', d').
\]

Every convex body $K$ induces a metric on $\interior(K)$, called the \emph{Hilbert distance}~\cite{Hil95}. Given two points $a, b \in \interior(K)$, let $a'$ and $b'$ denote the points on $\bd K$ intersected by the line through $a$ and $b$, such that they appear in the order $\ang{a', a, b, b'}$ (see Figure~\ref{f:basics}(c)). Then the Hilbert distance between $a$ and $b$, induced by $K$, is defined as $d_K(a,b) = \inv{2} \ln (a,b;b',a')$.

Given $x \in \interior(K)$ and $r \geq 0$, let $B_K(x,r) = \{ y \in K \ST d_K(x,y) \leq r\}$ denote a Hilbert ball of radius $r$ centered at $x$. An important property of Macbeath regions is that they can serve as proxies for Hilbert balls, as shown in the following lemma. (Proofs can be found in~\cite{VeW16} and~\cite{AbM18}.)

%-----------------------------------------------------------------------
\begin{lemma} \label{lem:nesting}
Let $K$ be a convex body and let $x$ be a point in $\interior(K)$. For any $0 < \lambda < 1$,
\[
    B_K\left(x, \frac{1}{2} \ln(1+\lambda)\right) 
        ~ \subseteq ~ M_K^{\lambda}(x) 
        ~ \subseteq ~ B_K\left(x, \frac{1}{2} \ln\left(1 + \frac{2\lambda}{1-\lambda}\right)\right).
\]
\end{lemma}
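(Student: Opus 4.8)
The plan is to reduce both inclusions to a one-dimensional comparison along the chords of $K$ through $x$, and then verify two elementary inequalities in the parameters of such a chord. Throughout assume $x \in \interior(K)$; if $x \in \bd K$ the Hilbert ball degenerates to $\{x\}$ and the statement is trivial. The key structural point is that $M^\lambda(x)$ and every Hilbert ball $B_H(x,r)$ are star-shaped with respect to $x$: the former because it is convex and contains $x$, the latter because (as the computation below shows) the Hilbert distance from $x$ increases monotonically as one moves away from $x$ along any chord. Comparing radial functions in the directions $u$ and $-u$, it follows that for two such bodies $A$ and $B$ one has $A \subseteq B$ if and only if $A \cap \ell \subseteq B \cap \ell$ for every line $\ell$ through $x$. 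Hence it suffices to prove each inclusion on an arbitrary fixed chord.

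Fix a chord through $x$ meeting $\bd K$ at $a$ and $b$, write $p = \|xa\|$ and $q = \|xb\|$, and parametrize the chord as $x + t u$ with $u$ the unit vector pointing from $a$ towards $b$, so that $a, x, b$ correspond to $t = -p, 0, q$. Directly from the definition of the Macbeath region, $M^1(x)$ meets this chord in the symmetric segment $\{\,x + tu : |t| \le \min(p,q)\,\}$, hence $M^\lambda(x)$ meets it in $\{\,x + tu : |t| \le \lambda\min(p,q)\,\}$. For $y = x + tu$ with $0 \le t < q$ the points $a, x, y, b$ occur in this order, and evaluating the cross ratio with signed lengths yields
\[
    (x,y;b,a) \;=\; \frac{q\,(p+t)}{p\,(q-t)},
    \qquad
    d_K(x,y) \;=\; \inv{2}\,\ln\frac{q\,(p+t)}{p\,(q-t)} .
\]
The right side is increasing in $t$ on $[0,q)$, which is exactly the monotonicity asserted above; for $y$ on the opposite side of $x$ the same formula holds with $p$ and $q$ interchanged (using that $d_K$ is symmetric). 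It therefore only remains to compare the Macbeath threshold $t = \lambda\min(p,q)$ against the Hilbert thresholds, uniformly over all $p, q > 0$.

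For the first inclusion I would show that $d_K(x, x+tu) \le \inv{2}\ln(1+\lambda)$ forces $t \le \lambda\min(p,q)$. Exponentiating, the hypothesis reads $q(p+t) \le (1+\lambda)\,p\,(q-t)$; solving for the largest admissible $t$ gives $t \le \lambda p q / (q + (1+\lambda)p)$, and a two-line case split on whether $q \ge p$ or $q \le p$ shows this bound is at most $\lambda\min(p,q)$, the constraint becoming tight as $q/p \to \infty$. For the second inclusion I would substitute $t = \lambda\min(p,q)$ into the displayed formula and check
\[
    \frac{q\,\bigl(p + \lambda\min(p,q)\bigr)}{p\,\bigl(q - \lambda\min(p,q)\bigr)} \;\le\; \frac{1+\lambda}{1-\lambda} ,
\]
which, splitting into $q \ge p$ (so the minimum is $p$) and $q \le p$ (so it is $q$), reduces after clearing denominators to $p \le q$ and $q \le p$ respectively, hence always holds, with equality exactly on the balanced chord $p = q$. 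Running the same two computations on the other side of $x$ via the $p \leftrightarrow q$ symmetry, and then ranging over all chords, gives both inclusions.

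The only steps that require care are the sign bookkeeping in evaluating the cross ratio (together with the appeal to symmetry of $d_K$ needed to cover the side of $x$ that does not fit the fixed point ordering in the definition) and the identification of the extremal chords -- lopsided, $q/p \to \infty$, for the inner ball, and balanced, $p = q$, for the outer ball, which incidentally shows that both radii in the lemma are best possible. Neither is a genuine obstacle; everything else is routine algebra.
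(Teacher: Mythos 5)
Your proof is correct. Note that the paper does not prove Lemma~\ref{lem:nesting} at all; it is stated with a citation to the sources on Macbeath regions versus Hilbert balls, and your chord-by-chord argument is essentially the standard proof found there: restrict to a line $\ell$ through $x$, observe $M^{\lambda}(x)\cap\ell$ is the symmetric segment of half-length $\lambda\min(p,q)$, compute $(x,y;b,a)=q(p+t)/(p(q-t))$, and verify the two elementary inequalities (I checked both case splits; they reduce exactly as you say, with tightness at $q/p\to\infty$ and $p=q$ respectively). Two small remarks: the appeal to star-shapedness is unnecessary --- for any sets $A,B$ one has $A\subseteq B$ iff $A\cap\ell\subseteq B\cap\ell$ for every line $\ell$ through $x$, simply because every point lies on such a line; and the boundary case is best dispatched by noting the Hilbert metric (hence the statement) only makes sense for $x\in\interior(K)$, rather than by asserting the ball degenerates to $\{x\}$, since for $x\in\bd K$ on a boundary segment $M^{1}(x)$ need not be a single point.
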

%-----------------------------------------------------------------------

%-----------------------------------------------------------------------
\subsection{Intermediate Bodies} \label{s:am-hm}
%-----------------------------------------------------------------------

In this section, we explore the concept of relative fatness, which was introduced in Section~\ref{s:techniques}. Given two convex bodies $K_0$ and $K_1$ such that $K_0 \subset K_1$ and $0 < \gamma < 1$, we say that $K_0$ is \emph{relatively $\gamma$-fat} with respect to $K_1$ if, for any point $p \in \bd K_0$, and any scaling factor $0 < \lambda \leq 1$, at least a fraction $\gamma$ of the volume of the Macbeath region $M = M_{K_1}^{\lambda}(p)$ lies within $K_0$, that is, $\vol(M \cap K_0)/\vol(M) \geq \gamma$. 
We say that $K_0$ is \emph{relatively fat} with respect to $K_1$ if it is relatively $\gamma$-fat for some constant $\gamma$. Relative fatness will play an important role in our analyses. Since an arbitrary nested pair $K_0 \subset K_1$ may not necessarily satisfy this property, it will be useful to define an intermediate body sandwiched between $K_0$ and $K_1$ that does.

There are a few natural ways to define such an intermediate body. Given two convex bodies $K_0$ and $K_1$, where $K_0 \subseteq K_1$, the \emph{arithmetic-mean body}, $K_A(K_0, K_1)$, is defined to be the scaled Minkowski sum $\frac{1}{2}(K_0 \oplus K_1)$. Equivalently, for any unit vector $u$ consider the two supporting halfspaces of $K_0$ and $K_1$ orthogonal to $u$, and take the halfspace that is midway between the two (see Figure~\ref{f:arihar}(a)). The arithmetic-mean body is obtained by intersecting such halfspaces for all unit vectors $u$. Clearly, $K_A(K_0, K_1)$ is convex and $K_0 \subseteq K_A(K_0, K_1) \subseteq K_1$.

%-----------------------------------------------------------------------
\begin{figure}[htbp]
    \centering\includegraphics[scale=0.8]{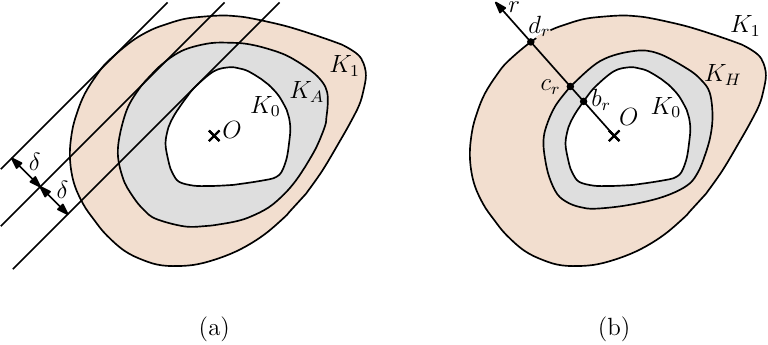}
    \caption{(a) The arithmetic-mean body and (b) the harmonic-mean body.} \label{f:arihar} 
\end{figure}
%-----------------------------------------------------------------------

Another natural choice arises from a polar viewpoint. Assume that $K_0 \subseteq K_1$ and the origin $O \in \interior(K_0)$. The \emph{harmonic-mean body}, $K_H(K_0, K_1)$, was introduced by Firey~\cite{Fir61} and is defined as follows. For any ray $r$ from the origin $O$, let $b_r$ and $d_r$ denote the points of intersection of $r$ with $\bd K_0$ and $\bd K_1$, respectively (see Figure~\ref{f:arihar}(b)). Let $c_r$ be the point on the ray such that 
\[
    \frac{1}{\|O c_r\|} 
        ~ = ~ \frac{1}{2} \left( \frac{1}{\|O b_r\|} + \frac{1}{\|O d_r\|} \right). 
\]
Equivalently, the cross ratio $(O, c_r; d_r, b_r)$ equals $-1$, that is, this quadruple forms a harmonic bundle. Clearly, $c_r$ lies between $b_r$ and $d_r$, and hence the union of these points over all rays $r$ defines the boundary of a body sandwiched between $K_0$ and $K_1$. This body is the harmonic-mean body. By considering the supporting hyperplanes orthogonal to the ray $r$, it is easy to see that the arithmetic-mean body of $K_0$ and $K_1$ is mapped to the harmonic-mean body of $\stdpolar{K}_0$ and $\stdpolar{K}_1$ under polarity, that is, $\stdpolar{(K_A(K_0, K_1))} = K_H(\stdpolar{K}_1, \stdpolar{K}_0)$. It follows that $K_H(K_0, K_1)$ is convex. When $K_0$ and $K_1$ are clear from context, we will just write $K_A$ and $K_H$, omitting references to their arguments. 

%-----------------------------------------------------------------------
\begin{figure}[htbp]
    \centering\includegraphics[scale=0.8,page=2]{rel-fat}
    \caption{Relative fatness of $K_H$.} \label{f:rel-fat-har} 
\end{figure}
%-----------------------------------------------------------------------

To understand why these intermediate bodies are useful to us, recall the diamond and square bodies $K_0$ and $K_1$ from Figure~\ref{f:rel-fat} (see Figure~\ref{f:rel-fat-har}(a)). Recall the issue that a large fraction of the volume of the Macbeath region $M^{1/2}_{K_1}(x)$ lies outside of $K_0$. If we replace $K_1$ with $K_H = K_H(K_0, K_1)$ and compute the Macbeath region with respect to $K_H$ instead (see Figure~\ref{f:rel-fat-har}(b) and (c)), we see that a constant fraction of the volume of the Macbeath region lies within $K_0$, and so relative fatness is satisfied.

In Section~\ref{s:hm-fat}, we will present an important result by showing that the inner body $K_0$ is relatively fat with respect to the harmonic-mean body $K_H(K_0, K_1)$. The proof makes heavy use of concepts from projective geometry, such as the harmonic bundle. This fact will be critical to establishing the volume-sensitive bounds given in this paper.

%=======================================================================
\section{Bounding MNet Sizes} \label{s:mnet-sizes}
%=======================================================================

In this section, we bound the sizes of MNets in important special cases involving points at roughly the same ray distance. These bounds will be useful for obtaining our volume-sensitive bounds. We begin by recalling some definitions and technical tools from~\cite{AFM24}. We say that two caps $C_1$ and $C_2$ are \emph{$\lambda$-similar} for $\lambda \geq 1$, if $C_1 \subseteq C_2^{\lambda}$ and $C_2 \subseteq C_1^{\lambda}$. If two caps are $\lambda$-similar for constant $\lambda$, we say that the caps are \emph{similar}. Note that this is an affine-invariant notion of closeness between caps. 

Arya {\etal}~\cite[Section~2.6 and Section~3.2]{AFM24} showed certain important relationships between caps in $K$ and the associated Macbeath regions in $\stdpolar{K}$. In order to state their result, consider the following mapping. Consider a point $z \in \stdpolar{K}$. Let $\hat{z} \not\in \stdpolar{K}$ be the point on the ray $Oz$ such that $\ray(\hat{z}) = \eps$. The dual hyperplane $\stdpolar{\hat{z}}$ intersects $K$, and so induces a cap, which we call $z$'s \emph{$\eps$-representative cap}. They showed that points that lie within the same shrunken Macbeath regions have similar representative caps, which implies Lemma~\ref{lem:mahler-mac}(i). Furthermore, by extending and generalizing the results in \cite{AAFM22,AFM12b,NNR20}, they established a Mahler-type reciprocal relationship between the volume of caps in $K$ and the associated Macbeath regions in $\stdpolar{K}$. This is stated in Lemma~\ref{lem:mahler-mac}(ii).

%-----------------------------------------------------------------------
\begin{lemma} \label{lem:mahler-mac}
Let $0 < \eps \leq \frac{1}{16}$ and let $K \subseteq \RE^d$ be a well-centered convex body. Let $C$ be a cap of $K$ such that $\eps/2 \leq \width(C) \leq 2\eps$. Suppose that the ray shot from the origin orthogonal to the base of $C$ intersects a Macbeath region $M = M^{1/5}(y)$ of $\stdpolar{K}$, where $\ray(y) = \eps$ (see Figure~\ref{f:sandwich}). Then:
\begin{enumerate}\setlength{\itemsep}{-0.5ex}\setlength{\parsep}{0pt}
\item[$(i)$] The cap $C$ and the $\eps$-representative cap of any point $z \in M$ are 16-similar.
\item[$(ii)$] $\vol_K(C) \cdot \vol_{\stdpolar{K}}(M) = \Omega(\eps^{d+1})$.
\end{enumerate}
\end{lemma}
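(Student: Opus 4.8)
The plan is to prove the two parts in order, with part~(i) doing the geometric work and part~(ii) then reducing to a Mahler-type volume reciprocity. The organizing idea is that polarity sets up a (nearly) bijective correspondence between points of $K^{*}$ at a fixed small ray distance and caps of $K$ of the corresponding small width, compatible with the Macbeath-region structure on both sides. Concretely, I will use the operation $\psi$ (as in the proof of Lemma~\ref{lem:vol-mac-bounds}) sending a cap $C$ of $K$ to the point $\psi(C)\in K^{*}$ lying in the direction normal to $\base(C)$ with $\ray_{K^{*}}(\psi(C)) = \width_{K}(C)$; this is essentially the inverse of the $\eps$-representative-cap construction $z \mapsto \hat{z} \mapsto \pcap{K}{\hat{z}^{*}}$.

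For part~(i), let $h_{1}$ be the hyperplane carrying $\base(C)$, so $C = \pcap{K}{h_{1}}$ and $\ray_{K^{*}}(h_{1}^{*}) = \width_{K}(C) \in [\eps/2, 2\eps]$, with $h_{1}^{*}$ on the ray $\rho$ that, by hypothesis, meets $M = M^{1/5}_{K^{*}}(y)$. I would first show that $\psi(C)$ lies in a fixed constant-factor expansion $M^{\lambda_{0}}_{K^{*}}(y)$: the point $w := \rho \cap M$ has $\ray_{K^{*}}(w) \in [\eps/2, 2\eps]$ by Lemma~\ref{lem:core-ray}, so $w$ and $\psi(C)$ lie on the common ray $\rho$ at ray distances within a constant factor and are therefore within bounded Hilbert distance; since $w \in M^{1/5}_{K^{*}}(y) \subseteq B_{H}(y, O(1))$ by Lemma~\ref{lem:nesting}, the triangle inequality for $d_{K^{*}}$ and Lemma~\ref{lem:nesting} again give $\psi(C) \in M^{\lambda_{0}}_{K^{*}}(y)$. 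As also $z \in M \subseteq M^{\lambda_{0}}_{K^{*}}(y)$, the points $\psi(C)$ and $z$ lie within bounded Hilbert distance of each other, both at ray distance $\Theta(\eps)$. It then remains to invoke the basic correspondence between cap similarity in $K$ and Macbeath-colocation in $K^{*}$ from Section~2.6 of \cite{AFM23arxiv} --- if two points of $K^{*}$ at ray distance $\Theta(\eps)$ are within bounded Hilbert distance, the caps of $K$ obtained by dualizing each at the $\eps$-width level are $O(1)$-similar --- and to chase the constants (the factor $2$ from Lemma~\ref{lem:core-ray}, the window $[\eps/2,2\eps]$, and the colocation expansion) to arrive at $16$. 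The only ingredient not already in the prior work is the passage from $\width_{K}(C)\in[\eps/2,2\eps]$ to the ray-distance-$\eps$ point $\psi(C)$, which is the short Hilbert-metric estimate above.

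For part~(ii), applying part~(i) with $z = y \in M$ shows that $C$ is $16$-similar to $C' := \pcap{K}{\hat{y}^{*}}$, the $\eps$-representative cap of $y$; since a bounded expansion changes volume by a bounded factor (Lemma~\ref{lem:cap-exp}), $\vol_{K}(C) = \Theta(\vol_{K}(C'))$, and it suffices to bound $\vol_{K}(C')\cdot\vol_{K^{*}}(M)$ from below. Here $C'$ is a $\frac{1}{3}$-shallow cap (as $\eps \le \frac{1}{16}$), so by Lemma~\ref{lem:min-vol-cap2} its volume is $\Theta(\vol(M_{K}(p)))$ with $p$ the centroid of $\base(C')$; dually $\vol_{K^{*}}(M) = \Theta(\vol_{K^{*}}(M_{K^{*}}(y)))$, and $K^{*}$ is well-centered because $\mu(K^{*}) = \mu(K) = O(1)$. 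Since $\psi(C') = y$ up to moving within a common bounded Hilbert neighborhood, the claim reduces to the Mahler-type reciprocity
\[
	\vol_{K}(C') \cdot \vol_{K^{*}}\!\bigl(M_{K^{*}}(\psi(C'))\bigr) ~=~ \Omega\bigl(\eps^{d+1}\bigr)
\]
for an arbitrary width-$\eps$ cap of a well-centered body. I would prove this (or cite it from \cite{AAFM22,AFM12b}) by normalizing $C'$ affinely and constructing from $M_{K}(p)$ and $M_{K^{*}}(\psi(C'))$ a dual pair of convex bodies $L$, $L^{*}$ whose volumes exceed $\vol(M_{K}(p))$ and $\vol(M_{K^{*}}(\psi(C')))$, respectively, by reciprocal powers of $1/\eps$ whose product is $1/\eps^{d+1}$; the target then follows from the universal bound $\mu(L) = \vol(L)\vol(L^{*}) = \Omega(1)$ of Lemma~\ref{lem:mahler-bounds}. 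The cross-ratio/harmonic-bundle language of Section~\ref{s:projective} is convenient here, since the ray-distance-$\eps$ point of a ray is precisely a harmonic conjugate, which is how $\psi$ interacts with polarity.

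The main obstacle is the core step of part~(i) --- that dualizing two co-located points of $K^{*}$ yields $O(1)$-similar caps of $K$. The subtlety is that polarity wildly distorts Euclidean distances and angles: a Macbeath region of $K^{*}$ that is flat in the tangential directions (which happens exactly when $K^{*}$ is pancake-like, i.e., $K$ is pencil-like) subtends a large range of directions at $O$, so one cannot argue that the two base hyperplanes in $K$ are nearly parallel; the argument must stay intrinsic, tracking the images of the Macbeath region's own supporting structure and exploiting the shallowness of the caps (via the cone-like behavior underlying Lemma~\ref{lem:cap-exp}) to turn ``sandwiched between two nearby hyperplanes'' into ``$O(1)$-similar caps.'' The secondary, purely computational obstacle is the choice of the auxiliary dual pair $L$, $L^{*}$ in part~(ii) so the two relative volumes carry reciprocal powers of $\eps$; this is where the hypotheses $\ray_{K^{*}}(y) = \eps$ (not merely $\Theta(\eps)$) and the well-centeredness of $K$ are consumed.
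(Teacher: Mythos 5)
There is nothing to compare against inside this paper: Lemma~\ref{lem:mahler-mac} is quoted verbatim from~\cite{AFM23arxiv}, and the surrounding text states exactly which two facts from that work it rests on --- that points lying in a common shrunken Macbeath region of $K^*$ have similar $\eps$-representative caps (giving part~(i)), and a Mahler-type reciprocity between cap volumes in $K$ and Macbeath-region volumes in $K^*$ (giving part~(ii)). Your proposal, stripped of its glue, is precisely a reduction of the lemma to those same two facts, both of which you leave uncited-in-detail and, in your own closing paragraph, identify as the ``main obstacle'' and the ``secondary obstacle.'' Deferring the co-location-implies-similar-caps step to ``Section~2.6 of \cite{AFM23arxiv}'' and proposing to ``prove or cite'' the reciprocity is not a proof of the statement; it is a restatement of where the proof lives. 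So the substantive content of both parts is a genuine gap in the proposal.

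The portions you do work out are essentially sound but peripheral, and not all the details close. The Hilbert-metric computation placing $\psi(C)$ near $y$ is fine in spirit, but the constants do not land where you need them: $w \in M^{1/5}(y)$ gives $d_{K^*}(y,w) \le \tfrac12\ln\tfrac{3}{2}$, and two points on the ray $\rho$ at ray distances in $[\eps/2,2\eps]$ can be at Hilbert distance up to roughly $\tfrac12\ln 4 + O(\eps)$, so the triangle inequality forces a Macbeath expansion factor $e^{2r}-1 > 1$, outside the range $0<\lambda<1$ of Lemma~\ref{lem:nesting} (and far outside the $\lambda \le 1/5$ range of Lemma~\ref{lem:mac-mac}); hence the advertised ``chase the constants to arrive at $16$'' is not actually available from the cited tools without the very cap-similarity machinery you are trying to import. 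Your observation that $\psi(C') = y$ exactly, and the volume comparisons via Lemma~\ref{lem:cap-exp} and Lemma~\ref{lem:min-vol-cap2}, are correct, but they only reduce part~(ii) to the inequality $\vol_K(C')\cdot\vol_{K^*}(M_{K^*}(y)) = \Omega(\eps^{d+1})$, which \emph{is} part~(ii) up to constant factors; the sketched construction of a dual pair $L, L^*$ carrying reciprocal powers of $\eps$ is the entire difficulty and is never carried out. To make this a proof you would need either an honest, precisely matched citation of the two statements from~\cite{AFM23arxiv} (in which case the lemma needs no proof at all, as in the paper), or a self-contained argument for the cap-similarity correspondence and the reciprocity --- neither of which the proposal supplies.
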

%-----------------------------------------------------------------------

%-----------------------------------------------------------------------
\begin{figure}[htbp]
    \centering\includegraphics[scale=0.8,page=1]{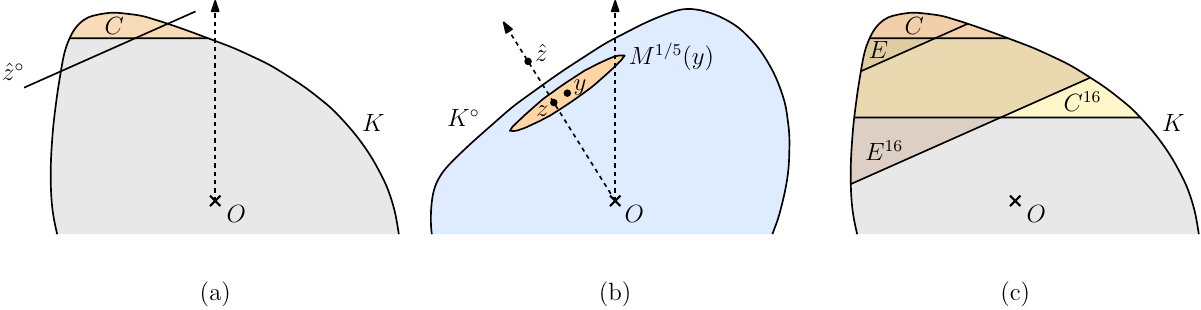}
    \caption{Statement of Lemma~\ref{lem:mahler-mac}(i). Cap $E$ is the $\eps$-representative cap of $z$.} \label{f:sandwich} 
\end{figure}
%-----------------------------------------------------------------------

Next we present a general tool which will be useful in bounding the sizes of the MNets of interest to us. Let $K \subseteq \RE^d$ be a well-centered convex body. For any shallow cap $C$ of $K$, define a point $\psi(C)$ in $\stdpolar{K}$ as follows. In polar space, consider the ray shot from $O$ orthogonal to the base of $C$. We let $\psi(C) \in \stdpolar{K}$ be the point on this ray that has ray distance $\width(C)$.

Let $\CC$ be a set of shallow caps of $K$, let $\Lambda \subseteq K$ denote the set of centroids of the bases of the caps of $\CC$, and let $\Lambda' = \{\psi(C) : C \in \CC\}$. Let $X$ be a $(K,\Lambda)$-MNet, and let $Y$ be a $(\stdpolar{K}, \Lambda')$-MNet. For each $x \in \Lambda$, let $C_x$ denote a cap of $\CC$ such that $x$ is the centroid of its base. (Clearly, such a cap exists. If there is more than one, then we choose one arbitrarily.) Also, for each $x \in X$, define $M_x = M_K^{1/4c}(x)$, where $c$ is the expansion factor of the MNets. Similarly, for $y \in Y$, define $M_y = M_{\stdpolar{K}}^{1/4c}(y)$. The following lemma shows that it is possible to construct a bipartite graph $(X,Y)$ with certain properties. 

%-----------------------------------------------------------------------
\begin{lemma} \label{lem:bipartite} 
Given a well-centered convex body $K \subseteq \RE^d$ and entities $\CC, \Lambda, \Lambda', X, Y$ as defined above, there is a bipartite graph $(X,Y)$ such that there is exactly one edge incident to each vertex of $X$ and the degree of each vertex of $Y$ is $O(1)$. Furthermore, for any $x \in X$ and $y \in Y$, if there is an edge $(x,y)$, then $\vol_K(M_x) \cdot \vol_{\stdpolar{K}}(M_y) = \Omega(\delta^{d+1})$, where $\delta = \width(C_x)$. 
\end{lemma}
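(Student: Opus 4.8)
The plan is to build the bipartite graph greedily out of $X$, read off the volume‑product inequality from Lemma~\ref{lem:mahler-mac}(ii), and control the degrees in $Y$ by combining the similarity of representative caps (Lemma~\ref{lem:mahler-mac}(i)) with the packing property of the MNet $X$.

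\emph{Construction and set‑up.} For each $x \in X$ the point $\psi(C_x)$ lies in $\Lambda'$, so by the covering property of the $(K^*,\Lambda')$-MNet $Y$ (Lemma~\ref{lem:delone}) there is a vertex $y \in Y$ with $\psi(C_x) \in M_{K^*}^{1/c}(y)$; fix one such $y$ and add the edge $(x,y)$. By construction every vertex of $X$ has degree exactly one. Now fix any edge $(x,y)$ and put $\eps = \ray_{K^*}(y)$ and $\delta = \width(C_x)$. Since $y \in \Lambda'$ is $\psi$ of a shallow cap of $\CC$, $\eps$ is at most the shallowness constant, so in particular $\eps \le \tfrac1{16}$. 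Since $\psi(C_x) \in M_{K^*}^{1/c}(y) \subseteq M_{K^*}^{1/5}(y)$ and $y$ is shallow, Lemma~\ref{lem:core-ray} (in $K^*$) gives $\eps/2 \le \ray_{K^*}(\psi(C_x)) \le 2\eps$; but $\ray_{K^*}(\psi(C_x)) = \delta$ by the definition of $\psi$, so $\eps/2 \le \delta \le 2\eps$. Finally, the ray from $O$ orthogonal to $\base(C_x)$ passes through $\psi(C_x)$ and hence meets $M_{K^*}^{1/5}(y)$. Thus Lemma~\ref{lem:mahler-mac} applies with $C = C_x$, parameter $\eps$, and Macbeath region $M_{K^*}^{1/5}(y)$.

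\emph{Volume product.} Part (ii) of Lemma~\ref{lem:mahler-mac} gives $\vol_K(C_x)\cdot\vol_{K^*}(M_{K^*}^{1/5}(y)) = \Omega(\eps^{d+1}) = \Omega(\delta^{d+1})$. Since $x$ is the centroid of the base of the $\tfrac13$-shallow cap $C_x$, Lemma~\ref{lem:min-vol-cap2} gives $\vol(M_K(x)) = \Theta(\vol(C_x))$, hence $\vol_K(M_x) = \Theta(\vol_K(C_x))$; and $M_y$ and $M_{K^*}^{1/5}(y)$ are scalings of each other by a fixed factor about $y$, so $\vol_{K^*}(M_y) = \Theta(\vol_{K^*}(M_{K^*}^{1/5}(y)))$. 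Multiplying, $\vol_K(M_x)\cdot\vol_{K^*}(M_y) = \Omega(\delta^{d+1})$, as claimed.

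\emph{Degree bound --- the main obstacle.} Fix $y \in Y$ and let $C^{\star}$ be the $\eps$-representative cap of $y$, where $\eps = \ray_{K^*}(y)$; its relative width is $\eps$, which is at most the shallowness constant, so that constant can be chosen small enough that $(C^{\star})^{32}$ is still $\tfrac13$-shallow. For every neighbor $x$ of $y$, Lemma~\ref{lem:mahler-mac}(i) applied with the point $z = y \in M_{K^*}^{1/5}(y)$ shows that $C_x$ is $16$-similar to $C^{\star}$ --- crucially the \emph{same} fixed cap for all neighbors of $y$. Hence $x \in C_x \subseteq (C^{\star})^{16}$, so all neighbors of $y$ lie in the single cap $(C^{\star})^{16}$; and since $M_K^{1/4c}(x)$ contains $x$ and hence meets $(C^{\star})^{16}$, Lemma~\ref{lem:mac-cap} gives $M_K^{1/4c}(x) \subseteq ((C^{\star})^{16})^2 = (C^{\star})^{32}$. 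Moreover $16$-similarity together with Lemma~\ref{lem:cap-exp} gives $\vol(C_x) = \Theta(\vol(C^{\star}))$, whence Lemma~\ref{lem:min-vol-cap2} gives $\vol(M_K^{1/4c}(x)) = \Theta(\vol(C_x)) = \Theta(\vol(C^{\star}))$. The regions $\{M_K^{1/4c}(x) : x \in X\}$ are pairwise disjoint by the packing property of $X$ (Lemma~\ref{lem:delone}); those with $x$ a neighbor of $y$ all lie inside $(C^{\star})^{32}$, whose volume is $O(\vol(C^{\star}))$ by Lemma~\ref{lem:cap-exp}, while each of them has volume $\Omega(\vol(C^{\star}))$. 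Therefore only $O(1)$ of them can occur, i.e. the degree of $y$ is $O(1)$. The delicate point is precisely this: that the defining caps of all the neighbors of a fixed $y$ are \emph{simultaneously} similar to one fixed cap, which is what lets both the centers $x$ and their shrunken Macbeath regions be trapped inside a single cap of controlled volume so that packing can count them.
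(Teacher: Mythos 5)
Your proposal is correct and follows essentially the same route as the paper's proof: the same greedy construction via the covering property, the same use of Lemma~\ref{lem:core-ray} to relate $\delta$ to $\ray_{K^*}(y)$, Lemma~\ref{lem:mahler-mac}(ii) for the volume product, and Lemma~\ref{lem:mahler-mac}(i) plus a packing argument inside a constant-factor expansion of the representative cap for the degree bound. The only (inessential) difference is that you trap all of $M_K^{1/4c}(x)$ inside $(C^{\star})^{32}$ via Lemma~\ref{lem:mac-cap}, whereas the paper simply observes that half of $M_x$ already lies in $C_x \subseteq C_y^{16}$.
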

%-----------------------------------------------------------------------

%-----------------------------------------------------------------------
\begin{proof}
First, we show how to construct the bipartite graph $(X,Y)$. Let $x$ be any point of $X$ and let $y' = \psi(C_x)$. By the covering property of MNets, there exists $y \in Y$ such that $M^{1/c}(y)$ contains $y'$. We select one such $y$ and add an edge in the bipartite graph between $x$ and $y$. From our construction it follows that there is exactly one edge incident to each vertex of $X$.

%-----------------------------------------------------------------------
\begin{figure}[htbp]
    \centering\includegraphics[scale=0.8,page=2]{sandwich}
    \caption{Proof of Lemma~\ref{lem:bipartite}.} \label{f:bipartite} 
\end{figure}
%-----------------------------------------------------------------------

Next, we show that if there is an edge $(x,y)$, then $\vol_K(M_x) \cdot \vol_{\stdpolar{K}}(M_y) = \Omega(\delta^{d+1})$. By definition, $\ray(y') = \width(C_x) = \delta$. Letting $\eps = \ray(y)$ and applying Lemma~\ref{lem:core-ray}, we have $\eps/2 \leq \ray(y') \leq 2 \eps$. Thus, $\eps/2 \leq \width(C_x) \leq 2 \eps$. Observe that the cap $C_x$ and the Macbeath region $M^{1/5}(y)$ satisfy the conditions of Lemma~\ref{lem:mahler-mac}. Recalling that $c$ is a constant, by part (ii) of this lemma, we have $\vol_K(C_x) \cdot \vol_{\stdpolar{K}}(M_y) = \Omega(\eps^{d+1})$. Also, by Lemma~\ref{lem:min-vol-cap2}, $\vol(M_x) = \Omega(\vol(C_x))$. Thus, $\vol_K(M_x) \cdot \vol_{\stdpolar{K}}(M_y) = \Omega(\eps^{d+1})$.

It remains to prove that the degree of each vertex of $Y$ is $O(1)$. Let $y$ be any vertex of $Y$ and let $\eps = \ray(y)$. For any edge $(x,y)$, we showed above that the cap $C_x$ and the Macbeath region $M^{1/5}(y)$ satisfy the conditions of Lemma~\ref{lem:mahler-mac}. By part (i) of this lemma, it follows that the cap $C_x$ and the $\eps$-representative cap of $y$ are $16$-similar. 

Letting $C_y$ denote the $\eps$-representative cap of $y$, we have $C_x \subseteq C_y^{16}$ and $C_y \subseteq C_x^{16}$. Applying Lemma~\ref{lem:cap-exp}, we have $\vol(C_x) = \Omega(\vol(C_x^{16})) = \Omega(\vol(C_y))$, and by Lemma~\ref{lem:min-vol-cap2}, we have $\vol(M_x) = \Omega(\vol(C_x))$. Thus, $\vol(M_x) = \Omega(\vol(C_y))$. Recall that half of the Macbeath region $M_x$ lies within $C_x$, and therefore it lies within $C_y^{16}$. By Lemma~\ref{lem:cap-exp}, $\vol(C_y^{16}) = O(\vol(C_y))$. Since the Macbeath regions of $\MM^{1/4c}(X)$ are disjoint, a straightforward packing argument implies that $y$ has $O(1)$ neighbors.
\end{proof}
%-----------------------------------------------------------------------

Expressing the total number of edges in the graph as the sum of the degrees of the vertices of $Y$, we see that this quantity is $O(|Y|)$. The following corollary is immediate.

%-----------------------------------------------------------------------
\begin{corollary} \label{cor:bipartite} 
Given a convex body $K \subseteq \RE^d$, and the entities $\CC, \Lambda, \Lambda', X, Y$ as defined above, then $|X| = O(|Y|)$. 
\end{corollary}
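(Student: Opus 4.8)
The plan is to obtain the bound by a double-counting argument on the edges of the bipartite graph $(X,Y)$ furnished by Lemma~\ref{lem:bipartite}. First I would invoke that lemma to fix such a graph and recall its two structural properties: exactly one edge is incident to each vertex of $X$, and the degree of each vertex of $Y$ is $O(1)$, with the constant depending only on $d$ (through the expansion factor $c$ of the MNets and the packing bound in $K^*$).

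Next I would count the edge set $E$ of this graph in two ways. Summing degrees over $X$ gives $|E| = \sum_{x \in X} \deg(x) = |X|$, since each such degree equals $1$. Summing degrees over $Y$ gives $|E| = \sum_{y \in Y} \deg(y) \le c' \, |Y|$, where $c' = O(1)$ is the degree bound from Lemma~\ref{lem:bipartite}. Equating the two expressions yields $|X| = |E| \le c' \, |Y| = O(|Y|)$, which is the claim.

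There is essentially no obstacle at this stage: all the geometric content — the reciprocal Mahler-type volume relationship of Lemma~\ref{lem:mahler-mac}, the similarity of representative caps, and the packing argument bounding the degrees in $Y$ — has already been absorbed into the proof of Lemma~\ref{lem:bipartite}. The corollary is a purely combinatorial consequence, and the only point requiring a moment's care is to note that the implicit constant depends only on $d$ and not on $K$ or on the family $\CC$ of caps, which is inherited directly from the corresponding dependence in Lemma~\ref{lem:bipartite}.
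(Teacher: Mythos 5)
Your proposal is correct and follows exactly the paper's argument: the corollary is obtained by counting the edges of the bipartite graph from Lemma~\ref{lem:bipartite} in two ways, noting that the edge count equals $|X|$ (one edge per vertex of $X$) and is $O(|Y|)$ (constant degree on the $Y$ side). Nothing further is needed.
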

%-----------------------------------------------------------------------

We are now ready to bound the sizes of MNets in the important special case involving caps of roughly the same width, which map in the polar to points at roughly the same ray distance. Lemmas~\ref{lem:fixed-width} and~\ref{lem:fixed-ray} bound the sizes of the MNets in these cases. We also bound the cardinality of important subsets that arise in our applications. 

%-----------------------------------------------------------------------
\begin{lemma} \label{lem:fixed-width}
Let $0 < \eps \leq \eps_0$, where $\eps_0$ is a sufficiently small constant, and let $K \subseteq \RE^d$ be a well-centered convex body. Let $\CC$ be a set of
caps of $K$ of width between $\eps$ and $2\eps$, let $\Lambda \subseteq K$ denote the set of centroids of the bases of the caps of $\CC$, and let $X$ be a $(K,\Lambda)$-MNet. Then:
\begin{enumerate}
\item[$(i)$] $|X| = O(1/\eps^{(d-1)/2})$.
\item[$(ii)$] For any positive real $f \leq 1$, let $X_f \subseteq X$ be such that the total relative volume of the Macbeath regions of $\MM^{1/4c}(X_f)$ is $O(f \eps)$. Then $|X_f|$ is $O\big( \sqrt{f} / \eps^{(d-1)/2} \big)$.
\end{enumerate}
\end{lemma}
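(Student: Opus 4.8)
The two parts of Lemma~\ref{lem:fixed-width} are really about the polar picture: a cap of $K$ of width roughly $\eps$ corresponds, under the map $\psi$, to a point of $K^*$ of ray distance roughly $\eps$, and the machinery of Lemma~\ref{lem:bipartite} (and its Corollary~\ref{cor:bipartite}) lets us trade an MNet on the $\Lambda$-side for an MNet on the $\Lambda'$-side in $K^*$. So the first step is to invoke Corollary~\ref{cor:bipartite} with $\CC$ the set of caps of width between $\eps$ and $2\eps$, giving $|X| = O(|Y|)$ where $Y$ is a $(K^*,\Lambda')$-MNet and $\Lambda' = \{\psi(C) : C \in \CC\}$. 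By construction each $\psi(C)$ has ray distance $\width(C) \in [\eps, 2\eps]$, so $\Lambda'$ is contained in the shell of points of $K^*$ at ray distance between $\eps$ and $2\eps$. Since $K^*$ is well-centered (as $K$ is), $Y$ is exactly the kind of fixed-ray-distance MNet covered by the conference-version Lemma~\ref{lem:fixed-ray}; equivalently, I expect this section to prove the fixed-ray-distance bound directly. Thus part $(i)$ reduces to showing that an MNet on points of ray distance $\Theta(\eps)$ in a well-centered body has size $O(1/\eps^{(d-1)/2})$.

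**Proving the ray-distance bound.** For this, the strategy is a volume/packing argument. Let $Y$ be a $(K^*,\Lambda')$-MNet, so the Macbeath regions $\MM_{K^*}^{1/4c}(Y)$ are pairwise disjoint (packing property of Lemma~\ref{lem:delone}). For each $y \in Y$ we have $\ray_{K^*}(y) = \Theta(\eps)$, so by Lemma~\ref{lem:vol-mac-bounds}(ii), $\vol_{K^*}(M^{1/4c}(y)) = \Omega(\eps^d)$ — wait, that only gives $|Y| = O(1/\eps^d)$, which is too weak. The standard fix (as in~\cite{AFM12b,AFM23}) is to also use the \emph{polar} estimate: pair each $M_y$ in $K^*$ with the corresponding cap $C$ in $K$ via Lemma~\ref{lem:mahler-mac}(ii), which gives $\vol_K(C)\cdot\vol_{K^*}(M_y) = \Omega(\eps^{d+1})$, and combine with the upper bound $\vol_K(C) = O(\eps)$ (the cap has width $\Theta(\eps)$, so it is contained in the shell $K\setminus(1-2\eps)K$ of relative volume $O(\eps)$) to conclude $\vol_{K^*}(M_y) = \Omega(\eps^d)$ — still the same. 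The real gain comes from a two-sided counting: the caps $C_x$ in $K$ associated to $X$ have disjoint ``cores'' of volume $\Omega(\eps^d)$ \emph{and} the points $\psi(C_x)$ in $K^*$ have disjoint Macbeath regions of volume $\Omega(\eps^d)$, while the \emph{product} is $\Omega(\eps^{d+1})$; a Cauchy--Schwarz argument over the edges of the bipartite graph then yields $|X| = O(1/\eps^{(d-1)/2})$. Concretely: $\sum_{x\in X} \sqrt{\vol_K(M_x)\cdot\vol_{K^*}(M_{y(x)})} \ge |X|\cdot\Omega(\eps^{(d+1)/2})$ by Lemma~\ref{lem:bipartite}, while by Cauchy--Schwarz this sum is at most $\sqrt{\sum_x \vol_K(M_x)}\cdot\sqrt{\sum_x \vol_{K^*}(M_{y(x)})}$; the first factor is $O(\sqrt{\eps})$ since the $M_x$ are disjoint and live in a shell of relative volume $O(\eps)$, and the second factor is $O(\sqrt{\eps\cdot\deg_{\max}})=O(\sqrt{\eps})$ since the $M_y$ are disjoint, live in a shell of $K^*$ of relative volume $O(\eps)$, and each is counted $O(1)$ times (bounded $Y$-degree). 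Rearranging gives $|X| = O(\sqrt{\eps}/\eps^{(d+1)/2}) = O(1/\eps^{(d-1)/2})$, proving $(i)$.

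**Part $(ii)$.** Here we are told that the subset $X_f \subseteq X$ has $\sum_{x\in X_f}\vol_K(M_x^{1/4c}) = O(f\eps)$. Rerun exactly the Cauchy--Schwarz computation but restricted to $X_f$: the left side is still $\ge |X_f|\cdot\Omega(\eps^{(d+1)/2})$ (each edge still satisfies the product bound of Lemma~\ref{lem:bipartite}), and on the right the first factor $\sqrt{\sum_{x\in X_f}\vol_K(M_x)}$ is now $O(\sqrt{f\eps})$ by hypothesis, while the second factor $\sqrt{\sum_{x\in X_f}\vol_{K^*}(M_{y(x)})}$ is still $O(\sqrt{\eps})$ (the images $M_y$ are a sub-multiset of disjoint regions in the $O(\eps)$-volume shell of $K^*$, counted $O(1)$ times each). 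This gives $|X_f| = O(\sqrt{f\eps}\cdot\sqrt{\eps}/\eps^{(d+1)/2}) = O(\sqrt{f}/\eps^{(d-1)/2})$, as claimed.

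**Main obstacle.** The delicate point is making the bipartite-graph counting valid simultaneously for the full set and for arbitrary subsets: we need that the bound $\vol_K(M_x)\cdot\vol_{K^*}(M_{y(x)}) = \Omega(\eps^{d+1})$ of Lemma~\ref{lem:bipartite} holds with $\delta = \width(C_x) = \Theta(\eps)$ uniformly over all $x\in X$ (which it does, since every cap in $\CC$ has width in $[\eps,2\eps]$), and that both the $X$-side Macbeath regions and the $Y$-side Macbeath regions are genuinely disjoint and confined to shells of relative volume $O(\eps)$ in $K$ and $K^*$ respectively — the latter requiring Lemma~\ref{lem:vol-mac-bounds}'s upper bounds together with the fact that $M_x \subseteq C_x^2$ lies near $\bd K$ and $M_y$ lies near $\bd K^*$. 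Once these uniform estimates are in place, the Cauchy--Schwarz step is routine, but setting them up cleanly — especially routing the $X$-side regions into a bounded-width shell and controlling the $O(1)$ overcounting on the $Y$-side via the degree bound already proved in Lemma~\ref{lem:bipartite} — is where the care lies.
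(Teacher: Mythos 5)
Your proposal is correct and takes essentially the same route as the paper: the bipartite graph of Lemma~\ref{lem:bipartite} with its product bound $\vol_K(M_x)\cdot\vol_{K^*}(M_y) = \Omega(\eps^{d+1})$, disjointness of the shrunken Macbeath regions inside shells of relative volume $O(\eps)$ on both sides, and the $O(1)$ degree bound on $Y$; your Cauchy--Schwarz step is just a repackaging of the paper's AM--GM argument on normalized (``fractional'') volumes, for both parts $(i)$ and $(ii)$. The only detail you gloss over is that Lemma~\ref{lem:mahler-mac} (and hence Lemma~\ref{lem:bipartite}) requires the caps to be shallow, so the paper first disposes of the regime where $\eps$ exceeds a small constant $\eps_0$ via Lemma~\ref{lem:wide-cap}, which shows that only $O(1)$ points of $X$ can have wide associated caps.
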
 
%-----------------------------------------------------------------------

Note that if $f$ is $o(\eps^{d-1})$, then $\sqrt{f}/\eps^{(d-1)/2}$ is $o(1)$ and so $X_f = \emptyset$.

%-----------------------------------------------------------------------
\begin{proof}
For each point $x \in X$, associate a cap $C_x \in \CC$ such that $x$ is the centroid of its base. Let $M_x = M^{1/4c}(x)$. Let $\Lambda' = \{\psi(C) : C \in \CC\}$, where $\psi$ is as defined above, and let $Y$ be a $(\stdpolar{K},\Lambda')$-MNet. Note that the entities $\CC, \Lambda, \Lambda', X, Y$ satisfy the preconditions of Lemma~\ref{lem:bipartite}.

Arguing as in Lemma~\ref{lem:vol-mac-bounds}, we can show that all the Macbeath regions of $\MM^{1/4c}(X)$ lie in the shell $S_K = K \setminus (1-4\eps)K$, all the Macbeath regions of $\MM^{1/4c}(Y)$ lie in the shell $S_{\stdpolar{K}} = \stdpolar{K} \setminus (1-4\eps)\stdpolar{K}$, $\vol_K(S_K) = O(\eps)$ and $\vol_{\stdpolar{K}}(S_{\stdpolar{K}}) = O(\eps)$.

Define the \emph{fractional volume} of a Macbeath region $M \in \MM^{1/4c}(X)$, denoted $\vol_f(M)$, to be $\vol(M) / \vol(S_K)$. Similarly, for $M \in \MM^{1/4c}(Y)$, define $\vol_f(M) = \vol(M) / \vol(S_{\stdpolar{K}})$. Consider the bipartite graph with vertex sets $X$ and $Y$ described in Lemma~\ref{lem:bipartite}. Recall that there is exactly one edge incident to each vertex of $X$ and the degree of each vertex of $Y$ is $O(1)$.  Further, if there is an edge $(x,y)$, then $\vol_K(M_x) \cdot \vol_{\stdpolar{K}}(M_y) = \Omega(\eps^{d+1})$. Thus
\[
    \vol_f(M_x) \cdot \vol_f(M_y) 
        ~ = ~ \Omega\left(\frac{\vol_K(M_x)}{\vol_K(S_K)} \cdot \frac{\vol_{\stdpolar{K}}(M_y)}{\vol_{\stdpolar{K}}(S_{\stdpolar{K}})} \right) 
        ~ = ~ \Omega\left(\frac{\eps^{d+1}}{\eps \cdot \eps} \right) 
        ~ = ~ \Omega\left(\eps^{d-1}\right).
\]
It follows that the quantity $\vol_f(M_x) + \vol_f(M_y)$ is $\Omega(\eps^{(d-1)/2})$ for any edge $(x,y)$. Summing this quantity over all the edges in the graph, we obtain a lower bound of $\Omega(|X| \, \eps^{(d-1)/2})$. To upper bound this quantity, note that by disjointness, $\sum_{x \in X} \vol_f(M_x) = O(1)$, $\sum_{y \in Y} \vol_f(M_y) = O(1)$, and the degree of each vertex is $O(1)$. Thus, the sum of this quantity over all the edges is $O(1)$. The lower and upper bounds together imply that $|X| = O(1/\eps^{(d-1)/2})$. 

The proof of (ii) is similar to (i). (In fact, (i) is a special case of (ii) for $f=1$.) By Lemma~\ref{lem:vol-mac-bounds}(i), the relative volume of any Macbeath region of $\MM^{1/4c}(X)$ is $\Omega(\eps^d)$. It follows that if $f = o(\eps^{d-1})$ then $X_f = \emptyset$ and so (ii) holds. We may therefore assume that $f = \Omega(\eps^{d-1})$. Letting $S'_K \subseteq S_K$ denote the union of the Macbeath regions of $\MM^{1/4c}(X_f)$, we are given that $\vol_K(S'_K) = O(f \eps)$. We modify the definition of the fractional volume of a Macbeath region $M \in \MM^{1/4c}(X_f)$, denoted $\vol_f(M)$, to be $\vol(M) / \vol(S'_K)$. Note that we keep the same definition of fractional volume for the Macbeath regions of $\MM^{1/4c}(Y)$, that is, for $M \in \MM^{1/4c}(Y)$, $\vol_f(M) = \vol(M) / \vol(S_{\stdpolar{K}})$. Arguing as in (i), but using the bound $\vol_K(S'_K) = O(f \eps)$ in place of  $\vol_K(S_K) = O(\eps)$, it follows that for any edge $(x,y)$ such that $x \in X_f$ and $y \in Y$, we have
\[
\vol_f(M_x) \cdot \vol_f(M_y) 
    ~ = ~ \Omega\left(\frac{\eps^{d-1}}{f} \right).
\]
Thus $\vol_f(M_x) + \vol_f(M_y) = \Omega(\eps^{(d-1)/2}/\sqrt{f})$ for any such edge $(x,y)$. As in (i), summing this quantity over all the edges incident to the vertices of $X_f$, we obtain a lower bound of $\Omega(|X_f| \, \eps^{(d-1)/2}/\sqrt{f})$, and an upper bound of $O(1)$. Together, these bounds imply that $|X_f| = O(\sqrt{f}/\eps^{(d-1)/2})$, as desired.
\end{proof}
%-----------------------------------------------------------------------

The following lemma is analogous to Lemma~\ref{lem:fixed-width}, but for points at similar ray distances. We will use this lemma in Section~\ref{s:hausdorff} together with the relative fatness properties of the harmonic-mean body to establish our volume-sensitive bound. 

%-----------------------------------------------------------------------
\begin{lemma} \label{lem:fixed-ray}
Let $\eps > 0$ and let $K \subseteq \RE^d$ be a well-centered convex body. Let $\Lambda$ be any set of points of $K$ at ray distances between $\eps$ and $2\eps$, and let $X$ be a $(K,\Lambda)$-MNet.
Then:
\begin{enumerate}
    \item[$(i)$] $|X| = O(1/\eps^{(d-1)/2})$.
    \item[$(ii)$] For any positive real $f \leq 1$, let $X_f \subseteq X$ be such that the total relative volume of the Macbeath regions of $\MM^{1/4c}(X_f)$ is $O(f\eps)$. Then $|X_f| = O(\sqrt{f}/\eps^{(d-1)/2})$.
\end{enumerate}
\end{lemma} 
%-----------------------------------------------------------------------

Note that if $f$ is $o(\eps^{d-1})$, then $\sqrt{f}/\eps^{(d-1)/2}$ is $o(1)$ and so $X_f = \emptyset$.

%-----------------------------------------------------------------------
\begin{proof}
Since ray distances are bounded by $1$, we may assume without loss of generality that $\eps \leq 1$, because otherwise $\Lambda$ is empty and the lemma holds vacuously. We associate a minimum volume cap $C_x$ with each point $x \in X$. Recall that $x$ is the centroid of the base of $C_x$. Let $M_x = M^{1/4c}(x)$, and let $\eps_0$ be a sufficiently small constant. By Lemma~\ref{lem:wide-cap}, if the width of $C_x$ exceeds any fixed constant, then $\vol_K(M_x) = \Omega(1)$. Thus, the number of points $x \in X$ such that $\width(C_x) > \eps_0$ is at most $O(1)$. Next, we bound the remaining points of $X$.

Since $\eps \leq \ray(x) \leq 2\eps$, it follows from Lemmas~\ref{lem:raydist-width} and \ref{lem:min-vol-cap3} that $\eps \leq \width(C_x) \leq 2(2d+1)\eps$. Let $\eps_i = 2^i \eps$. We partition the remaining points of $X$ into $O(\log d)$ groups, where group $i$ is denoted $X_i$, such that the widths of the minimum volume caps associated with the points of group $i$ lie between $\eps_i$ and $2 \eps_i$. By Lemma~\ref{lem:fixed-width}(i) (where the caps of the lemma are those associated with the points of $X_i$ and $\Lambda = X_i$), the number of points in group $i$ is $O(1/\eps_i^{(d-1)/2})$. Taking the sum of all the groups $i$, it follows that $|X| = O(1/\eps^{(d-1)/2})$, which proves (i).

The proof of (ii) is similar. By Lemma~\ref{lem:vol-mac-bounds}(ii), the relative volume of any Macbeath region of $M^{1/4c}(X)$ is $\Omega(\eps^d)$. It follows that if $f = o(\eps^{d-1})$ then $X_f = \emptyset$ and so (ii) holds. We may therefore assume that $f = \Omega(\eps^{d-1})$. Arguing as in (i), we can show that the number of points $x \in X_f$ such that $\width(C_x) > \eps_0$ is $O(1)$. We partition the remaining points into $O(\log d)$ groups as before. Applying Lemma~\ref{lem:fixed-width}(ii) to each group and summing the result proves (ii).
\end{proof}
%-----------------------------------------------------------------------

%=======================================================================
\section{Relative Fatness and the Harmonic-Mean Body} \label{s:hm-fat}
%=======================================================================

In this section, we establish properties of the harmonic-mean body that are critical to the main results of this paper. In particular, given two bodies $K_0 \subset K_1$, we show that $K_0$ is relatively fat with respect to $K_H$. In fact, we present a stronger result in Lemma~\ref{lem:HM-fat-main}, which implies relative fatness as an immediate consequence. We will employ this stronger result in Section~\ref{s:hausdorff} to obtain our volume-sensitive bounds for polytope approximation.

The proof of Lemma~\ref{lem:HM-fat-main} is based on the following technical lemma. For constant $\lambda$, it implies that for any point $b \in K_0$ that is not too close to the boundary of $K_0$, the Macbeath regions centered at $b$ with respect to $K_0$ and $K_H$, respectively, are roughly similar up to a constant scaling factor. This is established in the following two lemmas. 

%-----------------------------------------------------------------------
\begin{lemma} \label{lem:HM-fat-aux1}
Let $0 < \lambda < 1$ be a parameter. Let $K_0 \subset K_1$ be two convex bodies, where the origin $O$ lies in the interior of $K_0$, and let $K_H$ denote the harmonic-mean body of $K_0$ and $K_1$. Consider any ray emanating from the origin $O$, and let $c$ and $d$ denote the points of intersection of this ray with $\bd K_0$ and $\bd K_1$, respectively (see figure). Let $b \in K_0$ be any point on this ray such that the cross ratio $(O,c; d,b) \leq -\lambda$. Then in the Hilbert metric induced by $K_H$, there is a ball of radius at least $\lambda/6$ centered at $b$ that lies entirely within $K_0$, that is, $B_{K_H}(b,\lambda/6) \subseteq K_0$.
\end{lemma}
%-----------------------------------------------------------------------

%-----------------------------------------------------------------------
\begin{figure}[htbp]
    \centering\includegraphics[scale=0.8]{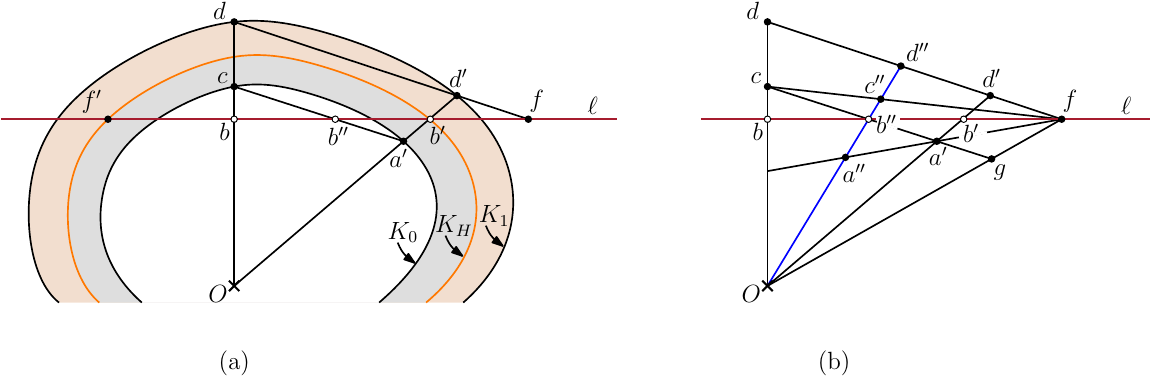}
    \caption{Lemma~\ref{lem:HM-fat-aux1} and its proof.} \label{f:harmonic-mean}
\end{figure}
%-----------------------------------------------------------------------

%-----------------------------------------------------------------------
\begin{proof}
Throughout the proof, we assume that distances are measured in the Hilbert metric induced by $K_H$. To prove the lemma, it suffices to show that for any line $\ell$ passing through $b$, the intersection points of this line with $\bd K_0$ are at distance at least $\lambda/6$ from $b$. We will show this on just one side of $b$, and the other side will follow by symmetry. Consider a ray shot from $b$ along $\ell$, and let $b'$ denote its intersection point with $\bd K_H$ (see Figure~\ref{f:harmonic-mean}(a)). Shoot a ray from $O$ through $b'$, and let $a'$ and $d'$ denote its respective intersections with $\bd K_0$ and $\bd K_1$. Clearly, $c$ and $a'$ lie on opposite sides of $\ell$, and therefore the line segment $c a'$ intersects $\ell$ at a point $b''$, which lies within $K_0$. It suffices to show that $b''$ is at distance at least $\lambda/6$ from $b$.

By the hypotheses of the lemma, we have $(O,c; d,b) \leq -\lambda$. It will simplify the proof to assume that it is in fact equal to $-\lambda$. Since $b'$ lies on $\bd K_H$, we have $(O,b'; d',a') = -1$. Letting $f'$ denote the intersection of $\ell$ and $\bd K_H$ on the other side of $b$, the Hilbert distance between $b$ and $b''$ is
\[
    d_{K_H}(b, b'')
        ~ = ~ \frac{1}{2} \ln (b, b''; b', f').
\]
Let $f$ denote the intersection of the line $d d'$ with $\ell$. (Note that it may lie on either side of $b$.) Clearly, $f$ is exterior to $K_1$, and therefore it does not lie within the segment $f' b'$. It follows that the cross ratio can only become smaller by using $f$ instead of $f'$, which implies that
\[
    d_{K_H}(b, b'')
        ~ \geq ~ \frac{1}{2} \ln (b, b''; b', f).
\]

The remainder of the proof involves relating these various cross ratios. First, let us bring them all onto a common line. Consider the ray from $O$ through $b''$, and let $a''$, $c''$, and $d''$ denote its respective intersections with the lines $f a'$, $f c$, and $f d$, and let $g$ denote the intersection of line $c a'$ with $O f$ (see Figure~\ref{f:harmonic-mean}(b)). By applying perspectivities through $f$ (recall Section~\ref{s:projective}), we have          
\begin{equation}
    -1 ~ = ~ (O,b'; d',a') ~ =_{[f]} ~ (O,b''; d'',a'')
    \qquad\text{and}\qquad
    -\lambda ~ = ~ (O,c; d,b) ~ =_{[f]} ~ (O,c''; d'',b''). \label{eq:HM-fat-aux1-1}
\end{equation}
Also, by applying first a perspectivity through $O$ and then another through $f$, we have
\[
    (b, b''; b', f)
        ~ =_{[O]} ~ (c,b''; a',g)
        ~ =_{[f]} ~ (c'',b''; a'',O).
\]

Taking the ratios of the items in Eq.~\eqref{eq:HM-fat-aux1-1}, we have
\[
    \frac{1}{\lambda}
        ~ = ~ \frac{(O,b''; d'',a'')}{(O,c''; d'',b'')}.
\]
By expanding the definition of cross ratio and simple algebra, it is easily verified that
\[
    \frac{1}{\lambda}
        ~ = ~ \frac{(O,b''; d'',a'')}{(O,c''; d'',b'')} 
        ~ = ~ -\frac{(O,b''; c'',a'')}{(O,d''; c'',b'')}.
\]
By standard identities on cross ratios \cite[Theorem 4.2]{Ric11}, $(O,d''; c'',b'') = 1 - (O,c''; d'',b'')$ and $(O,b'';c'',a'') = 1/(1 - (c'',b''; a'',O))$. Therefore
\[
    \frac{1}{\lambda}
        ~ = ~ -\frac{(O,b''; c'',a'')}{1 - (O,c''; d'',b'')}
        ~ = ~ -\frac{1/(1 - (c'',b''; a'',O))}{1 + \lambda}.
\]
Solving for $(c'',b''; a'',O)$, yields
\[
    (c'',b''; a'',O)
        ~ = ~ 1 + \frac{\lambda}{1+\lambda}.
\]
Using the fact that $0 < \lambda < 1$ and $\ln (1+x) \geq x/(x+1)$, we conclude that
\[
    d_{K_H}(b, b'')
        ~ \geq ~ \frac{1}{2} \ln (b, b''; b', f)
        ~ =    ~ \frac{1}{2} \ln (c'',b''; a'',O)
        ~ =    ~ \frac{1}{2} \ln \left( 1 + \frac{\lambda}{1+\lambda} \right)
        ~ \geq ~ \frac{\lambda}{2(1+2\lambda)}
        ~ \geq ~ \frac{\lambda}{6},
\]
as desired.
\end{proof}
%-----------------------------------------------------------------------

%-----------------------------------------------------------------------
\begin{lemma} \label{lem:HM-fat-aux2}
Given the entities defined in the statement of Lemma~\ref{lem:HM-fat-aux1}, $M_{K_H}^{s(\lambda)}(b) \subseteq M_{K_0}(b)$, where $s(\lambda) = \lambda / 8$.
\end{lemma}
%-----------------------------------------------------------------------

%-----------------------------------------------------------------------
\begin{proof}
By Lemma~\ref{lem:nesting}, for any convex body $K$, $x \in K$, and $0 < \alpha < 1$, 
\[
    M_K^{\alpha}(x) ~ \subseteq ~ B_{K}\left(x, \frac{1}{2} \ln \left(1 + \frac{2 \alpha}{1 - \alpha}\right)\right).
\]
From Lemma~\ref{lem:HM-fat-aux1}, $K_0$ contains the Hilbert ball (with respect to $K_H$) of radius $\lambda/6$ centered at $b$. Setting $\alpha = \lambda/8$, we have $\alpha \leq 1/8$ and therefore 
\[
    1 + \frac{2 \alpha}{1 - \alpha}
        ~ \leq ~ 1 + \frac{2 \lambda/8}{7/8}
        ~ =    ~ 1 + \frac{2\lambda}{7}.
\]
Thus, by setting $s(\lambda) = \alpha = \lambda/8$ and applying the inequality $\ln (1+x) \leq x$ for $x > 0$, it follows that 
\[
    M_{K_H}^{s(\lambda)}(b) 
        ~ \subseteq ~ B_{K_H}\left(b, \frac{1}{2} \ln \left(1 + \frac{2 \lambda}{7}\right)\right) 
        ~ \subseteq ~ B_{K_H}\left(b, \frac{1}{2} \cdot \frac{2\lambda}{7}\right) 
        ~ \subseteq ~ B_{K_H}\left(b, \frac{\lambda}{6}\right)    ~ \subseteq ~ K_0.
\]
Since $M_{K_H}^{s(\lambda)}(b)$ is centrally symmetric, it also lies within $M_{K_0}(b)$.
\end{proof}
%-----------------------------------------------------------------------

We have the following lemma which, in conjunction with Lemma~\ref{lem:HM-fat-aux2}, will be useful in proving Lemma~\ref{lem:HM-fat-main}.

%-----------------------------------------------------------------------
\begin{lemma} \label{lem:cr-lb}
Let $\lambda, K_0, K_1, K_H$, the origin $O$, and points $c$ and $d$ be as in Lemma~\ref{lem:HM-fat-aux1}. Let $h$ denote the point of intersection of ray $Oc$ with the boundary of $K_H$. Then:
\begin{enumerate}
\item[$(i)$] $\|Oc\| \geq \|hc\|$. 
\item[$(ii)$] Let $b$ be a point on the segment $Oc$ that is not contained in the interior of $M_{K_H}^{\lambda}(c)$. Then $(O,c;d,b) \leq -\lambda / 2$.
\end{enumerate}
\end{lemma}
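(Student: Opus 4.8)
Work in the two-dimensional flat containing the origin and the ray $Oc$ (both parts are purely one-dimensional statements about points on that ray, so this restriction loses nothing). Recall the defining property of the harmonic-mean body: the point $h$ on the ray $Oc$ satisfies $(O,h;d,c) = -1$, i.e. $1/\|Oh\| = \tfrac12(1/\|Oc\| + 1/\|Od\|)$, equivalently $\|Oh\|$ is the harmonic mean of $\|Oc\|$ and $\|Od\|$. Since $\|Od\| > \|Oc\|$, the harmonic mean lies strictly between $\|Oc\|$ and $\|Od\|$, so $h$ lies on the open segment $cd$ with $\|Oh\| > \|Oc\|$.

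For part~(i): from $1/\|Oh\| = \tfrac12(1/\|Oc\| + 1/\|Od\|)$ I want $\|Oc\| \ge \|hc\| = \|Oh\| - \|Oc\|$, i.e. $2\|Oc\| \ge \|Oh\|$. Let $x = \|Oc\|$, $y = \|Od\|$, so $\|Oh\| = 2xy/(x+y)$. Then $2x \ge 2xy/(x+y)$ is equivalent to $x+y \ge y$, which is immediate. This also records the handy facts $\|Oh\| \le 2\|Oc\|$ and (since the harmonic mean is at most the arithmetic mean) $\|Oh\| \le \tfrac12(\|Oc\|+\|Od\|)$, which I will want below.

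For part~(ii): here $b \in Oc$ is not in the interior of $M_{K_H}^{\lambda}(c)$. Since $M_{K_H}^{\lambda}(c)$ is the $\lambda$-scaling about $c$ of $M_{K_H}(c) = K_H \cap (2c - K_H)$, its extent along the ray toward the origin is controlled by how far $K_H$ reaches past $c$ toward (and beyond) the origin — that is, by the point $h' = 2c - h$ reflected through $c$ of the far intersection $h$, capped by the near boundary of $K_H$ on the origin side. Because $O \in \interior K_0 \subset \interior K_H$, the boundary of $K_H$ on the origin side of $c$ is at distance $\ge \|Oc\|$ from $c$; combined with $\|hc\| \le \|Oc\|$ from part~(i), the binding constraint is $\|hc\|$, so $M_{K_H}^{\lambda}(c)$ contains the sub-segment of $Oc$ of length $\lambda\,\|hc\|$ measured from $c$ toward $O$. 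Hence $b \notin \interior M_{K_H}^{\lambda}(c)$ forces $\|cb\| \ge \lambda\,\|hc\|$. I then need to convert this into a lower bound on $-(O,c;d,b) = \|cd\| \cdot \|Ob\| \,/\, (\|Od\| \cdot \|cb\|)$... more cleanly, $-(O,c;d,b) = 1/(\|cd\|/\|cb\|)$ scaled appropriately — using the $O$-at-infinity shortcut is cleanest: with the same projective transformation sending $O$ to infinity as in Lemma~\ref{lem:HM-fat-aux}, the cross ratio $-(O,c;d,b)$ degenerates to $\|cb\|/\|cd\|$ (ratios along the ray are preserved only projectively, so I will instead argue directly with cross ratios, not Euclidean ratios). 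Concretely: $-(O,c;d,b) = \dfrac{\|Ob\|\,\|cd\|}{\|Od\|\,\|cb\|}$ with all segments on the ray; using $\|cb\| \ge \lambda\|hc\|$, $\|Ob\| \ge \|Oc\|$ (as $b \in Oc$ puts... wait, need $b$ between $O$ and $c$, so $\|Ob\| \le \|Oc\|$) — so I must be careful about orientation. I will instead use the identity $-(O,c;d,b) = (O,b;d,c)^{-1}\cdot(\text{sign})$ and the fact, from the defining harmonic relation at $h$, that $\|cd\| = \|Od\| - \|Oc\|$ and $\|Oh\|\,\|cd\| \ge$ ... The clean route: since $(O,h;d,c)=-1$, one has $\|cd\|/\|Od\| = \|ch\|/\|Oh\| \le \|ch\|/\|Oc\| \le 1$ using part~(i), hence $\|cd\| \le \|Od\| \le 2\|Oc\|$ roughly, and then $-(O,c;d,b) \ge \|cb\|/(2\|Oc\|) \cdot (\|cd\|/\|Od\|)^{-1}$ — I will assemble the constant to land at $\lambda/2$.

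The main obstacle I anticipate is part~(ii): correctly identifying which face of $M_{K_H}(c)$ (the $K_H$ side or the reflected side) governs the segment $cb$, and then chasing the cross ratio with the right orientations/signs to get exactly the constant $\lambda/2$ rather than some other small multiple of $\lambda$. Part~(i) and the reduction to one dimension are routine. I expect the cleanest presentation of~(ii) keeps everything as cross ratios from the start (invoking invariance under the projective map that sends $O$ and $f$ to infinity, exactly as in the proof of Lemma~\ref{lem:HM-fat-aux}), so that the quantity $-(O,c;d,b)$ becomes a genuine length ratio along $\ell$, the condition $b \notin \interior M_{K_H}^{\lambda}(c)$ becomes a length inequality, and the harmonic relation $(O,h;d,c)=-1$ combined with part~(i) supplies the remaining factor of $2$.
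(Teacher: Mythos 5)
Part~(i) of your proposal is correct and is essentially the paper's argument: both just read the harmonic relation $(O,h;d,c)=-1$ along the ray. In part~(ii), the first half of your plan is also sound and is a nice elementary substitute for the paper's step: you compute the extent of $M_{K_H}^{\lambda}(c)$ along the line $Oc$ directly from $M_{K_H}(c)=K_H\cap(2c-K_H)$, use part~(i) to see that the binding side is the $h$-side, and conclude $\|cb\|\ge\lambda\|ch\|$; the paper instead invokes the Hilbert-metric fact (Corollary~\ref{cor:nesting}) to get $(h',c;h,b)\le-\lambda$ and then replaces $h'$ by $O$ by monotonicity of the cross ratio.

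The genuine gap is the conversion of $\|cb\|\ge\lambda\|ch\|$ into $(O,c;d,b)\le-\lambda/2$, which you never carry out (``I will assemble the constant''), and the identities you propose to use for it are incorrect. With the paper's sign conventions, $-(O,c;d,b)=\frac{\|Od\|\,\|cb\|}{\|Ob\|\,\|cd\|}$, not its reciprocal; the relation forced by $(O,h;d,c)=-1$ is $\|cd\|/\|Od\|=2\|ch\|/\|Oh\|$, not $\|ch\|/\|Oh\|$; and ``$\|Od\|\le 2\|Oc\|$ roughly'' is false---nothing bounds $\|Od\|$ in terms of $\|Oc\|$ (the outer body $K_1$ can be arbitrarily large), only $\|Oh\|\le 2\|Oc\|$ holds. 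In particular, any bound that discards the factor $\|Od\|/\|Ob\|$ cannot work, since $\|cb\|/\|cd\|\to 0$ as $\|Od\|\to\infty$. Your route does close, but only by keeping that factor: the harmonic relation gives $\|hd\|=\|ch\|\,\|Od\|/\|Oc\|$, hence $\|cd\|=\|ch\|\,(\|Oc\|+\|Od\|)/\|Oc\|$, and therefore
\[
-(O,c;d,b)~=~\frac{\|Od\|}{\|Ob\|}\cdot\frac{\|cb\|}{\|cd\|}
~\ge~\frac{\|Od\|}{\|Oc\|}\cdot\frac{\lambda\|ch\|\,\|Oc\|}{\|ch\|\,(\|Oc\|+\|Od\|)}
~=~\frac{\lambda\,\|Od\|}{\|Oc\|+\|Od\|}~\ge~\frac{\lambda}{2},
\]
using $\|Ob\|\le\|Oc\|\le\|Od\|$. (The paper gets the factor $2$ differently: it sends $O$ to infinity by a projective map, so that $\|ch\|=\|hd\|$ and $(O,c;d,b)=(O,c;h,b)/2$.) As written, however, part~(ii) of your proposal is not a proof.
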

%-----------------------------------------------------------------------

%-----------------------------------------------------------------------
\begin{proof}
Let $h$ and $h'$ denote the intersection points of the line $Oc$ with the boundary of $K_H$ such that $h', b, c,$ and $h$ appear in this order on the line. Since $(O,h;d,c)$ forms a harmonic bundle, we have $\|Od\|/\|Oc\| =\|hd\|/|hc\|$. Since $\|Od\| \geq \|hd\|$, it follows that $\|Oc\| \geq \|hc\|$, which proves (i).

To prove (ii), observe that since $b$ is not contained in the interior of $M_{K_H}^{\lambda}(c)$, Lemma~\ref{lem:nesting} implies that it is not contained in $B_{K_H}\big( c, \frac{1}{2} \ln (1+\lambda) \big)$. By the definition of the Hilbert distance, we have $(b,c;h,h') \geq 1+\lambda$. By standard cross-ratio identities \cite[Theorem 4.2]{Ric11} $(b,c;h,h') = 1 - (h',c;h,b)$, and so $(h',c;h,b) \leq -\lambda$. Replacing $h'$ by $O$ can only decrease the cross ratio and hence $(O,c;h,b) \leq -\lambda$. 

%-----------------------------------------------------------------------
\begin{figure}[htbp]
    \centering\includegraphics[scale=0.8]{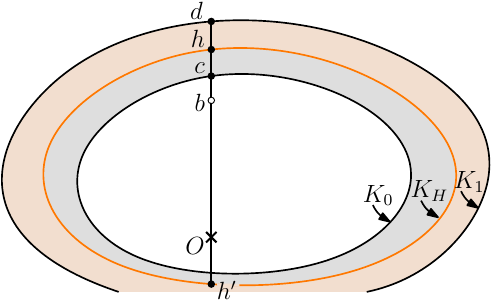}
    \caption{Proof of Lemma~\ref{lem:cr-lb}.} \label{f:cross-ratio-bound}
\end{figure}
%-----------------------------------------------------------------------

To complete the proof, it suffices to show that $(O,c;d,b) = (O,c;h,b) / 2$. We apply a projective transformation that maps $O$ to a point at infinity. (Recall that collinearities and cross ratios are preserved under projective transformations.) Since $O$ is at infinity, its contributions to the cross ratio cancel out, and we have 
\[
    (O,c;d,b) 
        ~ = ~ -\frac{\|cb\|}{\|cd\|}, \qquad
    (O,c;h,b) 
        ~ = ~ -\frac{\|cb\|}{\|ch\|}, \text{~~~and~~~} 
    (O,h;d,c) 
        ~ = ~ -\frac{\|ch\|}{\|hd\|}.
\]
Also, according to the definition of the harmonic-mean body, $(O,h;d,c)$ is a harmonic bundle, and so $\|ch\| = \|hd\|$. Thus,
\[
    (O,c;d,b) 
        ~ = ~ -\frac{\|cb\|}{\|cd\|} 
        ~ = ~ -\frac{\|cb\|}{\|ch\| + \|hd\|} 
        ~ = ~ -\frac{\|cb\|}{2\|ch\|} 
        ~ = ~ \frac{1}{2} \SP (O,c;h,b),
\]
as desired.
\end{proof}
%-----------------------------------------------------------------------

We now have all the key ingredients to present the main result of this section. The relative fatness of $K_0$ with respect to $K_H$ is an immediate consequence of parts (i) and (ii) of this lemma. In order to state part (iii), we need a definition. Given a convex body $K$ with the origin $O$ in its interior and a region $R \subseteq K$, define the \emph{shadow} of $R$ with respect to $K$, denoted \emph{$\shadow_K(R)$}, to be the set of points $x \in K$ such that the segment $Ox$ intersects $R$. 

%-----------------------------------------------------------------------
\begin{lemma} \label{lem:HM-fat-main}
Let $0 < \beta \leq 1$ be a real parameter. Let $K_0 \subset K_1$ be two convex bodies, let the origin $O$ lie in the interior of $K_0$, and let $K_H$ denote the harmonic-mean body of $K_0$ and $K_1$. Let $c$ be any point on the boundary of $K_0$, and let $M = M_{K_H}^{\beta}(c)$. Then there exists a convex body $M'$ such that 
\begin{enumerate}\setlength{\itemsep}{-0.5ex}\setlength{\parsep}{0pt}
\item[$(i)$] $\vol(M') = \Omega(\vol(M))$,
\item[$(ii)$] $M' \subseteq M \cap K_0$, and
\item[$(iii)$] $\shadow_{K_0}(M') \subseteq M$.
\end{enumerate}
\end{lemma}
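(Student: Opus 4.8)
The plan is to produce $M'$ explicitly as a scaled-down copy of $M$, shrunk by a constant factor (depending only on $\beta$) about a point lying deep inside $K_0$, and then to verify the three properties. First I would choose the center of $M'$. Let $c_0$ be the point where the ray $O c$ meets $\bd K_0$ (which is $c$ itself, since $c \in \bd K_0$) — so instead I would pull the center \emph{inward}: let $b$ be the point on segment $O c$ at relative ray distance, with respect to $K_H$, equal to some small constant $\lambda = \lambda(\beta)$; equivalently, the point on $O c$ on the boundary of $M_{K_H}^{\lambda}(c)$. The natural choice is to take $M' = M_{K_H}^{\,\beta'}(b)$ for a suitable constant $\beta' = \beta'(\beta) \le \beta$, or better, a translate of a constant-scaled copy of $M = M_{K_H}^{\beta}(c)$ recentered at $b$, using Lemma~\ref{lem:mac-trans} to control how translation interacts with Macbeath regions.

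\textbf{Step 1: locate $b$ and apply Corollary~\ref{cor:HM-fat-aux}.} Apply Lemma~\ref{lem:cr-lb}(ii) with parameter $\lambda$: taking $b$ on segment $Oc$ not in the interior of $M_{K_H}^{\lambda}(c)$ gives $(O,c;d,b) \le -\lambda/2$. Then Corollary~\ref{cor:HM-fat-aux}, applied with $\lambda/2$ in place of $\lambda$, yields $M_{K_H}^{s(\lambda/2)}(b) \subseteq M_{K_0}(b)$, where $s(\lambda/2) = \lambda/12$. In particular, a Macbeath region with respect to $K_H$, centered at $b$ and scaled by a constant, is contained in $K_0$. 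This is the key containment that makes property (ii) possible — the point $c$ itself is on $\bd K_0$, so we \emph{must} retreat inward to a point $b$ whose $K_H$-Macbeath region has a constant fraction inside $K_0$.

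\textbf{Step 2: choose the scale and verify (i) and (ii).} Since $b$ lies on the boundary of $M_{K_H}^{\lambda}(c)$, i.e.\ $b \in c + \lambda(M_{K_H}(c) - c)$, and we want $\lambda$ small compared to $\beta$, I would instead locate $b$ inside $c + \lambda R$ where $R = M_{K_H}(c) - c$; by Lemma~\ref{lem:mac-trans} (with the roles: $x = c$, the point $b$, and scale $\gamma$), for $\gamma \le 1/10$ we get $b + \gamma R \subseteq M_{K_H}^{2\gamma}(b)$. Choosing $\gamma$ a small enough constant multiple of $\beta$ and taking $M' = b + \gamma' R$ for an even smaller constant $\gamma' \le \min(\gamma, s(\lambda/2)/2)$, we get on one hand $M' \subseteq M_{K_H}^{2\gamma'}(b) \subseteq M_{K_H}^{s(\lambda/2)}(b) \subseteq K_0$ (Step 1), and on the other hand $M' = b + \gamma' R \subseteq c + (\lambda + \gamma')R \subseteq c + \beta R = M_{K_H}^{\beta}(c) = M$ provided $\lambda + \gamma' \le \beta$, which we arrange by taking $\lambda, \gamma'$ small constant multiples of $\beta$. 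Thus $M' \subseteq M \cap K_0$, giving (ii). For (i): $M' = b + \gamma' R$ is a copy of $M = c + \beta R$ scaled by $\gamma'/\beta$, hence $\vol(M') = (\gamma'/\beta)^d \vol(M) = \Omega(\vol(M))$ since $\gamma'/\beta$ is a constant depending only on $\beta$ and $d$.

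\textbf{Step 3: the shadow property (iii) — the main obstacle.} This is where the real work lies. We need every point $x \in K_0$ with $Ox \cap M' \ne \emptyset$ to lie in $M$. Fix such an $x$ and let $y \in M' \cap Ox$. Since $M' \subseteq c + \gamma' R$ with $R = M_{K_H}(c) - c$, the point $y$ is within relative ray distance $\approx \gamma'$ (with respect to $K_H$, and comparably with respect to $K_0$ by Corollary~\ref{cor:HM-fat-aux}-type bounds / Lemma~\ref{lem:core-ray}) of $\bd K_0$ along its own ray. The claim is that, because $y$ is that close to the boundary of the inner body $K_0$ but $x \in K_0$ lies further out along the same ray from $O$, $x$ cannot be much farther from $b$ than $y$ is — precisely, the segment from $O$ through $M'$ can only exit $K_0$ a little beyond $M'$. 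Concretely: $M' \subseteq M_{K_0}(b)$ (Step 1), and $M_{K_0}(b)$ is centrally symmetric about $b$, contained in $K_0$; a point $y$ in the "far side" (relative to $O$) of a constant-scaled $M_{K_0}(b)$ has relative ray distance $\ray_{K_0}(y) = O(\gamma')$, so the boundary point $y_0 = Oy \cap \bd K_0$ satisfies $\|y\, y_0\| = O(\gamma')\,\|O y_0\|$, and $x$ lies between $y$ and $y_0$, hence $\|b x\| \le \|b y\| + \|y x\| \le \|b y\| + O(\gamma')\|Oy_0\|$. Then I would translate this back into membership in $M$: since $M \supseteq b + \gamma' R$ and $M$ is essentially $M_{K_H}^{\beta}(c)$ which, by Corollary~\ref{cor:HM-fat-aux} applied at $c$-nearby points and the Hilbert-ball estimates (Lemma~\ref{lem:nesting}, Corollary~\ref{cor:nesting}), contains a whole neighborhood of $b$ of relative ray size $\approx \beta - \lambda = \Theta(\beta)$, it suffices that $\gamma'$ be a small enough constant multiple of $\beta$ so that the extra $O(\gamma')$ slack keeps $x$ inside $M$. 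The delicate point — and the step I expect to absorb most of the effort — is making the estimate "$x$ between $y$ and $\bd K_0$ along a ray from $O$ forces $x$ close to $y$" quantitative and uniform: one must use convexity of $K_0$ together with $M' \subseteq M_{K_0}(b)$ to bound $\ray_{K_0}(y)$ by $O(\gamma')$, and then possibly re-apply Lemma~\ref{lem:HM-fat-aux} / Corollary~\ref{cor:HM-fat-aux} in the direction of the ray $Ox$ to certify that $x \in M_{K_H}^{\beta}(c)$; wiring all the constants ($\lambda, \gamma, \gamma'$ as explicit fractions of $\beta$) so that (ii) and (iii) hold simultaneously is the crux.
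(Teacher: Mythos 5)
Your construction and your verification of parts (i) and (ii) are essentially identical to the paper's: the paper also takes $b$ on segment $Oc$ at the boundary of $M_{K_H}^{\lambda}(c)$ with $\lambda$ a small constant multiple of $\beta$, sets $M' = b + \gamma R$ with $R = M_{K_H}(c) - c$ and $\gamma = s(\lambda/2)/10$, and uses Lemma~\ref{lem:cr-lb}(ii), Corollary~\ref{cor:HM-fat-aux}, and Lemma~\ref{lem:mac-trans} exactly as you do. (One slip there: the relevant ray distance is $\ray_{K_0}(y) \le 2\ray_{K_0}(b) = 2\lambda\|ch\|/\|Oc\| = O(\lambda)$ via Lemma~\ref{lem:core-ray}, not $O(\gamma')$; the closeness of $y$ to $\bd K_0$ is governed by how far $b$ was pulled in, not by the size of $M'$.)

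Part (iii), however, is where your proposal has a genuine gap, and you acknowledge as much: the step ``the extra $O(\gamma')$ slack keeps $x$ inside $M$'' is asserted, not proved, and as stated it does not follow. The slack you derive is a Euclidean length proportional to $\|Oy_0\|$, the distance from the origin to $\bd K_0$ along the ray through $y$; this quantity varies over the shadow and is not a priori comparable to the extent of $M = c + \beta R$ in the radial direction, so ``choose $\gamma'$ small enough'' is not by itself a valid conclusion. The paper closes exactly this hole with two ingredients you do not supply. First, instead of bounding additive displacements ray by ray, it converts the radial slack into a single multiplicative factor: using $\ray_{K_0}(b') \le 2\ray_{K_0}(b)$ it shows $\|Oc'\| \le f\,\|Ob'\|$ with $f = 1 + 4\lambda\|ch\|/\|Oc\|$, hence $\shadow_{K_0}(M') \subseteq \conv(M' \cup M'')$ where $M''$ is the scaling of $M'$ by $f$ about $O$. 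Second, to show $M'' \subseteq M$ it needs the harmonic-mean geometry: Lemma~\ref{lem:cr-lb}(i) gives $\|ch\| \le \|Oc\|$, so $f \le 1 + 4\lambda$, and the radial vector of length $\|ch\|$ lies in $R$ (the reflection $2c - h$ of $h$ about $c$ lies on segment $Oc \subseteq K_H$), so the scaled center satisfies $b'' \in c + 4\lambda R$ and thus $M'' = b'' + f\gamma R \subseteq c + (4\lambda + f\gamma)R \subseteq c + \beta R = M$ for $\kappa$ large. It is precisely this conversion of ``slack proportional to $\|Oy_0\|$'' into ``membership in $c + O(\lambda)R$'' — via scaling about $O$ and via $\|ch\| \le \|Oc\|$ together with $\|ch\|$ being within the radial extent of $R$ — that your sketch is missing; without it, the constant-wiring you defer cannot be carried out.
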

%-----------------------------------------------------------------------

%-----------------------------------------------------------------------
\begin{proof} 
For the sake of convenience, assume that the ray $Oc$ is directed vertically upwards. Let $h$ be the point of intersection of the ray $Oc$ with $\bd K_H$. Let $R = M_{K_H}(c)-c$ be the recentering of $M_{K_H}(c)$ about the origin. By definition, $M = M_{K_H}^{\beta}(c) = c + \beta R$. Let $b$ be the point of intersection of the segment $Oc$ with the boundary of $M_{K_H}^{\lambda}(c) = c + \lambda R$, where $\lambda = \beta/\kappa$ for a suitable large constant $\kappa \geq 2$ (independent of dimension). Recalling from Lemma~\ref{lem:cr-lb}(i) that $\|ch\| \leq \|Oc\|$, it follows that $b$ is vertically below $c$ at a distance of $\lambda \|ch\|$. Recalling $s(\lambda)$ from Lemma~\ref{lem:HM-fat-aux2}, let $M' = b + \gamma R$ for
\[
    \gamma 
        ~ = ~ \frac{s(\lambda/2)}{10} 
        ~ = ~ \frac{s(\beta/2\kappa)}{10}  
        ~ = ~ \frac{\beta}{160\kappa}
\]
(see Figure~\ref{f:hm-fat-main}(a)). Since $M'$ and $M$ are translated copies of $R$ scaled by a factor of $\gamma$ and $\beta$, respectively, we have $\vol(M') = (\gamma/\beta)^d \vol(M) = (1/160\kappa)^d \vol(M)$. This proves (i).

%-----------------------------------------------------------------------
\begin{figure}[htbp]
    \centering\includegraphics[scale=0.8]{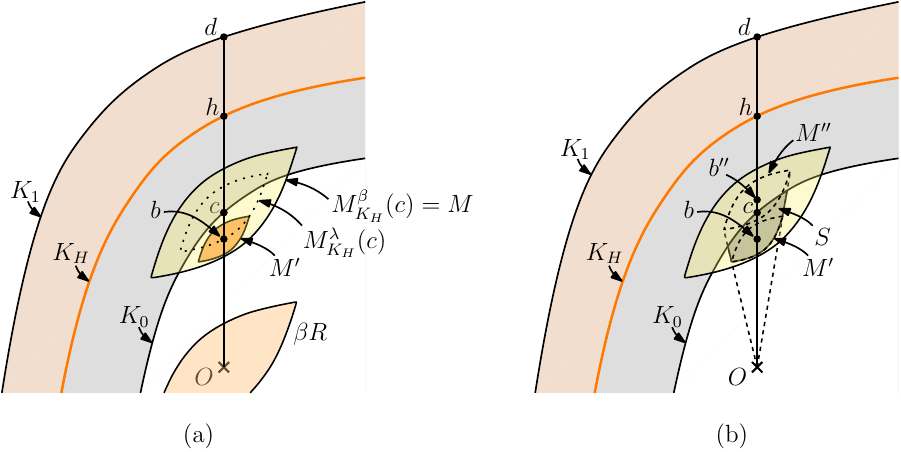}
    \caption{Proof of Lemma~\ref{lem:HM-fat-main}. (Objects are not drawn to scale.)} \label{f:hm-fat-main} 
\end{figure}
%-----------------------------------------------------------------------

To prove (ii), we will show that $M' \subseteq M$ and $M' \subseteq K_0$. Since $b \in c + \lambda R$ and $M' = b + \gamma R$, it follows that $M' \subseteq c + (\lambda + \gamma) R$. For large $\kappa$, we have $\lambda + \gamma \leq \beta$, and thus $M' \subseteq c + \beta R = M$. 

Next, we show that $M' \subseteq K_0$. Let $d$ denote the point of intersection of the ray $Oc$ with $\bd K_1$. Applying Lemma~\ref{lem:cr-lb}(ii), it follows that the cross ratio $(O,c;d,b) \leq -\lambda / 2$. Applying Lemma~\ref{lem:HM-fat-aux2} with $\lambda/2$ in place of $\lambda$ and recalling that $s(\lambda/2) = 10\gamma$, we have $M_{K_H}^{10 \gamma}(b) \subseteq M_{K_0}(b)$. Also, by Lemma~\ref{lem:mac-trans}, we have $M' = b + \gamma R \subseteq M_{K_H}^{2\gamma}(b)$. Thus $M' \subseteq M_{K_0}^{1/5}(b)$. By definition of Macbeath regions, $M_{K_0}(b) \subseteq K_0$, and so $M' \subseteq K_0$, as desired. 

To prove (iii), let $S = \shadow_{K_0}(M')$, and let $M''$ be the convex body obtained by scaling $M'$ by the factor 
\[
    f 
        ~ = ~ 1 + 4 \lambda \frac{\|ch\|}{\|Oc\|}
\]
about $O$ (see Figure~\ref{f:hm-fat-main}(b)). Letting $b''$ denote the center of $M''$, we have $M'' = b'' + f \gamma R$. We claim that
\begin{enumerate}
\item[(a)] $S$ is contained in the convex hull of $M' \cup M''$, and

\item[(b)] the convex hull of $M' \cup M''$ is contained in $M$.
\end{enumerate} 
Together, this would imply that $S$ is contained in $M$, and complete the proof.

To prove (a), let $c'$ be any point in $S \cap \bd K_0$ and let $b'$ be any point in the intersection of segment $Oc'$ with $M'$. Since $b' \in M'$ and $M' \subseteq M_{K_0}^{1/5}(b)$, we have $b' \in M_{K_0}^{1/5}(b)$. By Lemma~\ref{lem:core-ray}, we have $\ray_{K_0}(b') \leq 2 \cdot \ray_{K_0}(b)$, that is, $\|b'c'\| / \|Oc'\| \leq 2 \|bc\| / \|Oc\|$. Recalling that $\|bc\| = \lambda \|ch\|$, it follows that
\[
    \frac{\|Oc'\|}{\|Ob'\|} 
        ~ =    ~ \frac{\|Oc'\|}{\|Oc'\| - \|b'c'\|} 
        ~ =    ~ \frac{1}{1 - \frac{\|b'c'\|}{\|Oc'\|}} 
        ~ \leq ~ \frac{1}{1 - 2 \lambda \frac{\|ch\|}{\|Oc\|}} 
        ~ \leq ~ 1 + 4 \lambda \frac{\|ch\|}{\|Oc\|}.
\]
Recall that we defined the quantity on the right hand side to be the scaling factor $f$ and $M''$ to be the $f$-factor expansion of $M'$ about $O$. Since $\|Oc'\| \leq f \|Ob'\|$, it follows that for any ray passing through $M'$, the points of $S$ on this ray lie between the lowest point on the ray contained in $M'$ and the highest point on the ray contained in $M''$. It follows that $S$ is contained in the convex hull of $M' \cup M''$.

It remains to prove (b).  By convexity of $M$, it suffices to show that both $M'$ and $M''$ are contained in $M$. We have already shown in part (ii) that $M' \subseteq M$. To complete the proof, it suffices to show that $M'' \subseteq M$. Note that the point corresponding to $c$, obtained by scaling by a factor of $f$ about the origin, is at distance $4 \lambda \|ch\|$ vertically above $c$. Clearly $b''$ lies on the segment $bc''$, and so $b'' \in c + 4 \lambda R$. Recalling that $M''= b'' + f \gamma R$, it follows that $M'' \subseteq c + (4 \lambda + f \gamma) R$. Since $\lambda = \beta/\kappa$, and $\gamma = \beta/160\kappa$, and $f = 1 + 4\lambda \|ch\|/\|Oc\| \leq 1 + 4 \lambda$, we have $4\lambda + f\gamma \leq \beta$, for large $\kappa$. Thus $M'' \subseteq c + \beta R = M$, which completes the proof.
\end{proof}
%-----------------------------------------------------------------------

The following corollary follows immediately from parts (i) and (ii) of the above lemma.

\begin{corollary}
Let $K_0 \subset K_1$ be two convex bodies, let the origin $O$ lie in the interior of $K_0$, and let $K_H$ denote the harmonic-mean body of $K_0$ and $K_1$. Then $K_0$ is relatively fat with respect to $K_H$.    
\end{corollary}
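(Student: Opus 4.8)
The plan is to read this off directly from Lemma~\ref{lem:HM-fat-main}, which already carries all the geometric content. Recall that, by definition, $K_0$ is relatively $\gamma$-fat with respect to $K_H$ if for every point $p \in \bd K_0$ and every scaling factor $0 < \lambda \le 1$ we have $\vol(M \cap K_0)/\vol(M) \ge \gamma$, where $M = M_{K_H}^{\lambda}(p)$. So fix such a $p$ and such a $\lambda$, and apply Lemma~\ref{lem:HM-fat-main} with $\beta = \lambda$ (legitimate, since $0 < \lambda \le 1$) and with the boundary point $c$ taken to be $p$. This produces a convex body $M'$ with $M' \subseteq M \cap K_0$ (part (ii)) and $\vol(M') = \Omega(\vol(M))$ (part (i)); tracing the constant through the proof of part (i), one has $\vol(M') \ge (1/(120\kappa))^{d}\,\vol(M)$, where $\kappa$ is the absolute constant fixed in that proof. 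Hence $\vol(M \cap K_0) \ge \vol(M') \ge \gamma\,\vol(M)$ with $\gamma = (1/(120\kappa))^{d}$, a constant depending only on $d$.

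Since $p \in \bd K_0$ and $\lambda \in (0,1]$ were arbitrary, and the constant $\gamma$ obtained above depends on neither of them (nor on $K_0$, $K_1$), the body $K_0$ is relatively $\gamma$-fat with respect to $K_H$, which is precisely the assertion. There is essentially no obstacle here: the work is entirely in Lemma~\ref{lem:HM-fat-main} (and ultimately in the projective-geometry estimates of Lemmas~\ref{lem:HM-fat-aux} and \ref{lem:cr-lb}), and the only thing worth remarking is the uniformity of the hidden constant in part (i), which is clear from the explicit form $(\gamma/\beta)^d = (1/(120\kappa))^d$ appearing in that proof.
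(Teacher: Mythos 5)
Your proposal is correct and is exactly the paper's argument: the authors state that the corollary is immediate from parts (i) and (ii) of Lemma~\ref{lem:HM-fat-main}, which is precisely the instantiation with $\beta = \lambda$ and $c = p$ that you carry out. Your extra remark that the constant $(1/(120\kappa))^d$ from the proof of part (i) is uniform in $p$, $\lambda$, $K_0$, and $K_1$ is a sensible check, and it confirms the fatness constant $\gamma$ depends only on $d$.
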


%=======================================================================
\section{Uniform Volume-Sensitive Bounds} \label{s:hausdorff}
%=======================================================================

In this section, we prove the paper's main result, Theorem~\ref{thm:main}. Let $\eps > 0$ and let $K_0$ denote the convex body $K$ described in this theorem. Let $K_1 = K_0 \oplus \eps B^d_2$ denote the Minkowski sum of $K_0$ with a Euclidean ball of radius $\eps$ (see Figure~\ref{f:hausdorff}(a)). Also, recall that $\Delta_d(K_0)$ denotes the \emph{volume diameter} of $K_0$. Let $m(K_0,\eps)$ be a shorthand for $(\Delta_d(K_0)/\eps)^{(d-1)/2}$, the desired number of facets.

We will show that there exists a polytope with $O(m(K_0,\eps))$ facets sandwiched between $K_0$ and $K_1$. As mentioned above, we will transform the problem by mapping it to the polar. Through an appropriate translation, we may assume that the centroid of $K_0$ coincides with the origin $O$. Note that the arithmetic-mean body $K_A$ of $K_0$ and $K_1$ is given by $K_0 \oplus \frac{\eps}{2} B^d_2$, and recall from Section~\ref{s:am-hm} that $K_H = \stdpolar{K}_A$ is the harmonic-mean body of 
$\stdpolar{K}_1$ and $\stdpolar{K}_0$ (see Figure~\ref{f:hausdorff}(b)). 

%-----------------------------------------------------------------------
\begin{figure}[htbp]
    \centering\includegraphics[scale=0.8]{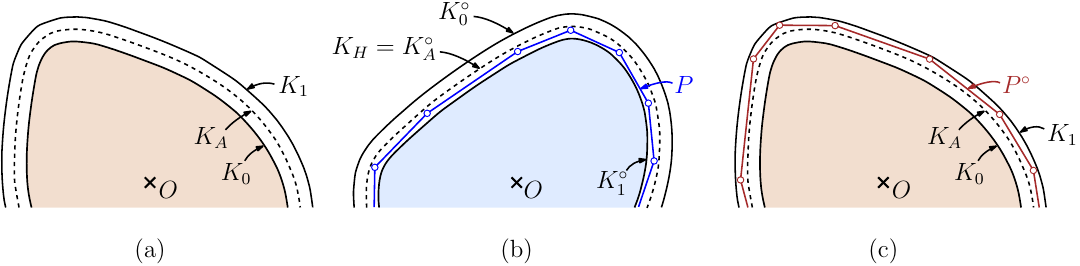}
    \caption{Volume-sensitive Hausdorff approximation.} \label{f:hausdorff}
\end{figure}
%-----------------------------------------------------------------------

Our construction is based on Lemma~\ref{lem:MNet-size-hausdorff} below, which shows that there is a $(K_H,\bd (\stdpolar{K}_1))$-MNet $X$ of size $O(m(K_0,\eps))$. By applying Lemma~\ref{lem:MNet-approx}, it follows that there exists a polytope $P$ sandwiched between $\stdpolar{K}_1$ and $K_H$ with $O(|X|)$ vertices (see Figure~\ref{f:hausdorff}(b)). By polarity, this implies that $\stdpolar{P}$ is a polytope sandwiched between $K_A$ and $K_1$ having $O(|X|)$ facets (see Figure~\ref{f:hausdorff}(c)). Since $K_0 \subseteq K_A$, this polytope is also sandwiched between $K_0$ and $K_1$, which proves Theorem~\ref{thm:main}.

All that remains is to show that $|X| = O(m(K_0,\eps))$. For this purpose, we will utilize the tools for bounding the sizes of MNets in conjunction with the relative fatness of the harmonic-mean body (established in Section~\ref{s:hm-fat}).

%-----------------------------------------------------------------------
\begin{lemma} \label{lem:MNet-size-hausdorff}
Let $\eps > 0$ and let $K_0, K_1, K_A, K_H$ be convex bodies as defined above. Let $X$ be a $(K_H,\bd (\stdpolar{K}_1))$-MNet. Then $|X| = O(m(K_0,\eps))$.
\end{lemma}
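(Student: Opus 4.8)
The plan is to partition the boundary $\bd K_1^*$ according to ray distance with respect to $K_H$, bound the MNet restricted to each dyadic shell using Lemma~\ref{lem:fixed-ray}, and then sum the resulting geometric series, where the crucial input connecting the $K_H$-volume of a shell to the volume diameter $\Delta_d(K_0)$ comes from the relative fatness of $K_0$ inside $K_H$ (Lemma~\ref{lem:HM-fat-main}). First I would verify that $K_H$ is well-centered (with respect to the origin $O$, the centroid of $K_0$): since the centroid of $K_0$ gives $K_0$ bounded Mahler volume by Lemma~\ref{lem:centroid}, and $K_0 \subseteq K_A \subseteq$ (a constant dilate of $K_0$) because $K_1 = K_0 \oplus \eps$ and the width of $K_0$ is at least $\eps$ in every direction, $K_A$ is well-centered, hence so is $K_H = K_A^*$. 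This legitimizes applying the MNet-size lemmas to $K_H$. For a point $z \in \bd K_1^*$, note $\ray_{K_H}(z) = \width_{K_H^*}(z^*) \le 1$ and in fact is bounded below by a positive quantity controlled by $\eps/\Delta$; the MNet $X$ partitions into $O(\log(\Delta/\eps))$ groups $X_i$ where $X_i$ consists of points at ray distance between $\eps_i = 2^i \eps_{\min}$ and $2\eps_i$.

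The heart of the argument is a volume-counting bound for each $X_i$. By the packing property of MNets, the Macbeath regions $\MM_{K_H}^{1/4c}(X_i)$ are disjoint, each contained in the shell of $K_H$ at ray distance $\Theta(\eps_i)$, which has relative volume $O(\eps_i)$ in $K_H$. A naive application of Lemma~\ref{lem:fixed-ray}(i) gives $|X_i| = O(1/\eps_i^{(d-1)/2})$, and summing the geometric series dominated by the smallest scale gives $O(1/\eps_{\min}^{(d-1)/2})$ — but $\eps_{\min}$ is only $\Omega(\eps/\Delta)$, not $\Omega(\eps/\Delta_d)$, so this loses the volume-sensitivity. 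The fix is to use part (ii) of Lemma~\ref{lem:fixed-ray}: I want to show the total relative $K_H$-volume of $\MM_{K_H}^{1/4c}(X_i)$ is $O(f_i \eps_i)$ for a small factor $f_i$, namely $f_i = \Theta\big((\Delta_d/\Delta)^d\big)$ up to scale-dependent corrections. This is where relative fatness enters: for each $x \in X_i$, applying Lemma~\ref{lem:HM-fat-main} with $\beta = 1/4c$ (using $x$'s radial projection to $\bd K_1^*$ as the boundary point $c$, noting $K_1^* \subset K_H$), each $M_{K_H}^{1/4c}(x)$ contains a body $M'_x$ of comparable volume with $M'_x \subseteq K_1^*$, and these $M'_x$ are disjoint. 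Hence $\sum_{x \in X_i} \vol(M_{K_H}^{1/4c}(x)) = O\big(\sum \vol(M'_x)\big) = O(\vol(K_1^* \cap \text{shell}_i))$. Now I convert: $\vol_{K_H}(\text{this}) = \vol(\cdot)/\vol(K_H)$, and since $\vol(K_H) = \Theta(\vol(K_A)^{-1}\cdot\mu(K_A)) $... more cleanly, $\vol(K_H)\,\vol(K_A) = \mu(K_A) = \Theta(1)$, while $\vol(K_A) = \Theta(\vol(K_0)) = \Theta(\Delta_d^d)$ (up to the constant $c_d$, absorbing the ball-volume normalization in the definition of $\Delta_d$), so $\vol(K_H) = \Theta(1/\Delta_d^d)$. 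The portion of $K_1^*$ in the shell has volume at most $\vol(K_1^*) \le \vol(K_0^*) = \Theta(\mu(K_0)/\vol(K_0)) = \Theta(1/\Delta_d^d)$ as well; more carefully, one should use that it lies within ray distance $O(\eps_i)$ of $\bd K_1^*$, giving a factor $O(\eps_i)$. Putting it together, the total relative volume is $O(\eps_i)$ times $\vol(K_1^*)/\vol(K_H) = \Theta(1)$ — hmm, this only recovers the naive bound. The correct bookkeeping must instead track that $f_i = \vol(K_1^*)/\vol(K_H)$ is genuinely $\Theta(1)$, so the volume refinement alone does not suffice; rather, the gain is that the relevant shells only extend down to $\eps_{\min}$ comparable to $\eps/\Delta_d$ because the width condition and the harmonic-mean construction bound $\ray_{K_H}$ on $\bd K_1^*$ away from zero in terms of $\Delta_d$, not $\Delta$.

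Thus the key steps, in order: (1) establish $K_H$ is well-centered; (2) show every $z \in \bd K_1^*$ satisfies $\ray_{K_H}(z) = \Omega(\eps/\Delta_d)$ — this is the step I expect to be the main obstacle, requiring the relative-fatness Lemma~\ref{lem:HM-fat-main}(iii) (the shadow bound) to argue that a cap of $K_H$ cut near $\bd K_1^*$ cannot be too shallow without forcing $K_0$, and hence $\vol(K_0) \sim \Delta_d^d$, to be too small relative to $\vol(K_H)$; alternatively one derives it from $\width_{K_H}(\cdot) \ge \width$ of the corresponding cap of $K_A$, which is $\Omega(\eps/\diam(K_A))$ — but $\diam(K_A)$ could be as large as $\Delta$, so genuine work is needed and the volume refinement of Lemma~\ref{lem:fixed-ray}(ii) must be invoked to trade the $\Delta$-sized range of scales against the small $f_i$; (3) split $X$ into $O(\log(\Delta/\eps))$ dyadic shells $X_i$; (4) for each, combine the packing property with Lemma~\ref{lem:HM-fat-main} to bound $\sum_{x\in X_i}\vol_{K_H}(M^{1/4c}_{K_H}(x)) = O(f_i \eps_i)$ with $f_i = \Theta(\Delta_d^d/\Delta^d)\cdot 2^{O(i)}$ or similar, then apply Lemma~\ref{lem:fixed-ray}(ii) to get $|X_i| = O(\sqrt{f_i}/\eps_i^{(d-1)/2})$; (5) sum the geometric series, which telescopes to $O((\Delta_d/\eps)^{(d-1)/2}) = O(C(K_0,\eps))$, absorbing constants into $c_d$. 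The subtle point throughout is keeping track of how the normalization constant in $\Delta_d$ (ball of equal volume) interacts with Mahler-volume bounds $\mu(K_A) = \Theta(1)$ and $\mu(K_0) = \Theta(1)$; these are what convert the polar volume $\vol(K_1^*)$ and $\vol(K_H)$ back into powers of $\Delta_d$.
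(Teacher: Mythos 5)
Your scaffolding matches the paper's: establish that $K_H$ is well-centered, decompose $X$ into dyadic groups by ray distance (equivalently by the distance $\|Ox'\|$ of the radial projection to $\bd K_H$, since $\ray_{K_H}(x) = \Theta(\eps\|Ox'\|)$), and use Lemma~\ref{lem:fixed-ray}(ii) with a small volume factor $f$ supplied by the relative fatness of $K_1^*$ in $K_H$ (Lemma~\ref{lem:HM-fat-main}). But the core quantitative step is missing, and the fallback you settle on is false. Your attempted computation of $f_i$ via $\vol(K_1^*)/\vol(K_H)$ correctly collapses to $\Theta(1)$ (both volumes are $\Theta(1/\Delta_d^d)$), and at that point you pivot to asserting that every $z \in \bd K_1^*$ satisfies $\ray_{K_H}(z) = \Omega(\eps/\Delta_d)$, so that only $O(\log(\Delta_d/\eps))$-many scales occur. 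This lower bound does not hold: for a pancake-like $K_0$ with $\Delta \gg \Delta_d$, the body $K_A$ has diameter $\Theta(\Delta)$, so its polar $K_H$ approaches the origin to distance $\Theta(1/\Delta)$ in the long direction, and boundary points of $K_1^*$ in that direction have $\ray_{K_H} = \Theta(\eps/\Delta) \ll \eps/\Delta_d$. These shallow scales genuinely occur and cannot be legislated away; the naive per-scale bound $O(1/\eps_i^{(d-1)/2})$ over all of them would give $(\Delta/\eps)^{(d-1)/2}$, losing the theorem.

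What the paper does instead — and what your proposal never derives — is a localized volume bound for exactly those shallow groups. Setting $R_0 = \vol(K_H)^{1/d}$, $R_i = 2^iR_0$, $\eps_i = \eps R_i$, the group $X_i$ with $i<0$ consists of points whose radial projections lie at distance $\Theta(R_i)$ from $O$. By Lemma~\ref{lem:HM-fat-main} each $M^{1/4c}_{K_H}(x)$ contains a comparable-volume body $M'_x \subseteq K_1^*$ whose shadow $\shadow_{K_1^*}(M'_x)$ stays inside $M^{1/4c}_{K_H}(x)$; disjointness of the shrunken Macbeath regions then makes the shadows (hence the cones over them from $O$) disjoint. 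Every ray in such a cone has length $\Theta(R_i)$ inside $K_H$ and only an $\eps_i$-fraction of it meets the shadow, so the total shadow volume is $O(\eps_i R_i^d)$, i.e.\ relative volume $O(\eps_i 2^{id})$ — the factor $f = 2^{id} \ll 1$ comes from the cones being confined to radius $R_i \ll R_0$, not from any global ratio $\vol(K_1^*)/\vol(K_H)$ or $(\Delta_d/\Delta)^d$. Lemma~\ref{lem:fixed-ray}(ii) then gives $|X_i| = O(2^{id/2}/\eps_i^{(d-1)/2}) = O(2^{i/2})\,C(K_0,\eps)$, and the series over $i<0$ converges; the groups with $i \ge 0$ are handled by part (i) alone. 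Your sketch gestures at part (iii) of Lemma~\ref{lem:HM-fat-main} but for the wrong purpose (trying to force the false ray-distance lower bound), and the guess $f_i = \Theta(\Delta_d^d/\Delta^d)\cdot 2^{O(i)}$ is neither derived nor tied to a shadow/cone disjointness argument, so the decisive step of the proof is absent.
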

%-----------------------------------------------------------------------

%-----------------------------------------------------------------------
\begin{proof}
We begin by showing that $\vol(K_H) = \Omega(1/\vol(K_0))$, and its Mahler volume is at most $O(1)$ (implying that $K_H$ is well-centered). To see this, recall that the width of $K_0$ in any direction is at least $\eps$ and $K_A = K_0 \oplus \frac{\eps}{2} B^d_2$. It is well known that for any convex body, the ratio of the distances of the body's centroid from any pair of supporting hyperplanes is at most $d$ (see, e.g., \cite{Gru63}). It follows that a ball of radius $\eps/(d+1)$ centered at the origin (the centroid of $K_0$) lies within $K_0$. Thus, a constant-factor expansion of $K_0$ contains $K_A$, which implies that $\vol(K_A) = O(\vol(K_0))$. Also, because $K_H = \stdpolar{K}_A$, by Lemma~\ref{lem:mahler-bounds}(i), $\vol(K_A) \cdot \vol(K_H) = \Omega(1)$. Thus, $\vol(K_H) = \Omega(1/\vol(K_0))$. To upper bound the Mahler volume of $K_H$, note that by polarity, $K_H \subseteq \stdpolar{K}_0$, and thus
\[
    \vol(K_H) \cdot \vol(\stdpolar{K}_H) 
        ~ = ~ \vol(K_A) \cdot \vol(K_H) 
        ~ = ~ O(\vol(K_0) \cdot \vol(\stdpolar{K}_0))
        ~ = ~ O(1),
\] 
where in the last step, we have used the fact that the centroid of $K_0$ coincides with the origin and Lemma~\ref{lem:mahler-bounds}(ii).

To simplify notation, for the remainder of the proof we assume that ray distances, Macbeath regions, and volumes are defined relative to $K_H$, that is, $\ray \equiv \ray_{K_H}$, $M \equiv M_{K_H}$, and $\vol \equiv \vol_{K_H}$.

For any point $p \in \bd (\stdpolar{K}_1)$, let $p'$ denote the point of intersection of the ray $O p$ with $\bd K_H$. We first establish a bound on the relative ray distance $\ray(p)$. Observe that since $p$ and $p'$ lie on $\bd (\stdpolar{K}_1)$ and $\bd K_H$, respectively, their polar hyperplanes, $\stdpolar{p}$ and $\stdpolar{p'}$, are supporting hyperplanes for $K_1$ and $\stdpolar{K}_H = K_A$, respectively. Letting $r$ denote the distance between $\stdpolar{p'}$ and the origin, it follows from the definition of $K_A$ that the distance between $\stdpolar{p}$ and the origin is $r + \frac{\eps}{2}$. The distance of $p'$ and $p$ from the origin are the reciprocals of these. Therefore, we have
\[
    \ray(p)
        ~ = ~ \frac{\|p p'\|}{\|O p'\|}
        ~ = ~ \frac{\|O p'\| - \|O p\|}{\|O p'\|}
        ~ = ~ \frac{\frac{1}{r} - \frac{1}{r + (\eps/2)}}{\frac{1}{r}}
        ~ = ~ 1 - \frac{r}{r + (\eps/2)}
        ~ = ~ \frac{\eps/2}{r + (\eps/2)}.
\]
Since $1/\|O p'\| = r = \Omega(\eps)$, we have $\ray(p) = \Theta(\eps/r) = \Theta(\eps \|O p'\|)$. (It is noteworthy and somewhat surprising that this relative ray distance is not a dimensionless quantity, since it depends linearly on $\|O p'\|$.)

To analyze $|X|$, we partition it into groups based on $\|O x'\|$ for each $x \in X$, where $x' $ denotes the point of intersection of the ray $O x$ with $\bd K_H$. Define $R_0 = (\vol(K_H))^{1/d}$. By our earlier remarks, $\vol(K_H) = \Omega(1/\vol(K_0))$, and so $R_0 = \Omega(1/\Delta_d(K_0))$. For any integer $i$ (possibly negative), define $R_i = 2^i R_0$ and $\eps_i = \eps R_i$. We can express $X$ as the disjoint union of sets $X_i$, where $X_i$ consists of points $x$ such that $R_i \leq \|Ox'\| < 2 R_i$. Recall that for any $x \in X_i$, we have $\ray(x) = \Theta(\eps \|O x'\|) = \Theta(\eps R_i) = \Theta(\eps_i)$. 

We will bound the contributions of the sizes of each $X_i$ to the size of $X$ based on the sign of $i$. Let us first consider the nonnegative values of $i$. Recalling that $K_H$ is well-centered and applying Lemma~\ref{lem:fixed-ray}(i) (where $X_i$ takes the role of $\Lambda$ in the lemma), we have, up to constant factors
\begin{align*}
    \sum_{i \geq 0} |X_i|
        & ~ \leq ~ \sum_{i \geq 0} \left( \frac{1}{\eps_i}\right)^{\kern-2pt\frac{d-1}{2}}
          ~ =    ~ \sum_{i \geq 0} \left( \frac{1}{\eps 2^i R_0}\right)^{\kern-2pt\frac{d-1}{2}}
          ~ \leq ~ \sum_{i \geq 0} \left( \frac{\Delta_d(K_0)}{\eps 2^i}\right)^{\kern-2pt\frac{d-1}{2}} \\
        & ~ =    ~ \left( \frac{\Delta_d(K_0)}{\eps} \right)^{\kern-2pt\frac{d-1}{2}}  \sum_{i \geq 0} \left(\frac{1}{2}\right)^{\kern-2pt\frac{i(d-1)}{2}}        
          ~ \leq ~ \left( \frac{\Delta_d(K_0)}{\eps} \right)^{\kern-2pt\frac{d-1}{2}} 
          ~ =    ~ O(m(K_0,\eps)),
\end{align*}
as desired.

In order to bound the contributions to $|X|$ for negative values of $i$, we need a more sophisticated strategy. Our approach is to first bound the total relative volume of the Macbeath regions of $\MM^{1/4c}(X_i)$, which we assert to be $O(\eps_i 2^{id})$. Assuming this assertion for now, we complete the proof as follows. By applying Lemma~\ref{lem:fixed-ray}(ii)  with $f_i = O(2^{id})$ and recalling that $\eps_i = \eps R_i = 2^i \eps R_0$, we have (up to constant factors)
\begin{align*}
    \sum_{i < 0} |X_i| 
        & ~ \leq ~ \sum_{i < 0} \frac{\sqrt{f_i}}{\eps_i^{(d-1)/2}} 
          ~ =    ~ \sum_{i < 0} \frac{2^{i d/2}}{(2^i \eps R_0)^{(d-1)/2}}
          ~ =    ~ \sum_{i < 0} \frac{2^{i(d-(d-1))/2}}{(\eps R_0)^{(d-1)/2}}
          ~ =    ~ \sum_{i < 0} \frac{2^{i/2}}{(\eps R_0)^{(d-1)/2}} \\
        & ~ \leq ~ \sum_{i < 0} 2^{i/2} m(K_0,\eps)
          ~ =    ~ m(K_0,\eps) \sum_{i > 0} \left( \frac{1}{2} \right)^{\kern-1pt\frac{i}{2}}
          ~ =    ~ O(m(K_0, \eps)),
\end{align*}
as desired.

It remains only to prove the assertion on the total relative volume of the Macbeath regions of $\MM^{1/4c}(X_i)$. Let $x \in X_i$ and let $M_x = M^{1/4c}(x)$. By Lemma~\ref{lem:HM-fat-main} (with $x$, $\stdpolar{K}_1$, and $K_H$ playing the roles of $c$, $K_0$, and $K_H$, respectively), there is an associated convex body $M'_x$ such that 
\begin{center}
    (i) $\vol(M'_x) = \Omega(\vol(M_x))$, ~~
    (ii) $M'_x \subseteq M_x \cap \stdpolar{K}_1$, ~~and~~ 
    (iii) $\shadow_{\stdpolar{K}_1}(M'_x) \subseteq M_x$.
\end{center}
We will use $S_x$ as a shorthand for $\shadow_{\stdpolar{K}_1}(M'_x)$.  Since $\vol(M_x) = O(\vol(M'_x)) = O(\vol(S_x))$, it suffices to show that the total relative volume of the shadows $\{S_x : x \in X_i\}$ is $O(\eps_i 2^{id})$.

For $x \in X_i$, we define the cone $\Psi_x$ to be the intersection of $K_H$ with the infinite cone consisting of rays emanating from the origin that contain a point of $S_x$ (see Figure~\ref{f:mnet-size-hausdorff}). Since the Macbeath regions of $\MM^{1/4c}(X_i)$ are disjoint, it follows from (iii) that the associated shadows intersect $\bd (\stdpolar{K}_1)$ in patches that are also disjoint. Thus, the cones $\Psi = \{\Psi_x : x \in X_i\}$ are pairwise disjoint.

%-----------------------------------------------------------------------
\begin{figure}[htbp]
    \centering\includegraphics[scale=0.8]{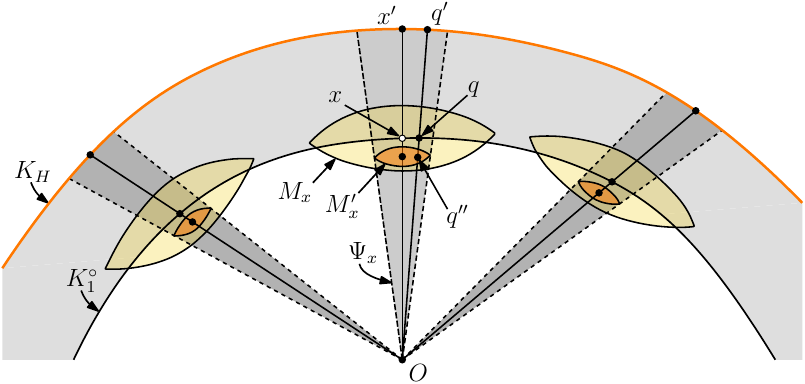}
    \caption{Proof of Lemma~\ref{lem:MNet-size-hausdorff}.} \label{f:mnet-size-hausdorff}
\end{figure}
%-----------------------------------------------------------------------

Consider a ray emanating from the origin that is contained in any cone $\Psi_x$. Let $q$ and $q'$ be the intersection points of this ray with $\bd (\stdpolar{K}_1)$ and $\bd K_H$, respectively. Let $q''$ be any point on this ray that lies within the shadow $S_x$. Since $q'' \in M_x$, by Lemma~\ref{lem:core-ray}, we have $\ray(q'') = \Theta(\ray(x)) = \Theta(\eps_i)$. By the same reasoning, $\ray(q) = \Theta(\eps_i) = \Theta(\eps R_i)$. Also, recalling our earlier bounds on the relative ray distance of points on $\bd (\stdpolar{K}_1)$, we have $\ray(q) = \Theta(\eps \|Oq'\|)$. Equating the two expressions for $\ray(q)$, we obtain $\|Oq'\| = \Theta(R_i)$.

Since the cones of $\Psi$ are disjoint and any ray emanating from the origin and contained in a cone of $\Psi$ has length $\Theta(R_i)$, it follows that the total volume of these cones is $O(R_i^d)$. Furthermore, since only a fraction $O(\eps_i)$ of such a ray is contained in the associated shadow, it follows that the total volume of all the shadows $\{S_x : x \in X_i\}$ is $O(\eps_i R_i^d)$. Recalling that $\vol(K_H) = R_0^d$ and $R_i = 2^i R_0$, it follows that the total relative volume of these shadows is $O(\eps_i R_i^d / R_0^d) = O(\eps_i 2^{id})$. This establishes the assertion on the total relative volume of the Macbeath regions of $\MM^{1/4c}(X_i)$ and completes the proof.
\end{proof}
%-----------------------------------------------------------------------

%=======================================================================
\section{Nonuniform Volume-Sensitive Bounds} \label{s:nonuniform}
%=======================================================================

In this section, we observe that a nonuniform bound very similar to ours can be derived from a result due to Gruber \cite{Gru93a}. He proved that if $K$ is a strictly convex body and $\partial K$ is twice differentiable ($C^2$ continuous), then there exists a constant $k_d$ (depending only on the dimension $d$) and $\eps_0$ depending on $K$, such that for any $0 < \eps \leq \eps_0$, the number of bounding halfspaces needed to achieve an $\eps$-approximation of $K$ is at most
\begin{equation} \label{eq:gruber}
    k_d \left( \frac{1}{\eps} \right)^{\kern-2pt \frac{d-1}{2}} \int_{\bd K} \kappa(x)^{\frac{1}{2}} d\sigma(x), 
\end{equation}
where $\kappa$ and $\sigma$ denote the Gaussian curvature of $K$ and ordinary surface area measure, respectively. (B{\" o}r{\" o}czky showed that the requirement that $K$ be ``strictly'' convex can be eliminated \cite{Bor00}.)

%-----------------------------------------------------------------------
\begin{theorem} \label{thm:nonunif-bound}
For any integer $d \geq 2$ and any convex body $K \subseteq \RE^d$ whose boundary is $C^2$ smooth, there exists $\eps_0$ depending on $K$, such that for any $0 < \eps \leq \eps_0$, there exists an $\eps$-approximating polytope $P$ having at most
\[
    c_d \left(\frac{\Delta_d(K)}{\eps}\right)^{\frac{d-1}{2}}
\]
facets, where $c_d$ is a constant (depending on $d$).
\end{theorem}
%-----------------------------------------------------------------------

%-----------------------------------------------------------------------
\begin{proof}
Assume that the centroid of $K$ coincides with the origin. Let $S^{d-1}$ denote the unit Euclidean sphere in $\RE^d$. For $u \in S^{d-1}$, let $h(u) = \max \, \{ \inner{x}{u} : x \in K\}$ denote the support function of $K$ and let $\rho(u) = \max \, \{\lambda > 0: \lambda u \in \stdpolar{K}\}$ denote the radial function of $\stdpolar{K}$. 

Letting $n$ denote the exterior normal unit vector of $K$ and applying the Cauchy-Schwarz inequality, we obtain
\[
    \left(\int_{\bd K} \kappa(x)^{1/2} \, d\sigma(x) \right)^2 
    ~ \leq ~ \left( \int_{\bd K} \frac{\kappa(x)}{h(n(x))} \, d\sigma(x) \right) \cdot
             \left(\int_{\bd K} h(n(x)) \, d\sigma(x)\right).
\]
The second integral on the right-hand side is easily seen to be $d \cdot \vol(K)$. To bound the first integral on the right-hand side, we express it as an integral over the unit sphere $S^{d-1}$.
\[
    \int_{\bd K} \frac{\kappa(x)}{h(n(x))} \, d\sigma(x) 
        ~ = ~ \int_{S^{d-1}} \frac{1}{h(u)} \, d\sigma(u).
\]
Letting $\varsigma_{d-1} = \area(S^{d-1})$ and applying Jensen's inequality, we have
\[
    \frac{1}{\varsigma_{d-1}} \int_{S^{d-1}} \frac{1}{h(u)} \, d\sigma(u)
        ~ \leq ~ \left(\frac{1}{\varsigma_{d-1}} \int_{S^{d-1}} \frac{1}{h(u)^d} \, d\sigma(u)\right)^{\kern-2pt \frac{1}{d}}
        ~ = ~ \left(\frac{1}{\varsigma_{d-1}} \int_{S^{d-1}} \rho(u)^d \, d\sigma(u)\right)^{\kern-2pt \frac{1}{d}},
\]
where we have used the polar relationship $\rho(u) = 1/h(u)$ in the last step. It is easy to see that this integral is $d \cdot \vol(\stdpolar{K})$. Neglecting constant factors depending on $d$, we have thus shown that the first integral on the right-hand side of Eq.~\eqref{eq:gruber} is $O(\vol(\stdpolar{K})^{1/d})$. Thus,
\[
    \left(\int_{\bd K} \kappa(x)^{\frac{1}{2}} \, d\sigma(x) \right)^2 
        ~ = ~ O\left( \vol(\stdpolar{K})^{\frac{1}{d}} \cdot \vol(K) \right)
        ~ = ~ O\left( \vol(K)^{1-\frac{1}{d}} \right),
\]
where we have used Lemma~\ref{lem:mahler-bounds}(ii). Substituting this into Eq.~\eqref{eq:gruber} and recalling that the volume diameter of $K$, $\Delta_d(K) = \Theta((\vol(K))^{1/d})$, implies that the number of bounding halfspaces needed to achieve an $\eps$-approximation of $K$ is at most
\[
	c_d \left(\frac{\Delta_d(K)}{\eps} \right)^{\kern-2pt\frac{d-1}{2}}.
\]
as desired.
\end{proof}
%-----------------------------------------------------------------------

Note that the bound in this theorem matches the uniform bound of Theorem~\ref{thm:main}. However, this approach cannot be used to produce a uniform bound, since Eq.~\eqref{eq:gruber} only holds when $\eps \leq \eps_0$, where $\eps_0$ depends on $K$. (See~\cite{AFM12b} for a counterexample showing that this equation could be violated otherwise.)

%=======================================================================
\section*{Conclusions}
%=======================================================================

In this paper, we have shown that it is possible to obtain improved bounds on the complexity of approximating convex bodies with respect to the Hausdorff distance when the skinniness of the body is taken into consideration. Given a convex body $K$ in $\RE^d$, we characterize its skinniness in terms of its volume diameter, $\Delta_d(K)$, defined as the diameter of a Euclidean ball of the same volume as $K$. Our bounds are a substantial improvement over diameter-based bounds by Dudley~\cite{Dud74}, and a marginal improvement over the area-sensitive bounds of Arya, da Fonseca, and Mount~\cite{AFM12b}. We showed that as a function of volume alone, our bound is tight up to constant factors that depend on the dimension.

Although our approach follows earlier work by using covers based on Macbeath regions, we introduced a number of new ideas to deal with skinny bodies. In particular, we introduced the notion of relative fatness for two convex bodies, where one is nested within the other. Here, the fatness of the inner convex body is measured with respect to how tightly it fits within the outer body. We also introduced two intermediate bodies, the arithmetic-mean body and the harmonic-mean body. We showed that the inner body is relatively fat with respect to the harmonic-mean body. We believe that these concepts may be useful in other applications of convex approximation.

Although our bound is asymptotically tight when skinniness is described in terms of intrinsic volumes, it may be far from optimal for any given instance. A major open problem is obtaining an efficient approximation construction that is optimal to within constant factors for any given instance. Another issue is the complexity of the analysis. While the construction given in this paper is relatively simple (involving covering the boundary of a convex body by shrunken Macbeath regions), the analysis is quite technical and involved. An important next step would be to obtain a simpler analysis. There are examples in the literature (see, e.g., \cite{GSW24}) of simpler approximation analyses that are similar in structure to ours.

Our use of covers in both the primal and polar settings raises a number of interesting questions involving the complexity of such covers. Given two nested convex bodies $K_0 \subset K_1$, both of which contain the origin in their interior, we know that $\stdpolar{K}_1 \subset \stdpolar{K}_0$. It is natural to conjecture that the size of a $(K_1, K_0)$-MNet and the size of a $(\stdpolar{K}_0, \stdpolar{K}_1)$-MNet should be related to each other. Faifman showed that the Holmes-Thompson volumes of $K_0$ with respect to $K_1$ and the Holmes-Thompson volume of $\stdpolar{K}_1$ with respect to $\stdpolar{K}_0$ in the Hilbert geometry are equal up to constant factors~\cite{Fai24}. Known similarities between Macbeath regions and Hilbert balls (see, e.g., \cite{VeW16} and~\cite{AbM18}) suggest that a similar relationship may exist for such complementary sets of MNets as well.

%=======================================================================
\section*{Acknowledgments}
%=======================================================================

The authors acknowledge the insights and feedback from Rahul Arya and Guilherme da Fonseca. We also thank the anonymous reviewers for their feedback and suggestions.

%=======================================================================
% Bibliography
%=======================================================================
\pdfbookmark[1]{References}{s:ref}
\bibliographystyle{plainurl}
\bibliography{shortcuts,convex}

\end{document}